\documentclass[aps,pra,onecolumn,superscriptaddress,nofootinbib,10pt]{revtex4-2}

\usepackage{amsmath,amsthm,amsfonts,amssymb,mathtools}
\usepackage{physics}
\usepackage{bbold}
\usepackage{dsfont}
\usepackage{graphicx}
\usepackage{booktabs}
\usepackage{xcolor}
\usepackage{hyperref}  
\hypersetup{
    colorlinks=false,      
    pdfborder={0 0 1},     
    linkbordercolor=red,   
    citebordercolor=green, 
    urlbordercolor=blue    
}
\usepackage{orcidlink}

\newtheorem{theorem}{Theorem}
\newtheorem{lemma}[theorem]{Lemma}
\newtheorem{corollary}[theorem]{Corollary}
\newtheorem{definition}[theorem]{Definition}
\newtheorem{proposition}[theorem]{Proposition}
\theoremstyle{definition}
\newtheorem{example}[theorem]{Example}

\newtheorem{notation}[theorem]{Notation}
\newtheorem{remark}[theorem]{Remark}

\newcommand{\prt}[1]{\left(#1\right)}
\newcommand{\corch}[1]{\left[#1\right]}
\newcommand{\claud}[1]{\left\{#1\right\}}
\newcommand{\tra}[1]{\operatorname{Tr}\left(#1\right)}
\newcommand{\ptr}[2]{\operatorname{Tr}_{#1}\left(#2\right)}
 \newcommand{\wt}[1]{w\left(#1\right)}
\newcommand{\dimwt}[1]{w_{\dim}\left(#1\right)}
\newcommand{\supp}[1]{\operatorname{supp}\left(#1\right)}

\begin{document}
\preprint{Published in Phys. Rev. A \textbf{114}, 022464 (2026). DOI: \href{https://doi.org/10.1103/fyq2-mpmm}{10.1103/fyq2-mpmm}}
\title{Mixed-dimensional quantum MacWilliams identity: Bounds for codes and absolutely maximally entangled states in heterogeneous systems}
\author{David González-Lociga \orcidlink{0009-0009-6991-3960}}
\email{davidgonzalezlociga@gmail.com}
\affiliation{Department of Physics, Universitat de Barcelona, 08028 Barcelona}

\author{Simeon Ball \orcidlink{0000-0003-4845-2084}}
\email{simeon.michael.ball@upc.edu}
\affiliation{Department of Mathematics, Universitat Politècnica de Catalunya, 08028 Barcelona}
\date{\today}
\begin{abstract}
As emerging quantum architectures evolve into heterogeneous networks combining different physical substrates, such as qubits for logic and higher-dimensional qudits for robust communication, the traditional scalar metrics of quantum error correction become insufficient. To address this, we introduce a mathematical framework based on dimension multisets to characterize quantum error-correcting codes (QECC) and absolutely maximally entangled (AME) states in mixed-dimensional Hilbert spaces. By replacing scalar weights with multisets, we accurately capture the exact physical composition of error supports across these diverse systems. Our central result is the mixed-dimensional quantum MacWilliams identity, which establishes the formal algebraic relationship between Shor-Laflamme enumerators and unitary weight enumerators. From this foundation, we deduce the mixed-dimensional shadow identity and derive rigorous, generalized constraints on code parameters, explicitly formulating the mixed-dimensional quantum Hamming, Singleton and Scott bounds, and developing a linear program to systematically evaluate code viability. For the Singleton bound, a tighter bound that has no homogeneous analogue is derived for pure mixed-dimensional codes. Finally, we deploy this enumerator machinery to thoroughly analyze AME states, utilizing shadow inequalities to constrain their existence and introducing a combinatorial grid method for the explicit construction of mixed-dimensional tripartite AME states.

\vspace{0.5cm}
\noindent 
DOI: \href{https://doi.org/10.1103/fyq2-mpmm}{10.1103/fyq2-mpmm}
\end{abstract}
\maketitle
\section{Introduction}

The evolution of quantum information processing from theoretical abstraction to physical realization increasingly demands mixed-dimensional architectures. Scalable quantum networks and fault-tolerant computers will likely rely on mixed-dimensional Hilbert spaces, integrating qubits for rapid logic operations with higher-dimensional qudits for robust memory and secure communication. Yet, the traditional mathematical machinery of quantum error correction, developed in the late 1990s and early 2000s by Gottesman, Calderbank, Rains, Sloane, Shor, and others \cite{Gottesman1997,CalderbankRainsShorSloane98}, is fundamentally rooted in homogeneous systems, assuming a constant local dimension across all subsystems. When applied to a mixed-dimensional Hilbert space, these conventional metrics fail. Specifically, the traditional scalar weight of an error, which assigns equal cost to any non-trivial subsystem alteration regardless of its local dimension, becomes an insufficient measure of error complexity.

The foundation for evaluating quantum code weight distributions was laid by Shor and Laflamme, who introduced two weight enumerators and derived a quantum analogue of the MacWilliams identities from classical coding theory \cite{ShorLaflamme97}. Rains later expanded this by defining two new unitary enumerators connected by a simpler transform, clarifying the constraints on Shor-Laflamme enumerators, determining the quantum MacWilliams identity for local dimension greater than 2 and proving a conjecture regarding shadow inequalities using polynomial invariants \cite{Rains1996, Rains00}. Together, these tools allowed for the formulation of linear programs to systematically rule out certain code parameters. Beyond error correction, this enumerator machinery proved highly effective in the study of absolutely maximally entangled (AME) states and pure codes. Scott utilized these tools to establish fundamental bounds for AME states \cite{Scott2004}, while Huber \textit{et al.} \cite{HuberEltschkaSiewertGuhne2018,HuberGrassl2020} leveraged shadow enumerators to prove the non-existence of large families of pure codes and AME states in homogeneous systems. Huber \textit{et al.} also hinted at the necessity of mixed-dimensional frameworks by providing an example of an AME state formed by a qubit and three qutrits in \cite{HuberEltschkaSiewertGuhne2018}. Recently, Ball and Zhang \cite{BallZhang2026} expanded this heterogeneous theory by formally defining a stabilizer formalism and AME states in mixed-dimensional Hilbert spaces, and proving the quantum Singleton bound for these systems, while also providing explicit examples across a wide variety of configurations. 

Other constructions of mixed-dimensional AME states are known. Goyeneche \textit{et al.} introduced orthogonal arrays to construct $k$-uniform states in homogeneous systems of $n$ parties and mixed orthogonal arrays for heterogeneous ones \cite{GoyenecheAlsinaLatorreRieraZyczkowski2015,GoyenecheBielawskiZyczkowski2016}. These are states whose reduction into $k$ subsystems is maximally mixed and if $k=\lfloor \frac{n}{2}\rfloor$, they are defined as absolutely maximally entangled states. Here we will follow a more general approach, as the work in \cite{HuberEltschkaSiewertGuhne2018} and \cite{BallZhang2026}. We will define a state as absolutely maximally entangled if any reduction onto a subsystem of dimension less or equal than the square root of the dimension of the total Hilbert space is maximally mixed. According to our definition, AME states are not necessarily $\lfloor \frac{n}{2}\rfloor$-uniform, but the converse is true. There also exist constructions of codes over mixed-dimensional Hilbert spaces such as the one in \cite{WangYuFanOh2013} with projection-based and graphical constructions.

In this work, we address the mathematical gap in mixed-dimensional quantum error correction and AME state characterization by introducing a framework based on dimension multisets. By replacing scalar weights with multisets, we accurately capture the exact physical composition of error supports. This fine-grained tracking allows us to generalize the standard enumerator polynomials into a multivariate form that naturally reduces to the constant-dimension case via a simple substitution. This multiset approach has recently proven to be useful in classical coding theory and more specifically for linear codes over finite commutative Frobenius rings \cite{Tang2025}.

Our central result is the proof of the mixed-dimensional quantum MacWilliams identity, which establishes the formal algebraic relationship between the multivariate Shor-Laflamme and unitary enumerators presented in this article. We must remark that a previous quantum MacWilliams identity for mixed-dimensional systems was derived in \cite{Nebe2006}; however, our multiset formulation significantly reduces the polynomial complexity by grouping systems of equal dimension and reduces perfectly to the homogeneous case. Building on this foundation, we derive the mixed-dimensional shadow identity and explicitly formulate generalized constraints on code parameters, including the mixed-dimensional quantum Hamming, Singleton and Scott bounds. In the case of the Singleton bound, a tighter bound that has no direct analogue in the homogeneous case is found for mixed-dimensional pure codes. We also construct a linear program capable of systematically evaluating the viability of heterogeneous code parameters. Finally, we apply this multiset machinery to the study of mixed-dimensional AME states, utilizing generalized shadow inequalities to heavily constrain their existence and we introduce a combinatorial grid methodology for the explicit construction of mixed-dimensional tripartite AME states similar to \cite{ShenChen21}. The complete code for our framework can be found in our GitHub repository \cite{gonzalezlociga_macwilliams_code}.

This article is organized as follows. In Section \ref{sec:mixed-dimensional_QECCS_and_Multisets}, we establish the notation for mixed-dimensional Hilbert spaces, redefine quantum error-correcting codes (QECCs) using dimensional weight and introduce the concept of dimension multisets alongside our novel multivariate weight enumerators. In Section \ref{sec:mixed-dimensional_macwilliams}, we state and prove the mixed-dimensional quantum MacWilliams identity. Section \ref{sec:enumerator_identities} establishes the mixed-dimensional shadow identity and derives linear transformations regarding the different weight enumerators previously introduced as an application of the MacWilliams and shadow identities. In Section \ref{sec:bounds_on_mixed-dimensional_codes}, we use this framework to prove generalized geometric and algebraic limits on codes, specifically the mixed-dimensional quantum Hamming and Singleton bounds, and we cast the enumerator constraints into a linear program. Finally, Section \ref{sec:ame_states} applies the theory to absolutely maximally entangled states, detailing the generalized Scott and shadow bounds, mapping the existence landscape of qubit-qutrit AME states and providing a rigorous combinatorial method for explicitly constructing mixed-dimensional tripartite AME states.
\section{Mixed-dimensional QECCs and multisets}\label{sec:mixed-dimensional_QECCS_and_Multisets}
To formally capture the physics of heterogeneous systems, we must first generalize the standard concepts of error supports and weights.

Let $\mathbb{H}=\bigotimes_{i=1}^{n} \mathbb{C}^{D_i}$ be a Hilbert space. We denote by $[n]$ the set $\{1,\dots,n\}$. Consider $S\subseteq[n]$ a subsystem with Hilbert space $\bigotimes_{i\in S} \mathbb{C}^{D_i}\subseteq \mathbb{H}$, then its dimension is $\dim S=\prod_{i\in S} {D_i}$. Naturally, the total dimension of the system is $\dim\mathbb{H}=\dim[n]=\prod_{i=1}^n D_i$ and for any bipartition, $\dim[n]=\dim{(S\cup S^c)}=\dim{S}\dim{S^c}$,  where $S^c$ is the complement of $S$.

For each subsystem $i \in [n]$, we consider an orthogonal operator basis $\mathcal{E}_i$ on $\mathbb{C}^{D_i}$ that includes the local identity. The orthogonality guarantees that all other local basis elements are traceless. We then construct a \textit{nice error basis} $\mathcal{E}$ acting on the full Hilbert space $\mathbb{H}$ \cite{Ketkar06} by taking tensor products of these local elements. That is,
$$\mathcal{E}=\bigotimes_{i=1}^n \mathcal{E}_i.$$
Consequently, every global error $E \in \mathcal{E}$ naturally decomposes as $E = \bigotimes_{i=1}^n E_i$, where each local non-identity operator $E_i$ is traceless. By construction, $\mathcal{E}$ contains the global identity matrix, is closed under multiplication up to a phase, and its elements satisfy the global orthogonality relation 
\begin{equation}\label{eq:orthogonality}
\tra{E_\alpha^\dagger E_\beta}=\delta_{\alpha,\beta}\dim[n]
\end{equation}
for $E_\alpha,E_\beta\in \mathcal{E}$. The generalized Pauli (Weyl-Heisenberg) operators serve as a standard example \cite{BallZhang2026}. Accordingly, $M$ an operator on $\mathbb{H}$ can be decomposed as 
\begin{equation}\label{eq:bloch} 
    M=(\dim[n])^{-1}\sum_{E\in \mathcal{E}}\tra{E^\dagger M}E,
\end{equation}
in the so-called \textit{Bloch representation}.

Let $E\in \mathcal{E}$ be a local operator on $\mathbb{H}$. The \textit{support} of $E$ is defined as the minimum-size subsystem $S\subseteq [n]$ such that $E=E_S\otimes \mathds{1}_{S^c}$. This will be denoted as $S=\supp{E}$. It is physically intuitive to think of the support as the parties in which the error acts non-trivially.

In the homogeneous literature, when all the dimensions are equal, we usually keep track of the \textit{weight} of an error $E$, which is defined as $\wt{E}=|\supp{E}|$, the cardinality of its support. However, in our mixed-dimensional context we not only need to keep track of how many systems the error acts upon but on which of them it does. Following this line of thought, the dimensional weight of an error was introduced in \cite{BallZhang2026}. For an error $E\in \mathcal{E}$ with support $S$, the \textit{dimensional weight} is defined as $\dimwt{E}=\dim S$. Notice that for constant local dimension $q$, the dimensional weight of an error $E$ with support $S$ reduces to $\dimwt{E}=q^{|S|}=q^{\wt{E}}$.

In light of the above, now we can properly define a quantum error-correcting code in our mixed-dimensional context.

A \textit{quantum error-correcting code} (QECC) with parameters $(((D_1,\dots, D_n),K,D))$ is a subspace $\mathcal{Q}\subseteq \mathbb{H}$ with dimension $K$. For any orthonormal basis $\{\ket{i_\mathcal{Q}}\}$ of $\mathcal{Q}$, the code must satisfy the Knill-Laflamme conditions \cite{KnillLaflamme97}
\begin{equation}\label{eq:knill-laflamme_conditions}
\bra{i_\mathcal{Q}}E\ket{j_\mathcal{Q}}=\delta_{ij}c_E,
\end{equation}
for all errors $E\in \mathcal{E}$ such that $\dimwt{E}<D$. We call $D$ the \textit{dimensional minimum distance}, which serves as the mixed-dimensional analogue to the minimum distance in standard literature. By replacing the conventional scalar weight with the dimensional weight, the typical threshold condition is naturally generalized. 

In Eq. \eqref{eq:knill-laflamme_conditions}, the coefficient $c_E$ depends solely on the error $E$ and not on the chosen basis vectors. If $c_E=\frac{\tra{E}}{\dim[n]}$, the QECC is called \textit{pure}. Since we have chosen $\mathcal{E}$ to be a nice error basis, the non-trivial errors are traceless; therefore, $c_E=0$ for all $E\neq \mathds{1}$ in a pure code. 

Errors satisfying $\dimwt{E}<D$ are formally classified as \textit{detectable errors}. In contrast, \textit{correctable errors} are those belonging to a specific subset of errors for which the code can successfully identify and reverse their action. For a code to correct a given set of errors, the Knill-Laflamme conditions must hold for the product of any two errors within that set. That is, for any two correctable errors $E_a$ and $E_b$, the combined operator $E_a^\dagger E_b$ must behave as a detectable error, satisfying $\dimwt{E_a^\dagger E_b} < D$. Because the dimensional weight of a combined error acting on disjoint supports is bounded by the product of their individual dimensional weights ($\dimwt{E_a^\dagger E_b} \le \dimwt{E_a} \dimwt{E_b}$), the code is guaranteed to correct all errors up to a dimensional weight threshold $T$ provided that $T^2 < D$.

While the Knill-Laflamme conditions dictate whether a code can correct a specific set of errors, fully characterizing a code's structure and deriving fundamental bounds requires analyzing its weight distribution. In homogeneous codes, the standard Shor-Laflamme enumerators group errors by a single integer: their scalar weight. However, in a mixed-dimensional setting, this scalar approach is fundamentally inadequate. For example, an error acting on a single ququart and an error acting on two qubits both possess a dimensional weight of 4, yet they have entirely different combinatorial footprints. We require another layer of fine-graining, the dimensional weight is not enough either.

To construct a valid enumerator framework for mixed dimensions, we must track not just the total size of an error's support, but its exact composition. We achieve this by introducing dimension multisets.

\begin{definition}Let $\mathbb{H}=\bigotimes_{i=1}^{n} \mathbb{C}^{D_i}$ be a mixed-dimensional Hilbert space, where $[n]=\{1,\dots,n\}$ is the set of all systems and $D_i$ is the local dimension of the $i$-th subsystem. For any subsystem $S\subseteq[n]$, we define the dimension multiset $\mathcal{D}(S)$ as the multiset of the local dimensions of the subsystems contained in $S$. That is
\begin{equation*}
\mathcal{D}(S)=\{D_i \,|\, i\in S\}.
\end{equation*}
For an error operator $E\in\mathcal{E}$, its dimension multiset is simply the dimension multiset of its support, i.e., $\mathcal{D}(\supp{E})$. 
\end{definition} 
Notice that the dimension multiset strictly generalizes both previous weight metrics. Given the dimension multiset of an error $E$, we can recover its standard scalar weight as $\wt{E}=|\mathcal{D}(\supp{E})|$ and its dimensional weight as $\dimwt{E}=\prod_{d \in \mathcal{D}(\supp{E})}d$. Similarly, for any subsystem $S$, its total physical dimension is straightforwardly computed as $\dim S= \prod_{d\in \mathcal{D}(S)}d.$

Before proceeding to further definitions, we establish the following notation.
\begin{notation} We fix:
    \begin{itemize}
        \item Boldface letters (e.g., $\mathbf{u}, \mathbf{v}, \mathbf{w}$) will be used exclusively to denote dimension multisets.
        \item Let $\mathbf{N}=\mathcal{D}([n])$ denote the dimension multiset of the entire space.
        \item Let $\mathbb{D}=\{d_1,d_2,\dots, d_{|\mathbb{D}|}\}$ be the set of distinct local dimensions present in $\mathbb{H}$.
        \item Let $m_{d}(\mathbf{u})$ denote the multiplicity of $d\in \mathbb{D}$ in a given dimension multiset $\mathbf{u}$.
        \item We say $\mathbf{u}$ is a \textit{sub-multiset} of $\mathbf{N}$, denoted as $\mathbf{u}\subseteq \mathbf{N}$, if the multiplicity of every element in $\mathbf{u}$ is less than or equal to its multiplicity in $\mathbf{N}$, meaning $m_d(\mathbf{u}) \le m_d(\mathbf{N})$ for all $d \in \mathbb{D}$.
        \item A multiset $\mathbf{v}$ has dimension $\dim \mathbf{v}=\prod_{d\in \mathbf{v}}d=\prod_{d\in\mathbb{D}}d^{m_d(\mathbf{v})}$.
    \end{itemize}
\end{notation}
From the standard enumerator machinery the only ones that admit the same formulation are the unitary calligraphic ones \cite{Rains1996}.
\begin{definition}\label{def:calligraphic_coeff}
    Given any two Hermitian operators $M$ and $N$ acting on $\mathbb{H}=\bigotimes_{i=1}^n \mathbb{C}^{D_i}$, the calligraphic coefficients for $S\subseteq [n]$ are defined as
    \begin{align*}
        \mathcal{A}'_S(M,N)&=\ptr{S}{\ptr{S^c}{M}\ptr{S^c}{N}},\\
        \mathcal{B}'_S(M,N)&=\ptr{S^c}{\ptr{S}{M}\ptr{S}{N}}.
    \end{align*}
\end{definition}
The others must be redefined, as follows.
\begin{definition}\label{def:ShorLaflamme_coeff}
     Given any two Hermitian operators $M$ and $N$ acting on $\mathbb{H}=\bigotimes_{i=1}^n \mathbb{C}^{D_i}$, the primal and dual Shor-Laflamme multiset coefficients for $\mathbf{v}\subseteq \mathbf{N}$ are defined as 
\begin{align*}
    A_{\mathbf{v}}(M,N)&=\sum_{\mathcal{D}(\supp{E})=\mathbf{v}} \tra{EM}\tra{E^\dagger N},\\
    B_{\mathbf{v}}(M,N)&=\sum_{\mathcal{D}(\supp{E})=\mathbf{v}} \tra{EME^\dagger N}.
\end{align*}
\end{definition}
\begin{definition} \label{def:unitary_coeff}
     The unitary multiset coefficients for $\mathbf{w}\subseteq \mathbf{N}$ are defined as
\begin{align*} 
    A'_{\mathbf{w}}(M,N)&=\sum_{\mathcal{D}(S)=\mathbf{w}}\ptr{S}{\ptr{S^c}{M}\ptr{S^c}{N}},\\
    B'_{\mathbf{w}}(M,N)&=\sum_{\mathcal{D}(S)=\mathbf{w}}\ptr{S^c}{\ptr{S}{M}\ptr{S}{N}}.
\end{align*}
\end{definition}

The new coefficients allow us to redefine our new weight enumerators.
\begin{definition}\label{def:ShorLaflamme_unitary_weight_enumerators}
    Let $\mathbb{H}=\bigotimes_{i=1}^n\mathbb{C}^{D_i}$ be a mixed-dimensional Hilbert space defined over the index set $[n]$. We associate a pair of variables $(x_d,y_d)$ to each distinct local dimension $d\in \mathbb{D}$ and group them into $\vec{x}=\{x_d\}_{d\in \mathbb{D}}$ and $\vec{y}=\{y_d\}_{d\in \mathbb{D}}$. For any two Hermitian operators $M$ and $N$ acting on $\mathbb{H}$, the multivariate Shor-Laflamme weight enumerators are
    \begin{align*}
        A_{MN}(\vec{x},\vec{y})&=\sum_{\mathbf{v}\subseteq \mathbf{N}}A_{\mathbf{v}}(M,N)\prod_{d\in \mathbb{D}} x_d^{m_d(\mathbf{N})-m_d(\mathbf{v})}y_d^{m_d(\mathbf{v})},\\
        B_{MN}(\vec{x},\vec{y})&=\sum_{\mathbf{v}\subseteq \mathbf{N}}B_{\mathbf{v}}(M,N)\prod_{d\in \mathbb{D}} x_d^{m_d(\mathbf{N})-m_d(\mathbf{v})}y_d^{m_d(\mathbf{v})}.
    \end{align*}
    Analogously, the multivariate unitary weight enumerators are defined as 
    \begin{align*}
        A'_{MN}(\vec{x},\vec{y})&=\sum_{\mathbf{w}\subseteq \mathbf{N}}A'_{\mathbf{w}}(M,N)\prod_{d\in \mathbb{D}} x_d^{m_d(\mathbf{N})-m_d(\mathbf{w})}y_d^{m_d(\mathbf{w})},\\
        B'_{MN}(\vec{x},\vec{y})&=\sum_{\mathbf{w}\subseteq \mathbf{N}}B'_{\mathbf{w}}(M,N)\prod_{d\in \mathbb{D}} x_d^{m_d(\mathbf{N})-m_d(\mathbf{w})}y_d^{m_d(\mathbf{w})}.
    \end{align*}
\end{definition}
This multivariate formulation is a strict generalization of the standard weight enumerator polynomials. In a homogeneous quantum code where all qudits share a constant dimension $q$, the set of distinct dimensions reduces to a single element, $\mathbb{D} = \{q\}$. Consequently, the set of formal variables collapses to a single pair, $(x, y)$. The dimension multisets are then uniquely characterized entirely by their cardinalities: the total system size $n = |\mathbf{N}|$, the standard error weight $j = |\mathbf{v}|$ and the subsystem size $m = |\mathbf{w}|$. The product over $d \in \mathbb{D}$ naturally disappears and by grouping all error multisets $\mathbf{v}$ of the same weight $j$, the sum over multisets simplifies to a sum over integers. This perfectly recovers the classical Shor-Laflamme weight enumerator, $A_{MN}(x,y) = \sum_{j=0}^n A_j(M,N) x^{n-j} y^j$ and the unitary weight enumerator, $A'_{MN}=\sum_{j=0}^n A'_j(M,N) x^{n-j} y^j$.

To clarify the construction of these multivariate polynomials, consider a minimal mixed-dimensional system.

\begin{example}
    Let the total system consist of one qubit and three qutrits, namely $\mathbb{H}=\mathbb{C}^2\otimes \prt{\mathbb{C}^3}^{\otimes 3}$ giving a total dimension multiset $\mathbf{N}=\{2,3,3,3\}$. The set of distinct dimensions is $\mathbb{D}=\{2,3\}$, which associates the formal variables $x_2, y_2$ for the qubit and $x_3, y_3$ for the qutrits. 
    
    By summing over all possible error sub-multisets $\mathbf{v} \subseteq \mathbf{N}$ and tracking their multiplicities $m_d(\mathbf{v})$, the multivariate Shor-Laflamme weight enumerators expand explicitly as
    \begin{align*}
        A(\vec{x},\vec{y}) &= A_{\emptyset} x_2 x_3^3 + A_{\{2\}} x_3^3y_2 + A_{\{3\}} x_2 x_3^2 y_3 + A_{\{2, 3\}}  x_3^2y_2 y_3 \\
        &\quad + A_{\{3, 3\}} x_2 x_3 y_3^2 + A_{\{2, 3, 3\}}  x_3 y_2y_3^2 + A_{\{3, 3, 3\}} x_2 y_3^3 + A_{\mathbf{N}} y_2 y_3^3,\\[2ex]
        B(\vec{x},\vec{y}) &= B_{\emptyset} x_2 x_3^3 + B_{\{2\}} x_3^3y_2 + B_{\{3\}} x_2 x_3^2 y_3 + B_{\{2, 3\}}  x_3^2y_2 y_3 \\
        &\quad + B_{\{3, 3\}} x_2 x_3 y_3^2 + B_{\{2, 3, 3\}}  x_3 y_2y_3^2 + B_{\{3, 3, 3\}} x_2 y_3^3 + B_{\mathbf{N}} y_2 y_3^3.\\[2ex]
    \end{align*}
\end{example}
When applying this enumerator machinery to a QECC $\mathcal{Q}$ of rank $K$, the matrices $M$ and $N$ are taken as the projector onto the code, that is, 
\begin{equation*}
M=N=\Pi_\mathcal{Q}=\sum_{i=1}^K\ketbra{i_\mathcal{Q}}{i_\mathcal{Q}},
\end{equation*}
for any orthonormal basis $\{\ket{i_\mathcal{Q}}\}$ of $\mathcal{Q}$. Therefore, we can define
\begin{equation*}
A_\mathbf{v}(\Pi_\mathcal{Q},\Pi_\mathcal{Q})\coloneqq A_\mathbf{v}(\Pi_\mathcal{Q}).
\end{equation*}
When the code the coefficients refer to is clear, we will drop the $\Pi_\mathcal{Q}$ notation. 

This particular coefficients enjoy the following properties. We follow the proof in \cite{HuberEltschkaSiewertGuhne2018} but for our multiset approach.
\begin{proposition}\label{prop:coeff_properties}
    Let $\mathcal{Q}\subset\mathbb{H}$ be QECC in a mixed-dimensional Hilbert space. Then, the Shor-Laflamme multiset coefficients satisfy
    \begin{align*}
        KB_\mathbf{v}(\Pi_\mathcal{Q})&\ge A_\mathbf{v}(\Pi_\mathcal{Q}) \ge 0,\\
        KB_\emptyset(\Pi_\mathcal{Q})&=A_\emptyset(\Pi_\mathcal{Q}),
    \end{align*}
    for all multisets $\mathbf{v}\subseteq\mathbf{N}$.
    Furthermore, any projector $\Pi_\mathcal{Q}$ of rank $K$ is a QECC of dimensional minimum distance $D$ if and only if
    \begin{equation*}
        KB_\mathbf{v}(\Pi_\mathcal{Q})= A_\mathbf{v}(\Pi_\mathcal{Q})
    \end{equation*}
    for all multisets $\mathbf{v}\subseteq\mathbf{N}$ such that $\dim \mathbf{v}<D$.
\end{proposition}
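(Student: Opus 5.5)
The plan is to work error by error, writing both coefficients as sums over the orthonormal basis $\{\ket{i_\mathcal{Q}}\}$, and to compare them with the Cauchy--Schwarz inequality. For a fixed $E\in\mathcal{E}$ with $\mathcal{D}(\supp{E})=\mathbf{v}$, insert $\Pi_\mathcal{Q}=\sum_i\ketbra{i_\mathcal{Q}}{i_\mathcal{Q}}$ into each term. Write $M_{ij}(E)=\bra{i_\mathcal{Q}}E\ket{j_\mathcal{Q}}$. Then
\begin{equation*}
\tra{E\Pi_\mathcal{Q}}\tra{E^\dagger\Pi_\mathcal{Q}}=\Big|\sum_{i}M_{ii}(E)\Big|^2,\qquad \tra{E\Pi_\mathcal{Q}E^\dagger\Pi_\mathcal{Q}}=\sum_{i,j}|M_{ij}(E)|^2 .
\end{equation*}
Both quantities are nonnegative, which gives $A_\mathbf{v}\ge 0$ after summing over all $E$ with $\mathcal{D}(\supp{E})=\mathbf{v}$. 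Cauchy--Schwarz gives $|\sum_i M_{ii}|^2\le K\sum_i|M_{ii}|^2\le K\sum_{i,j}|M_{ij}|^2$. Summing over the same set of errors yields $KB_\mathbf{v}\ge A_\mathbf{v}$.

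For $\mathbf{v}=\emptyset$, the only error is $E=\mathds{1}$, since the nice error basis contains exactly one element supported on the empty set. Then $A_\emptyset=(\tra{\Pi_\mathcal{Q}})^2=K^2$ and $B_\emptyset=\tra{\Pi_\mathcal{Q}}=K$, so $KB_\emptyset=A_\emptyset$.

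For the characterization, equality $KB_\mathbf{v}=A_\mathbf{v}$ holds if and only if equality holds term by term for every $E$ with $\mathcal{D}(\supp{E})=\mathbf{v}$, because each term satisfies $K(\cdot)\ge(\cdot)$. Equality in the chain above holds if and only if two conditions are met. First, the off-diagonal entries $M_{ij}(E)$ with $i\ne j$ must vanish. Second, the diagonal entries $M_{ii}(E)$ must all be equal, which is the equality case of Cauchy--Schwarz. This is exactly the Knill--Laflamme condition \eqref{eq:knill-laflamme_conditions} for $E$ in this basis.

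It remains to connect the condition on multisets with the condition on dimensional weights. The dimensional weight of $E$ equals $\dim\mathcal{D}(\supp{E})$. Hence the errors with $\dimwt{E}<D$ are precisely the union, over multisets $\mathbf{v}$ with $\dim\mathbf{v}<D$, of the errors with $\mathcal{D}(\supp{E})=\mathbf{v}$. It follows that equality for all such $\mathbf{v}$ is equivalent to the Knill--Laflamme conditions for all detectable errors in $\mathcal{E}$, which is the definition of a code of dimensional minimum distance $D$.

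One subtlety needs a remark. The Knill--Laflamme conditions are stated for every orthonormal basis, whereas the argument uses a single fixed basis. Both $A_\mathbf{v}$ and $B_\mathbf{v}$ depend only on $\Pi_\mathcal{Q}$, not on the basis chosen to write it. Moreover, the condition $P E P=c_E P$, with $P=\Pi_\mathcal{Q}$, is basis independent and equivalent to the conditions holding in any one basis. So checking a single basis suffices. I expect this, together with justifying the term-by-term equality argument, to be the only point requiring care; the rest is routine.
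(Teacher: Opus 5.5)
Your proposal is correct and follows essentially the same route as the paper: expand both coefficients in an orthonormal basis of the code, apply the Cauchy--Schwarz chain $|\sum_i M_{ii}|^2\le K\sum_i|M_{ii}|^2\le K\sum_{i,j}|M_{ij}|^2$, note that only the identity has empty support, and use saturation of both inequalities for the converse. Your explicit term-by-term justification of the equality case, and your observation that the Knill--Laflamme condition is equivalent to the basis-independent condition $\Pi_\mathcal{Q}E\Pi_\mathcal{Q}=c_E\Pi_\mathcal{Q}$, make precise points that the paper's proof leaves implicit.
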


\begin{proof}
    Considering $M=N=\Pi_\mathcal{Q}=\sum_{i=1}^K\ketbra{i_\mathcal{Q}}{i_\mathcal{Q}}$ in Definition \ref{def:ShorLaflamme_coeff} we obtain
    \begin{align*}
        A_{\mathbf{v}}(\Pi_\mathcal{Q})&=\sum_{\mathcal{D}(\supp{E})=\mathbf{v}} \tra{E\Pi_\mathcal{Q}}\tra{E^\dagger \Pi_\mathcal{Q}}=\sum_{\mathcal{D}(\supp{E})=\mathbf{v}}\left|\sum_{i=1}^K\bra{i_\mathcal{Q}}E\ket{i_\mathcal{Q}}\right|^2,\\
        B_{\mathbf{v}}(\Pi_\mathcal{Q})&=\sum_{\mathcal{D}(\supp{E})=\mathbf{v}} \tra{E\Pi_\mathcal{Q}E^\dagger \Pi_\mathcal{Q}}=\sum_{\mathcal{D}(\supp{E})=\mathbf{v}} \sum_{i,j=1}^K \left|\bra{i_\mathcal{Q}}E\ket{j_\mathcal{Q}}\right|^2.
    \end{align*}

    Since $A_{\mathbf{v}}$ is a sum of absolute squares, we immediately have $A_{\mathbf{v}}(\Pi_\mathcal{Q}) \ge 0$. By applying the Cauchy-Schwarz inequality, we can relate the terms inside the sums for any error $E$
    \begin{equation*}
        \left|\sum_{i=1}^K\bra{i_\mathcal{Q}}E\ket{i_\mathcal{Q}}\right|^2 \le K \sum_{i=1}^K \left|\bra{i_\mathcal{Q}}E\ket{i_\mathcal{Q}}\right|^2 \le K \sum_{i,j=1}^K \left|\bra{i_\mathcal{Q}}E\ket{j_\mathcal{Q}}\right|^2.
    \end{equation*}
    Summing over all errors with $\mathcal{D}(\supp{E})=\mathbf{v}$ yields the first general inequality
    \begin{equation*}
        A_{\mathbf{v}}(\Pi_\mathcal{Q}) \le K B_{\mathbf{v}}(\Pi_\mathcal{Q}).
    \end{equation*}

    For the empty multiset $\mathbf{v}=\emptyset$, the only error is the identity operator $E=\mathds{1}$. Thus,
    \begin{align*}
        A_{\emptyset}(\Pi_\mathcal{Q}) &= |\tra{\mathds{1}\Pi_\mathcal{Q}}|^2 = K^2, \\
        B_{\emptyset}(\Pi_\mathcal{Q}) &= \tra{\mathds{1}\Pi_\mathcal{Q}\mathds{1}^\dagger \Pi_\mathcal{Q}} = \tra{\Pi_\mathcal{Q}} = K.
    \end{align*}
    Multiplying $B_{\emptyset}$ by $K$ directly yields $KB_{\emptyset}(\Pi_\mathcal{Q}) = A_{\emptyset}(\Pi_\mathcal{Q})$.

    To prove the final equivalence, assume first that $\Pi_\mathcal{Q}$ is a QECC of dimensional minimum distance $D$. Consider the multisets $\mathbf{v}$ such that $\dim \mathbf{v}<D$. From the Knill-Laflamme conditions in Eq. \eqref{eq:knill-laflamme_conditions} we get $\bra{i_\mathcal{Q}}E\ket{j_\mathcal{Q}}=\delta_{ij}c_E$, since the errors satisfy $\mathcal{D}(\supp{E})=\mathbf{v}$ and therefore $\dimwt{E}=\dim\mathbf{v}<D$. This means both coefficient equations reduce to 
    \begin{align*}
        A_{\mathbf{v}}(\Pi_\mathcal{Q})=\sum_{\mathcal{D}(\supp{E})=\mathbf{v}}K^2 c_Ec_E^*, \quad\text{ and }\quad
        B_{\mathbf{v}}(\Pi_\mathcal{Q})=\sum_{\mathcal{D}(\supp{E})=\mathbf{v}} Kc_Ec_E^*,
    \end{align*}
    which leads to $KB_\mathbf{v}(\Pi_\mathcal{Q})= A_\mathbf{v}(\Pi_\mathcal{Q})$.

    Conversely, assume $KB_\mathbf{v}(\Pi_\mathcal{Q})= A_\mathbf{v}(\Pi_\mathcal{Q})$ holds for all $\mathbf{v}$ with $\dim \mathbf{v} < D$. This strict equality implies that the Cauchy-Schwarz inequalities used earlier must be perfectly saturated for all errors $E$ with $\mathcal{D}(\supp{E})=\mathbf{v}$. 
    Saturation of the first inequality implies that the diagonal elements $\bra{i_\mathcal{Q}}E\ket{i_\mathcal{Q}}$ are constant and independent of $i$, let us call this constant $c_E$. 
    Saturation of the second inequality implies that all off-diagonal elements must vanish, meaning $\bra{i_\mathcal{Q}}E\ket{j_\mathcal{Q}} = 0$ for $i \neq j$. 
    Combining these two facts gives $\bra{i_\mathcal{Q}}E\ket{j_\mathcal{Q}} = \delta_{ij} c_E$, which are exactly the Knill-Laflamme conditions. Therefore, $\Pi_\mathcal{Q}$ is a valid QECC with dimensional minimum distance $D$.
\end{proof}
\section{The mixed-dimensional quantum MacWilliams identity}\label{sec:mixed-dimensional_macwilliams}
To prove the main theorem, we first establish two technical lemmas that evaluate the operator $\ptr{S^c}{M}\otimes \mathds{1}_{S^c}$ for $M$ an operator in $\mathbb{H}$ and $S\subseteq [n]$ in two different ways: via the Bloch representation and via a CPTP map. Following this, we introduce a combinatorial counting lemma, which together provide the necessary machinery to derive the mixed-dimensional MacWilliams identity.
\begin{lemma}\label{lemma:partial_trace_bloch}
    Given an operator $M$, its reduction onto a given subsystem $S\subseteq [n]$ tensored by the identity on the complement $S^c$ reads
    \begin{equation*}
    \ptr{S^c}{M}\otimes \mathds{1}_{S^c}=(\dim S)^{-1} \sum_{\supp{E}\subseteq S}\tra{E^\dagger M}E.
    \end{equation*}
\end{lemma}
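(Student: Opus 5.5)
We expand $M$ in the Bloch representation \eqref{eq:bloch}, $M=(\dim[n])^{-1}\sum_{E\in\mathcal{E}}\tra{E^\dagger M}E$, and use linearity of $\ptr{S^c}{\cdot}\otimes\mathds{1}_{S^c}$. The task then reduces to evaluating this operation on a single basis element $E=\bigotimes_{i=1}^n E_i$. Because the partial trace factorizes over tensor products, we get
\begin{equation*}
\ptr{S^c}{E}\otimes\mathds{1}_{S^c}=\Big(\prod_{i\in S^c}\tra{E_i}\Big)\Big(\bigotimes_{i\in S}E_i\Big)\otimes\mathds{1}_{S^c}.
\end{equation*}

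The key step uses the structure of the nice error basis. Each local operator $E_i$ is either the local identity, with $\tra{\mathds{1}_{D_i}}=D_i$, or a traceless non-identity element. Hence the product $\prod_{i\in S^c}\tra{E_i}$ vanishes unless $E_i=\mathds{1}$ for every $i\in S^c$, that is, unless $\supp{E}\subseteq S$. When $\supp{E}\subseteq S$, the product equals $\prod_{i\in S^c}D_i=\dim S^c$, and the operator $\big(\bigotimes_{i\in S}E_i\big)\otimes\mathds{1}_{S^c}$ is just $E$ itself. Therefore
\begin{equation*}
\ptr{S^c}{E}\otimes\mathds{1}_{S^c}=\begin{cases}\dim S^c\; E, & \supp{E}\subseteq S,\\ 0, & \text{otherwise.}\end{cases}
\end{equation*}

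Substituting this back gives
\begin{equation*}
\ptr{S^c}{M}\otimes\mathds{1}_{S^c}=\frac{\dim S^c}{\dim[n]}\sum_{\supp{E}\subseteq S}\tra{E^\dagger M}E.
\end{equation*}
Using the multiplicativity $\dim[n]=\dim S\,\dim S^c$ noted earlier, the prefactor becomes $(\dim S)^{-1}$, which is the claimed identity.

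There is no real obstacle in this argument. The only point needing care is the precise meaning of "support". We must make sure that $\supp{E}\subseteq S$ is equivalent to $E_i=\mathds{1}$ for all $i\in S^c$. This holds because every basis element is a tensor product of local basis elements, so minimality of the support means exactly that the non-identity factors are the ones located in $\supp{E}$.
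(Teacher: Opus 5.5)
Your proposal is correct and follows essentially the same route as the paper's proof. Both expand $M$ in the Bloch representation, use tracelessness of the non-identity local factors to discard every term with $\supp{E}\not\subseteq S$, pick up the factor $\dim S^c$ from the remaining terms, and cancel it against $\dim[n]=\dim S\,\dim S^c$. Your explicit check that $\supp{E}\subseteq S$ is equivalent to $E_i=\mathds{1}$ for all $i\in S^c$ is a useful clarification that the paper leaves implicit.
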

\begin{proof} 
    By expressing $M$ in the Bloch representation following Eq. \eqref{eq:bloch} we obtain
    \begin{equation*}
    \ptr{S^c}{M}\otimes \mathds{1}_{S^c}=\ptr{S^c}{(\dim[n])^{-1}\sum_{E\in \mathcal{E}}\tra{E^\dagger M}E}\otimes \mathds{1}_{S^c}=(\dim[n])^{-1}\sum_{E\in\mathcal{E}}\tra{E^\dagger M} \ptr{S^c}{E}\otimes \mathds{1}_{S^c},
    \end{equation*}
    by linearity of the partial trace. Now we can partition our sum into two terms as
    \begin{equation}\label{eq:expanded_lemma_bloch} 
    \ptr{S^c}{M}\otimes \mathds{1}_{S^c}=(\dim[n])^{-1}\prt{\sum_{\supp{E}\subseteq S}\tra{E^\dagger M} \ptr{S^c}{E}+\sum_{\supp{E}\not\subseteq S}\tra{E^\dagger M} \ptr{S^c}{E}}\otimes \mathds{1}_{S^c}.
    \end{equation}

    Consider the second sum in Eq. \eqref{eq:expanded_lemma_bloch}. If $\supp{E}\not\subseteq S$, then $E$ contains at least one non-identity component on some subsystem of $S^c$. Due to the traceless property of the basis, $\ptr{S^c}{E}=0$.

    Now, if $\supp{E}\subseteq S$, $E$ can be expressed as the product of two operators acting on $S$ and $S^c$, namely, $E=E_S\otimes \mathds{1}_{S^c}$. Note $E$  acts trivially in the complement of $S$, hence $\ptr{S^c}{E}=\ptr{S^c}{E_S\otimes \mathds{1}_{S^c}}=E_S \tra{\mathds{1}_{S^c}}=E_S\dim{S^c}$. 

    \noindent Substituting in Eq. \eqref{eq:expanded_lemma_bloch} yields
    \begin{align*} 
    \ptr{S^c}{M}\otimes \mathds{1}_{S^c}&=(\dim[n])^{-1}\sum_{\supp{E}\subseteq S}\tra{E^\dagger M} \ptr{S^c}{E}\otimes \mathds{1}_{S^c}=\dim{S^c}(\dim[n])^{-1}\sum_{\supp{E}\subseteq S}\tra{E^\dagger M} E_S\otimes \mathds{1}_{S^c}\\&=\dim{S^c}(\dim[n])^{-1}\sum_{\supp{E}\subseteq S}\tra{E^\dagger M} E=(\dim{S})^{-1}\sum_{\supp{E}\subseteq S}\tra{E^\dagger M} E,
    \end{align*}
and the proof is finished.
\end{proof}
\begin{lemma}\label{lemma:partial_trace_CPTP}
Given an operator $M$, its partial trace over subsystem $S^c$ tensored by the identity on $S^c$ can be expressed as
\begin{equation*}
\ptr{S^c}{M}\otimes \mathds{1}_{S^c}=(\dim S^c)^{-1}\sum_{\supp{E}\subseteq S^c}EME^\dagger.
\end{equation*}
\end{lemma}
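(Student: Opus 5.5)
Expand $M$ in the Bloch representation, Eq. \eqref{eq:bloch}, and push the twirl over errors supported on $S^c$ inside the sum. By linearity,
\begin{equation*}
\sum_{\supp{F}\subseteq S^c}FMF^\dagger=(\dim[n])^{-1}\sum_{E\in\mathcal{E}}\tra{E^\dagger M}\sum_{\supp{F}\subseteq S^c}FEF^\dagger .
\end{equation*}
Everything then reduces to the local twirl identity
\begin{equation*}
\sum_{\supp{F}\subseteq S^c}FEF^\dagger=\begin{cases}(\dim S^c)^2\,E, & \supp{E}\subseteq S,\\ 0, & \text{otherwise}.\end{cases}
\end{equation*}
Substituting this gives $(\dim S^c)^{-1}\sum_{\supp F\subseteq S^c}FMF^\dagger=\dim S^c\,(\dim[n])^{-1}\sum_{\supp E\subseteq S}\tra{E^\dagger M}E=(\dim S)^{-1}\sum_{\supp E\subseteq S}\tra{E^\dagger M}E$. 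By Lemma \ref{lemma:partial_trace_bloch}, this equals $\ptr{S^c}{M}\otimes\mathds{1}_{S^c}$.

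To prove the twirl identity, factor everything over sites. Write $E=\bigotimes_i E_i$. Errors with $\supp F\subseteq S^c$ are exactly $F=\mathds{1}_S\otimes\bigotimes_{i\in S^c}F_i$ with $F_i$ ranging over $\mathcal{E}_i$. The sum therefore factorizes as
\begin{equation*}
\bigotimes_{i\in S}E_i\otimes\bigotimes_{i\in S^c}\Big(\sum_{F_i\in\mathcal{E}_i}F_iE_iF_i^\dagger\Big).
\end{equation*}
It now suffices to show the single-site fact $\sum_{F\in\mathcal{E}_i}FXF^\dagger=D_i\,\tra{X}\mathds{1}$ for every operator $X$ on $\mathbb{C}^{D_i}$. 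Given this, a site in $S^c$ with $E_i=\mathds{1}$ contributes $D_i^2\mathds{1}$, and a site with $E_i$ traceless contributes $0$. Taking the product over $S^c$ yields $(\dim S^c)^2E$ or $0$, as claimed.

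The single-site identity is the main obstacle, because orthogonality of $\mathcal{E}_i$ alone does not immediately give a twirl formula. I would argue it from completeness. The $D_i^2$ operators $D_i^{-1/2}F$, $F\in\mathcal{E}_i$, form an orthonormal basis of $M_{D_i}(\mathbb{C})$ under the Hilbert--Schmidt inner product. For any orthonormal operator basis $\{G_k\}$, the channel $X\mapsto\sum_k G_kXG_k^\dagger$ is independent of the choice of basis, since any two bases are related by a unitary $U$ and $\sum_k U_{kl}\overline{U_{km}}=\delta_{lm}$. Evaluating it on the matrix-unit basis $\{\ketbra{a}{b}\}$ gives $\sum_{a,b}\ketbra{a}{b}X\ketbra{b}{a}=\tra{X}\mathds{1}$. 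Hence $\sum_{F}FXF^\dagger=D_i\tra{X}\mathds{1}$.

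Alternatively, if the basis is the Weyl--Heisenberg one used in \cite{BallZhang2026}, I would use closure under multiplication up to phase. In that case $FE_iF^\dagger=\omega(F,E_i)E_i$, and the character sum $\sum_F\omega(F,E_i)$ equals $D_i^2$ if $E_i=\mathds{1}$ and $0$ otherwise. I would mention this as a consistency check, but rely on the basis-independent argument, since the nice error basis in this paper is only assumed orthogonal with traceless non-identity elements.
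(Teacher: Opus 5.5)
Your argument is correct, but it is organized differently from the paper's. The paper neither expands $M$ in the Bloch representation nor uses Lemma~\ref{lemma:partial_trace_bloch}. It notes that the errors supported on $S^c$ form an orthogonal operator basis of the whole block $\bigotimes_{i\in S^c}\mathbb{C}^{D_i}$ (normalized to $\dim S^c$). It then invokes basis-independence of $\sum EME^\dagger$ once at that level to replace them by the matrix units $\sqrt{\dim S^c}\,\ketbra{i}{j}$, writes $M=\sum_{k,l}M_{kl}\otimes\ketbra{k}{l}$, and reads off $\dim S^c\sum_k M_{kk}\otimes\mathds{1}_{S^c}=\dim S^c\,\ptr{S^c}{M}\otimes\mathds{1}_{S^c}$ directly. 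You use the same basis-independence fact, but site by site. You then need the tensor factorization over sites, the Bloch expansion and Lemma~\ref{lemma:partial_trace_bloch} to reassemble the partial trace. The paper's route is shorter, self-contained, and does not rely on the error basis being a product basis. Yours takes more steps, but it yields the twirl identity that $\sum_{\supp{F}\subseteq S^c}FEF^\dagger$ equals $(\dim S^c)^2E$ when $\supp{E}\subseteq S$ and $0$ otherwise. This shows that twirling over $S^c$ acts diagonally on the nice error basis and simply projects onto the Bloch components supported in $S$, which makes transparent why Lemmas~\ref{lemma:partial_trace_bloch} and~\ref{lemma:partial_trace_CPTP} produce the same operator. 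You also justify the basis-independence step (via orthonormality of the columns of the change-of-basis unitary), which the paper only asserts.
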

\begin{proof}
The sum $\sum EME^\dagger$ is invariant under a unitary change of the orthonormal operator basis. Consequently, we may prove the identity using the standard matrix basis $E_{ij} = \sqrt{\dim S^c}\ketbra{i}{j}$, where $\{\ket{i}\}$ is an orthonormal basis for $S^c$. We evaluate the sum
\begin{equation*}
    \sum_{\supp{E}\subseteq S^c}EME^\dagger =\sum_{i,j=1}^{\dim S^c} (\mathds{1}_{S} \otimes  \sqrt{\dim S^c}\ketbra{i}{j}) M (\mathds{1}_{S} \otimes \sqrt{\dim S^c}\ketbra{j}{i})=\prt{\dim S^c}\sum_{i,j=1}^{\dim S^c} (\mathds{1}_{S} \otimes  \ketbra{i}{j}) M (\mathds{1}_{S} \otimes \ketbra{j}{i})
\end{equation*}

Any operator $M$ acting on the total system can be uniquely decomposed by expanding its action on the subsystem $S^c$ in terms of these matrix units
\begin{equation*}
    M = \sum_{k,l=1}^{\dim S^c} M_{kl} \otimes \ketbra{k}{l},
\end{equation*}
where each $M_{kl}$ is an operator acting strictly on $S$. Substituting this decomposition into the expression above, we obtain
\begin{align*}
    \sum_{\supp{E}\subseteq S^c}EME^\dagger=& \prt{\dim S^c}\sum_{i,j} \sum_{k,l} (\mathds{1}_{S} \otimes \ketbra{i}{j}) (M_{kl} \otimes \ketbra{k}{l}) (\mathds{1}_{S} \otimes \ketbra{j}{i})= \prt{\dim S^c}\sum_{k,l} M_{kl} \otimes \left( \sum_{i,j} \ket{i}\braket{j}{k} \braket{l}{j}\bra{i} \right)\\
    &=\prt{\dim S^c}\sum_{k,l} M_{kl}\otimes \delta_{kl} \sum_{i} \ketbra{i}{i}=\prt{\dim S^c}\sum_{k}M_{kk}\otimes \mathds{1}_{S^c}= \prt{\dim S^c}\ptr{S^c}{M}\otimes \mathds{1}_{S^c},
\end{align*}
which rearranging finishes the proof.
\end{proof}

\begin{lemma}\label{lemma:counting_Cvw}
Let $\mathbb{H}=\bigotimes_{i=1}^n \mathbb{C}^{D_i}$ be a mixed-dimensional Hilbert space defined over the the index set $[n]$. 

For all errors $E\in\mathcal{E}$ with $K=\supp{E}$ and $\mathbf{v}=\mathcal{D}(K)$, the number of subsystems $S\subseteq [n]$ such that $K\subseteq S$ and $\mathcal{D}(S)=\mathbf{w}$ is given by
\begin{equation*}
C_{\mathbf{v},\mathbf{w}}\coloneqq \prod_{d\in \mathbb{D}}\binom{m_d(\mathbf{N})-m_{d}(\mathbf{v})}{m_d(\mathbf{w})-m_{d}(\mathbf{v})}.
\end{equation*}
\end{lemma}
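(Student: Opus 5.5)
The plan is to reduce the count to independent choices, one for each distinct dimension $d\in\mathbb{D}$. Fix $E$ with support $K$ and $\mathbf{v}=\mathcal{D}(K)$. A subsystem $S$ with $K\subseteq S$ is determined uniquely by $T=S\setminus K\subseteq [n]\setminus K$, since $S=K\sqcup T$. Because multiplicities are additive over disjoint unions, $\mathcal{D}(S)=\mathbf{w}$ holds if and only if $m_d(\mathcal{D}(T))=m_d(\mathbf{w})-m_d(\mathbf{v})$ for every $d\in\mathbb{D}$. So we only need to count subsets $T$ of $[n]\setminus K$ with these prescribed multiplicities.

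Next, partition $[n]\setminus K$ by local dimension. For each $d\in\mathbb{D}$ let $I_d=\{i\in[n]\setminus K : D_i=d\}$, so that $|I_d|=m_d(\mathbf{N})-m_d(\mathbf{v})$. Then $T$ decomposes uniquely as $\bigsqcup_d (T\cap I_d)$, and the multiset condition says exactly that $|T\cap I_d|=m_d(\mathbf{w})-m_d(\mathbf{v})$ for each $d$. The choices of the pieces $T\cap I_d$ are independent, and for each $d$ there are $\binom{|I_d|}{m_d(\mathbf{w})-m_d(\mathbf{v})}$ of them. Multiplying over $d\in\mathbb{D}$ gives $C_{\mathbf{v},\mathbf{w}}$. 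This bijection argument is straightforward; the only point needing care is that it yields a formula depending on $K$ only through $\mathbf{v}$, which is what justifies the notation.

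The main obstacle is the boundary cases, which the binomial convention has to handle. If $\mathbf{v}\not\subseteq\mathbf{w}$, some $d$ has $m_d(\mathbf{w})-m_d(\mathbf{v})<0$. Then no $S\supseteq K$ has $\mathcal{D}(S)=\mathbf{w}$, since multiplicities are monotone under inclusion, so the count is $0$. I would state explicitly the convention $\binom{a}{b}=0$ for $b<0$, so the formula returns $0$ here as well. Similarly, if $m_d(\mathbf{w})>m_d(\mathbf{N})$ the binomial vanishes, consistent with there being no such $S$. This convention will also be needed when the lemma is used to sum over $\mathbf{w}$ in the MacWilliams identity.
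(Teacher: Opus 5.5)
Your proposal is correct and follows essentially the same argument as the paper: split the indices by local dimension, count the extensions of $K\cap[n]_d$ to $S\cap[n]_d$ independently, with $\binom{m_d(\mathbf{N})-m_d(\mathbf{v})}{m_d(\mathbf{w})-m_d(\mathbf{v})}$ choices for each $d$, and multiply over $d\in\mathbb{D}$. Your explicit convention $\binom{a}{b}=0$ for $b<0$ (and for $b>a$) is a useful addition, since the paper relies on it only implicitly when it later asserts that $C_{\mathbf{v},\mathbf{w}}$ vanishes unless $\mathbf{v}\subseteq\mathbf{w}$.
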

\begin{proof}
    For every distinct dimension $d\in\mathbb{D}$, define $[n]_d=\{i\in [n] \,|\, D_i=d\}$ as the set of indices corresponding to the subsystems that have local dimension $d$. This partitions the total index space into a disjoint union: $[n]=\coprod_{d\in \mathbb{D}} [n]_d$. Analogously, we define the dimension-specific components of $K$ and $S$  by intersecting them with these partitioned subsets as
    \begin{equation*}
        K_d=K\cap [n]_d, \quad S_d=S\cap [n]_d.
    \end{equation*}
    This naturally induces a partition on $K$ and $S$. Given the dimension multiset constraints $\mathcal{D}(K)=\mathbf{v}$ and $\mathcal{D}(S)=\mathbf{w}$, for some $\mathbf{v,w}\subseteq \mathbf{N}$, we can relate the cardinality of each subset with the multiplicity of that dimension, i.e., 
    \begin{equation}\label{eq:cardinality=multiplicity}
        |K_d|=m_d(\mathbf{v}),\quad |S_d|=m_d(\mathbf{w}).
    \end{equation}
    We aim to construct subsystems $S$ such that $K\subseteq S\subseteq [n]$ and $\mathcal{D}(S)=\mathbf{w}$. Because the index partitions are mutually disjoint, this is equivalent to independently constructing a subset $S_d$ for every $d\in \mathbb{D}$ such that $K_d\subseteq S_d\subseteq [n]_d $ and $|S_d|=m_d(\mathbf{w})$. To form each $S_d$, we need to include all $|K_d|$  elements of $K_d$ and choose the remaining $|S_d|-|K_d|$ elements from $[n]_d\setminus K_d$, which has size $|[n]_d|-|K_d|$. The number of ways to make this choice is given by the binomial coefficient 
    \begin{equation*}
        \binom{|[n]_d|-|K_d|}{|S_d|-|K_d|}=\binom{m_d(\mathbf{N})-m_d(\mathbf{v})}{m_d(\mathbf{w})-m_d(\mathbf{v})},
    \end{equation*}
    where we have used Eq. \eqref{eq:cardinality=multiplicity}. Since the partitions for each dimension $d\in \mathbb{D}$ are strictly independent of one another, the total number of valid subsystems is the product of the independent choices across all dimensions, 
    \begin{equation*}
C_{\mathbf{v,w}}=\prod_{d\in \mathbb{D}}\binom{m_d(\mathbf{N})-m_d(\mathbf{v})}{m_d(\mathbf{w})-m_d(\mathbf{v})},
    \end{equation*}
    which finishes the proof.
\end{proof}
Notice that the support of the error has disappeared from the equation, meaning its dimension multiset is enough to count the number of subsystems fulfilling the constraints. Consequently, $C_{\mathbf{v,w}}$ is well-defined.

This result directly generalizes the combinatorial factors used in \cite{HuberEltschkaSiewertGuhne2018}. In the homogeneous case, the dimension multiset of an error becomes a uniform multiset whose cardinality represents the standard error weight, $j = |\mathbf{v}|$. Consequently, the profile constraint $\mathcal{D}(S)=\mathbf{w}$ in Lemma \ref{lemma:counting_Cvw} simply fixes the total size of the target subsystem to $m = |\mathbf{w}|$. In this limit, $C_{\mathbf{v,w}}$ evaluates the number of ways to form a subsystem of size $m$ that fully contains an error of weight $j$ in an $n$-qudit system, naturally recovering the familiar binomial coefficient $\binom{n-j}{m-j}$. More explicitly, the product over $d \in \mathbb{D}$ collapses to a single term and the multiset multiplicities map directly to the standard scalar parameters. The total number of available qudits $m_d(\mathbf{N})$ becomes the total system size $n$. The number of qudits in the error's support $m_d(\mathbf{v})$ becomes the standard error weight $j$. Finally, the number of qudits required for the target subsystem $m_d(\mathbf{w})$ becomes the subsystem size $m$. Substituting these directly into the generalized expression yields exactly $\binom{n-j}{m-j}=\binom{n-j}{n-m}$.

\begin{theorem}\label{thm:mixed-dimensionalMacWilliams}
    For any two Hermitian operators $M$ and $N$ acting on $\mathbb{H}=\bigotimes_{i=1}^n \mathbb{C}^{D_i}$, the following identity holds,
    \begin{equation}
    A_{MN}\prt{\claud{x_d}_{d\in \mathbb{D}},\claud{y_d}_{d\in \mathbb{D}}}=B_{MN}\prt{\claud{\frac{x_d+(d^2-1)y_d}{d}}_{d\in\mathbb{D}},\claud{\frac{x_d-y_d}{d}}_{d\in\mathbb{D}}}.
    \end{equation}
\end{theorem}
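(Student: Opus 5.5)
The plan is to route both Shor--Laflamme enumerators through the unitary enumerator $A'_{MN}$. We compute $A'_{\mathbf{w}}$ once via $A_{\mathbf{v}}$ (using Lemma~\ref{lemma:partial_trace_bloch}) and once via $B_{\mathbf{v}}$ (using Lemma~\ref{lemma:partial_trace_CPTP}). This yields two linear relations, which we then turn into polynomial substitutions and compose.

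\textbf{Step 1: $A'$ in terms of $A$.} For a fixed $S$ we write $\ptr{S}{\ptr{S^c}{M}\ptr{S^c}{N}}=(\dim S^c)^{-1}\tra{(\ptr{S^c}{M}\otimes\mathds{1}_{S^c})(\ptr{S^c}{N}\otimes \mathds{1}_{S^c})}$. We expand both factors with Lemma~\ref{lemma:partial_trace_bloch} and apply the orthogonality relation \eqref{eq:orthogonality}. Since $E^\dagger$ lies in $\mathcal{E}$ up to a phase, the cross terms collapse and we expect
\begin{equation*}
\mathcal{A}'_S(M,N)=(\dim S)^{-1}\sum_{\supp{E}\subseteq S}\tra{EM}\tra{E^\dagger N}.
\end{equation*}
Summing over $S$ with $\mathcal{D}(S)=\mathbf{w}$ and using Lemma~\ref{lemma:counting_Cvw} gives
\begin{equation*}
A'_{\mathbf{w}}=(\dim\mathbf{w})^{-1}\sum_{\mathbf{v}\subseteq\mathbf{w}}C_{\mathbf{v},\mathbf{w}}A_{\mathbf{v}}.
\end{equation*}
Inserting this into $A'_{MN}$ and exchanging the order of summation, the sum factorizes over $d\in\mathbb{D}$ by the binomial theorem in each dimension separately. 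We expect $A'_{MN}(\vec x,\vec y)=A_{MN}(\{x_d+y_d/d\},\{y_d/d\})$.

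\textbf{Step 2: $A'$ in terms of $B$.} Replace $\ptr{S^c}{M}\otimes\mathds{1}_{S^c}$ by Lemma~\ref{lemma:partial_trace_CPTP}, i.e.\ $(\dim S^c)^{-1}\sum_{\supp{E}\subseteq S^c}EME^\dagger$, and keep $\ptr{S^c}{N}\otimes\mathds 1$ as is. Then
\begin{equation*}
\mathcal{A}'_S=(\dim S^c)^{-2}\sum_{\supp{E}\subseteq S^c}\tra{EME^\dagger N}.
\end{equation*}
This uses the fact that $\tra{EME^\dagger(\ptr{S^c}{N}\otimes\mathds 1)}$ summed over $E$ supported on $S^c$ can be rewritten via the same lemma applied to $N$, together with cyclicity. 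Summing over $S$ with $\mathcal{D}(S)=\mathbf{w}$ means counting sets $S^c\supseteq\supp{E}$ with $\mathcal{D}(S^c)=\mathbf{N}\setminus\mathbf{w}$, again by Lemma~\ref{lemma:counting_Cvw}. The binomial factorization should then give $A'_{MN}(\vec x,\vec y)=B_{MN}(\{x_d/d\},\{x_d/d^2\cdot\ldots\})$. Concretely, the roles of $x_d$ and $y_d$ swap because the support now lives in the complement, and we expect $A'_{MN}(\vec x,\vec y)=B_{MN}(\{(y_d+x_d/d)\cdots\})$. The exact form must be fixed by a careful accounting of the powers of $\dim S^c=\prod_d d^{m_d(\mathbf N)-m_d(\mathbf w)}$, which distribute as a factor $d^{-2}$ per complement element of dimension $d$.

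\textbf{Step 3: compose.} Invert the substitution of Step 1 (per $d$: $x\mapsto x-y$, $y\mapsto d\,y$ up to the appropriate form) and plug it into Step 2, checking that the result per dimension is the linear map $(x_d,y_d)\mapsto\bigl((x_d+(d^2-1)y_d)/d,\,(x_d-y_d)/d\bigr)$. As a sanity check, this should reduce to Rains' homogeneous identity when $|\mathbb D|=1$.

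The main obstacle is Step 2: getting the normalizations right, namely the two factors of $\dim S^c$ and the complementation $\mathbf{w}\leftrightarrow\mathbf N\setminus\mathbf w$ in the counting. The key to the whole argument is that every weight $\dim$ factorizes multiplicatively over dimensions, so each step reduces to a one-variable binomial identity per $d$.
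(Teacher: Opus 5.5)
Your overall structure matches the paper's. Step 1 reproduces its derivation of $A'_{MN}(\vec{x},\vec{y})=A_{MN}(\{x_d+y_d/d\},\{y_d/d\})$. Applying Lemma~\ref{lemma:partial_trace_CPTP} on $S^c$ inside $\mathcal{A}'_S$ is the same computation the paper does for $\mathcal{B}'_{S^c}$; the paper then uses $\mathcal{B}'_S=\mathcal{A}'_{S^c}$, and you fold that swap into Step 2. The gap is Step 2, which you leave unfinished and which, where you commit, has the wrong normalization. The correct identity is
\[
\mathcal{A}'_S(M,N)=(\dim S^c)^{-1}\sum_{\supp{E}\subseteq S^c}\tra{EME^\dagger N}.
\]
It follows in one line from $\mathcal{A}'_S=\tra{(\ptr{S^c}{M}\otimes\mathds{1}_{S^c})N}$ and a single application of Lemma~\ref{lemma:partial_trace_CPTP}.

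In your detour, after the first substitution you have $(\dim S^c)^{-2}\sum_{E}\tra{EME^\dagger(\ptr{S^c}{N}\otimes\mathds{1}_{S^c})}$. This sum equals $\dim S^c\sum_E\tra{EME^\dagger N}$, not $\sum_E\tra{EME^\dagger N}$. Applying the lemma to $N$ gives $(\dim S^c)^{-3}\sum_{E,F}\tra{F^\dagger EME^\dagger FN}$. Reindexing by $G\propto F^\dagger E$ then contributes a factor $(\dim S^c)^2$; this step uses closure of $\mathcal{E}$ under products up to phase, which you should state. You dropped that factor. The case $S=\emptyset$ exposes it: $\mathcal{A}'_\emptyset=\tra{M}\tra{N}$, whereas your formula gives $(\dim[n])^{-2}\sum_E\tra{EME^\dagger N}=\tra{M}\tra{N}/\dim[n]$.

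This error changes the result. With your stated $d^{-2}$ per complement element, Step 2 would yield $A'_{MN}(\vec{x},\vec{y})=B_{MN}(\{y_d+x_d/d^2\},\{x_d/d^2\})$. Composing with Step 1 gives $B_{MN}(\{dy_d+(x_d-y_d)/d^2\},\{(x_d-y_d)/d^2\})$, which is already off by a factor $\dim[n]$ at $x_d=1$, $y_d=0$.

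With the correct normalization, sum over $\mathcal{D}(S)=\mathbf{w}$, i.e.\ over $\mathcal{D}(S^c)=\mathbf{u}:=\mathbf{N}\setminus\mathbf{w}$. This gives $A'_{\mathbf{w}}=(\dim\mathbf{u})^{-1}\sum_{\mathbf{v}\subseteq\mathbf{u}}C_{\mathbf{v},\mathbf{u}}B_{\mathbf{v}}$, and the per-dimension binomial sum gives
\[
A'_{MN}(\vec{x},\vec{y})=B_{MN}\bigl(\{y_d+x_d/d\}_{d\in\mathbb{D}},\{x_d/d\}_{d\in\mathbb{D}}\bigr).
\]
Invert Step 1 as $A_{MN}(\vec{x},\vec{y})=A'_{MN}(\{x_d-y_d\},\{dy_d\})$ and substitute. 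This yields $B_{MN}(\{dy_d+(x_d-y_d)/d\},\{(x_d-y_d)/d\})$, which is exactly the theorem. So the fix is local, but as written Step 2, and hence the final identity, is not established.
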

\begin{proof}
    Our goal is to relate the unitary weights defined in Definition \ref{def:unitary_coeff} with the ones in Definition \ref{def:ShorLaflamme_coeff}. To do so, we will express the partial trace of an operator in two ways: first as a Bloch decomposition using Lemma \ref{lemma:partial_trace_bloch} and then as a quantum channel using Lemma \ref{lemma:partial_trace_CPTP}.

    First, we evaluate the unitary coefficient $\mathcal{A}'_S(M,N)$ using the Bloch decomposition provided by Lemma \ref{lemma:partial_trace_bloch} as
    \begin{align*}
        \mathcal{A}'_{S}(M,N)&=\ptr{S}{\ptr{S^c}{M}\ptr{S^c}{N}}=\tra{\ptr{S^c}{M}\otimes \mathds{1}_{S^c}N}\\
        &=\tra{(\dim S)^{-1} \sum_{\supp{E}\subseteq S}\tra{E M}E^\dagger\, \corch{(\dim [n])^{-1}\sum_{F\in \mathcal{E}}\tra{F^\dagger N}F}}\\
        &=(\dim S)^{-1}(\dim [n])^{-1}\sum_{\supp{E}\subseteq S} \tra{E M}\sum_{F\in\mathcal{E}} \tra{F^\dagger N}\tra{E^\dagger F}\\&=(\dim S)^{-1}\sum_{\supp{E}\subseteq S} \tra{E M} \tra{E^\dagger N},
    \end{align*}
    where we have used the Bloch representation for $N$, Lemma \ref{lemma:partial_trace_bloch} and the orthogonality relations for the orthonormal basis of operators from Eq. \eqref{eq:orthogonality}. Now we sum over all subsystems $S$ such that $\mathcal{D}(S)=\mathbf{w}$, for a fixed dimension multiset $\mathbf{w}$. Notice this yields the exact definition of the unitary coefficients in Definition \ref{def:unitary_coeff}. Therefore,
    \begin{align*}
        A'_{\mathbf{w}}(M,N)&=\sum_{\mathcal{D}(S)=\mathbf{w}}(\dim S)^{-1}\sum_{\supp{E}\subseteq S} \tra{E M} \tra{E^\dagger N}\\&=\prt{\prod_{d\in \mathbf{w}}d^{-1}}\sum_{\mathcal{D}(S)=\mathbf{w}}\,\sum_{\supp{E}\subseteq S} \tra{EM}\tra{E^\dagger N}.
    \end{align*}
    Swapping the sums allows us to factor out the trace-terms, since they do not depend on subsystem $S$. This gives
    \begin{align*}
        A'_{\mathbf{w}}(M,N)&=\prt{\prod_{d\in \mathbf{w}}d^{-1}}\sum_{E\in \mathcal{E}}\tra{EM}\tra{E^\dagger N}\sum_{\substack{S\supseteq \supp{E}\\ \mathcal{D}(S)=\mathbf{w}}} 1.
    \end{align*}
    The innermost sum simply counts the number of valid subsystems that fully contain the support of the error and possess the target dimension multiset $\mathbf{w}$. By Lemma \ref{lemma:counting_Cvw}, this count is exactly $C_{\mathcal{D}(\supp{E}),\mathbf{w}}$. We can now partition the sum over all errors $E \in \mathcal{E}$ by grouping them according to their dimension multiset $\mathbf{v} = \mathcal{D}(\supp{E})$ as
\begin{align*}
    A'_{\mathbf{w}}(M,N)&=\prt{\prod_{d\in \mathbf{w}}d^{-1}}\sum_{E\in \mathcal{E}}C_{\mathcal{D}(\supp{E}),\mathbf{w}}\tra{EM}\tra{E^\dagger N}\\
    &=\prt{\prod_{d\in \mathbf{w}}d^{-1}}\sum_{\mathbf{v}\subseteq \mathbf{N}} \sum_{\substack{E\in\mathcal{E}\\\mathcal{D}(\supp{E})=\mathbf{v}}} C_{\mathbf{v,w}}\tra{EM}\tra{E^\dagger N}.
\end{align*}
Because the combinatorial coefficient $C_{\mathbf{v,w}}$ evaluates to zero unless $\mathbf{v} \subseteq \mathbf{w}$, the outer sum naturally restricts its bounds. Factoring the constant $C_{\mathbf{v,w}}$ out of the inner sum yields
\begin{equation}\label{eq:AdashW=sumAV}
    \begin{split}
    A'_{\mathbf{w}}(M,N)&=\prt{\prod_{d\in \mathbf{w}}d^{-1}}\sum_{\mathbf{v}\subseteq \mathbf{w}} C_{\mathbf{v,w}} \corch{\sum_{\mathcal{D}(\supp{E})=\mathbf{v}}\tra{EM}\tra{E^\dagger N}}=\prt{\prod_{d\in \mathbf{w}}d^{-1}}\sum_{\mathbf{v}\subseteq \mathbf{w}} C_{\mathbf{v,w}}A_{\mathbf{v}}(M,N).
    \end{split}
\end{equation}

Conversely, evaluating the unitary dual coefficient $\mathcal{B}'_S(M,N)$ using Lemma \ref{lemma:partial_trace_CPTP},
\begin{align*}
\mathcal{B}'_{S}(M,N)&=\ptr{S^c}{\ptr{S}{M}\otimes \ptr{S}{N}}=\tra{\ptr{S}{M}\otimes \mathds{1}_{S} N}\\
&=\tra{(\dim S)^{-1}\sum_{\supp{E}\subseteq S}EME^\dagger N}=(\dim S)^{-1}\sum_{\supp{E}\subseteq S}\tra{EME^\dagger N}.
\end{align*}
By summing over all subsystems $S$ such that $\mathcal{D}(S)=\mathbf{w}$ and proceeding analogously as before we obtain the $B'_{\mathbf{w}}(M,N)$ coefficients in terms of $B_{\mathbf{w}}(M,N)$ with the same combinatorial number. That is
\begin{align*}
    B'_{\mathbf{w}}(M,N)=\prt{\prod_{d\in \mathbf{w}}d^{-1}}\sum_{\mathbf{v}\subseteq \mathbf{w}} C_{\mathbf{v,w}}B_{\mathbf{v}}(M,N).
\end{align*}
As in the homogeneous scenario, the multivariate unitary weight enumerators are related to the multivariate Shor-Laflamme weight enumerators by means of a polynomial transform in each component.

By expanding the multivariate unitary weight enumerator and substituting the relation from Eq. \eqref{eq:AdashW=sumAV}, we obtain
\begin{align*}
A'_{MN}(\vec{x},\vec{y})&=\sum_{\mathbf{w}\subseteq \mathbf{N}}A'_{\mathbf{w}}(M,N)\prod_{d\in \mathbb{D}} x_d^{m_d(\mathbf{N})-m_d(\mathbf{w})}y_d^{m_d(\mathbf{w})}\\
&=\sum_{\mathbf{w}\subseteq \mathbf{N}}\corch{\prt{\prod_{d\in \mathbf{w}}d^{-1}}\sum_{\mathbf{v}\subseteq \mathbf{w}} C_{\mathbf{v,w}}A_{\mathbf{v}}(M,N)}\prod_{d\in \mathbb{D}} x_d^{m_d(\mathbf{N})-m_d(\mathbf{w})}y_d^{m_d(\mathbf{w})}.
\end{align*}
We can rewrite the prefactor $\prod_{d\in \mathbf{w}}d^{-1}$ in terms of the dimension multiplicities as $\prod_{d\in \mathbb{D}}d^{-m_d(\mathbf{w})}$. Absorbing this term into the formal variable $y_d$ yields
\begin{align*}
A'_{MN}(\vec{x},\vec{y})&=\sum_{\mathbf{w}\subseteq \mathbf{N}} \sum_{\mathbf{v}\subseteq \mathbf{w}}\prt{\prod_{d\in \mathbb{D}}d^{-m_d(\mathbf{w})}} C_{\mathbf{v,w}}A_{\mathbf{v}}(M,N)\prod_{d\in \mathbb{D}} x_d^{m_d(\mathbf{N})-m_d(\mathbf{w})}y_d^{m_d(\mathbf{w})}\\
&=\sum_{\mathbf{w}\subseteq \mathbf{N}} \sum_{\mathbf{v}\subseteq \mathbf{w}} C_{\mathbf{v,w}}A_{\mathbf{v}}(M,N)\prod_{d\in \mathbb{D}} x_d^{m_d(\mathbf{N})-m_d(\mathbf{w})}\prt{\frac{y_d}{d}}^{m_d(\mathbf{w})}.
\end{align*}
Next, we exchange the order of summation. Since $\mathbf{w} \subseteq \mathbf{N}$ and $\mathbf{v} \subseteq \mathbf{w}$, we can equivalently sum over all $\mathbf{v} \subseteq \mathbf{N}$ and then sum over all $\mathbf{w}$ such that $\mathbf{v} \subseteq \mathbf{w} \subseteq \mathbf{N}$. Pulling $A_{\mathbf{v}}(M,N)$ out of the inner sum gives
\begin{equation*}
A'_{MN}(\vec{x},\vec{y})=\sum_{\mathbf{v}\subseteq \mathbf{N}}  A_{\mathbf{v}}(M,N)\sum_{\mathbf{w}\supseteq \mathbf{v}} C_{\mathbf{v,w}}\prod_{d\in \mathbb{D}} x_d^{m_d(\mathbf{N})-m_d(\mathbf{w})}\prt{\frac{y_d}{d}}^{m_d(\mathbf{w})}.
\end{equation*}
Substituting the explicit combinatorial definition of $C_{\mathbf{v,w}}$ from Lemma \ref{lemma:counting_Cvw} and merging the products over $d \in \mathbb{D}$, we find
\begin{align*}
A'_{MN}(\vec{x},\vec{y})&=\sum_{\mathbf{v}\subseteq \mathbf{N}}  A_{\mathbf{v}}(M,N)\sum_{\mathbf{w}\supseteq \mathbf{v}} \prt{\prod_{d\in \mathbb{D}}\binom{m_d(\mathbf{N})-m_d(\mathbf{v})}{m_d(\mathbf{w})-m_d(\mathbf{v})}}\prod_{d\in \mathbb{D}} x_d^{m_d(\mathbf{N})-m_d(\mathbf{w})}\prt{\frac{y_d}{d}}^{m_d(\mathbf{w})}\\
&=\sum_{\mathbf{v}\subseteq \mathbf{N}}  A_{\mathbf{v}}(M,N)\sum_{\mathbf{w}\supseteq \mathbf{v}} \prod_{d\in \mathbb{D}} \binom{m_d(\mathbf{N})-m_d(\mathbf{v})}{m_d(\mathbf{w})-m_d(\mathbf{v})} x_d^{m_d(\mathbf{N})-m_d(\mathbf{w})}\prt{\frac{y_d}{d}}^{m_d(\mathbf{w})}.
\end{align*}
Because the constraints on $\mathbf{w}$ strictly decouple across each distinct local dimension $d$, the sum over the multisets $\mathbf{w} \supseteq \mathbf{v}$ is equivalent to summing independently over the integer multiplicities $m_d(\mathbf{w})$ from $m_d(\mathbf{v})$ up to $m_d(\mathbf{N})$. This allows us to exchange the sum and the product as
\begin{equation*}
A'_{MN}(\vec{x},\vec{y})=\sum_{\mathbf{v}\subseteq \mathbf{N}}  A_{\mathbf{v}}(M,N) \prod_{d\in \mathbb{D}} \corch{\sum_{m_d(\mathbf{w})=m_d(\mathbf{v})}^{m_d(\mathbf{N})}\binom{m_d(\mathbf{N})-m_d(\mathbf{v})}{m_d(\mathbf{w})-m_d(\mathbf{v})} x_d^{m_d(\mathbf{N})-m_d(\mathbf{w})}\prt{\frac{y_d}{d}}^{m_d(\mathbf{w})}}.
\end{equation*}
To evaluate the inner sum, we factor out $(y_d/d)^{m_d(\mathbf{v})}$ to match the standard form of the binomial theorem, shifting the exponent. Namely,
\begin{align*}
A'_{MN}(\vec{x},\vec{y})&=\sum_{\mathbf{v}\subseteq \mathbf{N}}  A_{\mathbf{v}}(M,N) \prod_{d\in \mathbb{D}} \prt{\frac{y_d}{d}}^{m_d(\mathbf{v})}\corch{\sum_{m_d(\mathbf{w})=m_d(\mathbf{v})}^{m_d(\mathbf{N})}\binom{m_d(\mathbf{N})-m_d(\mathbf{v})}{m_d(\mathbf{w})-m_d(\mathbf{v})} x_d^{m_d(\mathbf{N})-m_d(\mathbf{w})}\prt{\frac{y_d}{d}}^{m_d(\mathbf{w})-m_d(\mathbf{v})}}\\
&=\sum_{\mathbf{v}\subseteq \mathbf{N}}  A_{\mathbf{v}}(M,N) \prod_{d\in \mathbb{D}} \prt{x_d+\frac{y_d}{d}}^{m_d(\mathbf{N})-m_d(\mathbf{v})}\prt{\frac{y_d}{d}}^{m_d(\mathbf{v})}.
\end{align*}

Finally, recognizing this structure as the multiset Shor-Laflamme weight enumerator evaluated at a transformed set of variables, we arrive at the identity
\begin{equation}\label{eq:Adash=A_polynomial}
A'_{MN}\prt{\claud{x_d}_{d\in\mathbb{D}},\claud{y_d}_{d\in\mathbb{D}}}=A_{MN}\prt{\claud{x_d+\frac{y_d}{d}}_{d\in \mathbb{D}},\claud{\frac{y_d}{d}}_{d\in \mathbb{D}}}.
\end{equation}

Analogously, it is proved that 
\begin{equation}\label{eq:Bdash=B_polynomial}
    B'_{MN}\prt{\claud{x_d}_{d\in\mathbb{D}},\claud{y_d}_{d\in\mathbb{D}}}=B_{MN}\prt{\claud{x_d+\frac{y_d}{d}}_{d\in \mathbb{D}},\claud{\frac{y_d}{d}}_{d\in \mathbb{D}}}.
\end{equation}

Notice that by Definition \ref{def:calligraphic_coeff}, $\mathcal{B}'_{S}(M,N)=\mathcal{A}'_{S^c}(M,N)$, therefore $B'_{\mathbf{w}}(M,N)=A'_{\mathbf{N}\setminus \mathbf{w}}(M,N)$. This means that
\begin{align*}
    B'_{MN}(\claud{y_d}_{d\in\mathbb{D}},\claud{x_d}_{d\in\mathbb{D}})&=\sum_{\mathbf{w}\subseteq \mathbf{N}}B'_{\mathbf{w}}(M,N)\prod_{d\in \mathbb{D}} y_d^{m_d(\mathbf{N})-m_d(\mathbf{w})}x_d^{m_d(\mathbf{w})}\\&=\sum_{\mathbf{w}\subseteq \mathbf{N}}A'_{\mathbf{N}\setminus \mathbf{w}}(M,N)\prod_{d\in \mathbb{D}} y_d^{m_d(\mathbf{N})-m_d(\mathbf{w})}x_d^{m_d(\mathbf{w})}\\
    &=\sum_{\mathbf{w'}\subseteq \mathbf{N}}A'_{\mathbf{w'}}(M,N)\prod_{d\in \mathbb{D}} y_d^{m_d(\mathbf{w'})}x_d^{m_d(\mathbf{N})-m_d(\mathbf{w'})}= A'_{MN}(\claud{x_d}_{d\in\mathbb{D}},\claud{y_d}_{d\in\mathbb{D}})
\end{align*}
where we have defined $\mathbf{w'}=\mathbf{N}\setminus \mathbf{w}$. Therefore using Eqs. \eqref{eq:Adash=A_polynomial} and \eqref{eq:Bdash=B_polynomial},
\begin{align}
A_{MN}\prt{\claud{x_d}_{d\in\mathbb{D}},\claud{y_d}_{d\in\mathbb{D}}} &= A'_{MN}\prt{\claud{x_d - y_d}_{d\in\mathbb{D}}, \claud{d \cdot y_d}_{d\in\mathbb{D}}}\label{eq:A=Adash_polynomial}\\
&= B'_{MN}\prt{\claud{d \cdot y_d}_{d\in\mathbb{D}}, \claud{x_d - y_d}_{d\in\mathbb{D}}}\nonumber\\ &= B_{MN}\prt{\claud{\frac{x_d + (d^2 - 1)y_d}{d}}_{d\in\mathbb{D}}, \claud{\frac{x_d - y_d}{d}}_{d\in\mathbb{D}}}\nonumber,
\end{align}

and the proof is finished.
\end{proof}
\begin{remark}
We note that a previous mixed-dimensional MacWilliams identity was derived in \cite{Nebe2006}, by assigning a pair of variables $(x_i,y_i)$ to every individual subsystem $i\in \{1,\dots,n\}$. Such a formulation results in a polynomial containing $2^n$ terms, one for every subset, whereas our multiset approach yields 
\begin{equation*}
    \prod_{d\in \mathbb{D}}\prt{m_d(\mathbf{N})+1}
\end{equation*}
terms, which is equal to the number of possible sub-multisets $\mathbf{v}\subseteq\mathbf{N}$. Our approach reduces the number of terms considerably by compressing equal dimensional systems into the same variable. In fact, our worst-case scenario is when every single subsystem has a different dimension, meaning $m_d(\mathbf{N})=1$ for all $d\in \mathbb{D}$, recovering $2^n$ terms and matching the previous result. As an example, consider a system with 10 qubits and 10 qutrits. The previous number of terms would be $2^{20}=1,048,576$. With the multiset approach this reduces to $(10+1)\times(10+1)=121$ terms. Moreover, when considering constant-local dimensions in the previous approach, the standard homogeneous quantum MacWilliams identity with just two variables is not recovered.
\end{remark}
\section{Enumerator identities}\label{sec:enumerator_identities}
Having established the mixed-dimensional quantum MacWilliams identity, we now develop the algebraic tools necessary to bound code parameters. In this section, we derive the mixed-dimensional shadow identity and provide explicit inversion formulas that express the unitary, dual and shadow multiset coefficients as linear combinations of the primal Shor-Laflamme multiset coefficients.
\subsection{Shadow identity}
 It is known that for all positive semi-definite hermitian operators $M$ and $N$ on parties $\{1,\dots,n\}$ and any fixed subset $T\subseteq \{1,\dots, n\}$ it holds that \cite{Rains00}
 \begin{equation}\label{eq:generalized_shadow_inequality}
    \sum_{S\subseteq \{1,\dots,n\}} (-1)^{|S\cap T|} \ptr{S}{\ptr{S^c}{M}\ptr{S^c}{N}}\ge 0.
 \end{equation}
Note this is applicable to any number of parties and local dimensions. 
As before, we redefine the shadow coefficient following \cite{HuberEltschkaSiewertGuhne2018}.
\begin{definition}
     Given any two Hermitian operators $M$ and $N$ acting on $\mathbb{H}=\bigotimes_{i=1}^n \mathbb{C}^{D_i}$, the shadow multiset coefficient for $S\subseteq [n]$ are defined as
     \begin{equation}\label{eq:shadow_coeff}
        S_\mathbf{w}(M,N)=\sum_{\mathcal{D}(T)=\mathbf{w}} \sum_{S\subseteq [n]} (-1)^{|S\cap T^c|} \ptr{S}{\ptr{S^c}{M
        }\ptr{S^c}{N}},
     \end{equation}
     where $\mathbf{w}\subseteq \mathbf{N}$ is a fixed dimension multiset.
\end{definition}
Note the coefficient includes the complement of $T$ and not $T$ itself, following the procedure in \cite{HuberEltschkaSiewertGuhne2018}. A similar proof can be derived in terms of $T$, as done in \cite{HuberGrassl2020}.

By means of Eq. \eqref{eq:generalized_shadow_inequality}, every coefficient $S_{\mathbf{w}}(M,N)$ must be non-negative.
This can also be translated into a new polynomial.
\begin{definition}
The multiset shadow weight enumerator is the polynomial
\begin{equation*}
S_{MN}(\vec{x},\vec{y})=\sum_{\mathbf{w}\subseteq \mathbf{N}}S_{\mathbf{w}}(M,N)\prod_{d\in \mathbb{D}} x_d^{m_d(\mathbf{N})-m_d(\mathbf{w})}y_d^{m_d(\mathbf{w})},
\end{equation*}
where $\vec{x}=\claud{x_d}_{d\in\mathbb{D}}$ and $\vec{y}=\claud{y_d}_{d\in\mathbb{D}}$ as before.
\end{definition}
As in the homogeneous case, we can find a relation between the Shadow enumerator and the Shor-Laflamme one: the shadow identity.
\begin{theorem}\label{thm:shadow_identity}
    For any two Hermitian operators $M$ and $N$ acting on $\mathbb{H}=\bigotimes_{i=1}^n \mathbb{C}^{D_i}$, their multivariate shadow weight enumerator is related to their multivariate Shor-Laflamme weight enumerator by 
    \begin{equation*}
        S_{MN}\prt{\claud{x_d}_{d\in\mathbb{D}},\claud{y_d}_{d\in\mathbb{D}}}=A_{MN}\prt{\claud{\frac{(d-1)x_d+(d+1)y_d}{d}}_{d\in\mathbb{D}},\claud{\frac{y_d-x_d}{d}}_{d\in\mathbb{D}}}.
    \end{equation*}
\end{theorem}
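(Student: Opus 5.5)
The plan is to reduce the shadow enumerator to the multivariate unitary enumerator $A'_{MN}$ by a site-by-site generating-function computation. Then Eq. \eqref{eq:Adash=A_polynomial}, established in the proof of Theorem \ref{thm:mixed-dimensionalMacWilliams}, converts $A'_{MN}$ into $A_{MN}$. Throughout I write $\mathcal{A}'_S=\mathcal{A}'_S(M,N)$ for the calligraphic coefficient of Definition \ref{def:calligraphic_coeff}. The shadow coefficient is then $S_{\mathbf{w}}=\sum_{\mathcal{D}(T)=\mathbf{w}}\sum_{S}(-1)^{|S\cap T^c|}\mathcal{A}'_S$.

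First, I substitute this into $S_{MN}(\vec{x},\vec{y})$. The sum over $\mathbf{w}$ combined with the sum over $T$ with $\mathcal{D}(T)=\mathbf{w}$ is just a sum over all $T\subseteq[n]$, with monomial $\prod_{i\in T^c}x_{D_i}\prod_{i\in T}y_{D_i}$. This uses $m_d(\mathcal{D}(T))=|T\cap[n]_d|$, exactly as in Eq. \eqref{eq:cardinality=multiplicity}. Hence
\begin{equation*}
S_{MN}(\vec{x},\vec{y})=\sum_{S\subseteq[n]}\mathcal{A}'_S\sum_{T\subseteq[n]}(-1)^{|S\cap T^c|}\prod_{i\in T^c}x_{D_i}\prod_{i\in T}y_{D_i}.
\end{equation*}

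Second, for fixed $S$ the inner sum over $T$ factorizes over sites $i\in[n]$, since the choice "$i\in T$ or $i\in T^c$" is independent at each site. A site $i\in S$ contributes $y_{D_i}-x_{D_i}$: the sign is $-1$ exactly when $i\in T^c$. A site $i\notin S$ contributes $x_{D_i}+y_{D_i}$. Grouping the sites by dimension gives
\begin{equation*}
\prod_{d\in\mathbb{D}}(x_d+y_d)^{m_d(\mathbf{N})-m_d(\mathcal{D}(S))}(y_d-x_d)^{m_d(\mathcal{D}(S))}.
\end{equation*}
This factor depends on $S$ only through $\mathcal{D}(S)$. Summing $\mathcal{A}'_S$ over $\mathcal{D}(S)=\mathbf{u}$ therefore yields $A'_{\mathbf{u}}$, so
\begin{equation*}
S_{MN}(\{x_d\},\{y_d\})=A'_{MN}(\{x_d+y_d\},\{y_d-x_d\}).
\end{equation*}

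Finally, I apply Eq. \eqref{eq:Adash=A_polynomial}, namely $A'_{MN}(\{X_d\},\{Y_d\})=A_{MN}(\{X_d+Y_d/d\},\{Y_d/d\})$, with $X_d=x_d+y_d$ and $Y_d=y_d-x_d$. The two arguments become
\begin{equation*}
X_d+\frac{Y_d}{d}=\frac{(d-1)x_d+(d+1)y_d}{d},\qquad \frac{Y_d}{d}=\frac{y_d-x_d}{d},
\end{equation*}
which is exactly the claimed identity.

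The only delicate point is the sign bookkeeping in the per-site factorization. The definition uses $T^c$ rather than $T$, and one must check that this produces $y_d-x_d$ and not $x_d-y_d$. The final algebraic check confirms that this orientation matches the stated formula. Everything else is a reorganization of finite sums and an application of Eq. \eqref{eq:Adash=A_polynomial}, which holds as a polynomial identity and therefore survives the substitution.
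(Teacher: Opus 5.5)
Your proof is correct. Like the paper, you first establish $S_{MN}(\{x_d\},\{y_d\})=A'_{MN}(\{x_d+y_d\},\{y_d-x_d\})$ (Eq. \eqref{eq:SMN=AdashMN}) and then finish with the $A'\to A$ relation; the difference is how you reach that intermediate identity.

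The paper fixes $\mathbf{w}$ first. It evaluates the signed count $\sum_{\mathcal{D}(T)=\mathbf{w}}(-1)^{|S\cap T^c|}$ by counting overlaps dimension by dimension. This gives the coefficient-level formula $S_{\mathbf{w}}=\sum_{\mathbf{v}}\prod_{d}K_{m_d(\mathbf{N})-m_d(\mathbf{w})}(m_d(\mathbf{v});m_d(\mathbf{N}))A'_{\mathbf{v}}$ in Eq. \eqref{eq:S=Adashes}. Only then does it resum over $\mathbf{w}$ using the Krawtchouk generating function.

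You instead merge the sums over $\mathbf{w}$ and over $T$ into a single sum over all $T\subseteq[n]$ before anything else. The generating function then factorizes trivially over individual sites, so you need no Krawtchouk polynomials and no overlap counting. This is shorter and makes the orientation $y_d-x_d$ transparent.

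What the paper's route buys is the explicit linear relation \eqref{eq:S=Adashes} between shadow and unitary coefficients, which it reuses for AME states in Eq. \eqref{eq:Shadows_AME}. In your approach you would recover that relation afterwards by expanding $(x_d+y_d)^{m_d(\mathbf{N})-m_d(\mathbf{u})}(y_d-x_d)^{m_d(\mathbf{u})}$ and reading off coefficients, which is the same Krawtchouk generating function run in reverse.
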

\begin{proof}
    Recalling Eq. \eqref{eq:shadow_coeff} we can swap the sums and factor out the trace-terms, as usual. This yields 
    \begin{align*}
        S_\mathbf{w}(M,N)=\sum_{S\subseteq [n]}\ptr{S}{\ptr{S^c}{M}\ptr{S^c}{N}}\sum_{\mathcal{D}(T)=\mathbf{w}}(-1)^{|S\cap T^c|}.
    \end{align*}
    Following the logic from proof of Lemma \ref{lemma:counting_Cvw}, we partition our whole index space into the different dimensions. By doing so the cardinal becomes $|S\cap T^c|=|\coprod_{d\in \mathbb{D}} S_d\cap T^c_d|=\sum_{d\in \mathbb{D}}|S_d\cap T^c_d|$. This allows for the second sum in the equation above to be expressed as
    \begin{align*}
    \sum_{\mathcal{D}(T)=\mathbf{w}} (-1)^{|S\cap T^c|}=\sum_{\mathcal{D}(T)=\mathbf{w}} \prod_{d\in \mathbb{D}} (-1)^{|S_d\cap T^c_d|}.
    \end{align*}
    Notice the constraint $\mathcal{D}(T)=\mathbf{w}$ forces us to choose $m_d(\mathbf{w})$ elements for each subset $T_d$ from the $[n]_d$ available. This choice is independent for each dimension, so the sum and the product can be exchanged, yielding
    \begin{align*}
        \sum_{\mathcal{D}(T)=\mathbf{w}}\prod_{d\in \mathbb{D}}(-1)^{|S_d\cap T^c_d|}=\prod_{d\in \mathbb{D}}\corch{\sum_{\substack{T_d \subseteq [n]_d\\|T_d|=m_d(\mathbf{w})}} (-1)^{|S_d\cap T^c_d|}}.
    \end{align*}
    We are summing the signed overlaps between the fixed set $S_d$ and all possible subsets $T^c_d\subseteq [n]_d$ with fixed size $|T^c_d|=m_d(\mathbf{N})-m_d(\mathbf{w})$. Consider the dimension multiset of $S$, $\mathcal{D}(S)=\mathbf{v}$ and define $\alpha= |S_d\cap T^c_d|$. To construct a valid $T^c_d$ such that the overlap with $S_d$ is $\alpha$ and its size is $m_d(\mathbf{N})-m_d(\mathbf{w})$ we choose first $\alpha$ elements from those available in $S_d$, which are $m_d(\mathbf{v})$ and the remaining $m_d(\mathbf{N})-m_d(\mathbf{w})-\alpha$ from the $m_d(\mathbf{N})-m_d(\mathbf{v})$ available in $S^c_d$. Summing over all possible intersections $\alpha$, we end up with
    \begin{align*}
    \prod_{d\in \mathbb{D}}\corch{\sum_{\substack{T_d \subseteq [n]_d\\|T_d|=m_d(\mathbf{w})}} (-1)^{|S_d\cap T^c_d|}}=\prod_{d\in \mathbb{D}}\corch{\sum_{\alpha} (-1)^\alpha\binom{m_d(\mathbf{N})-m_d(\mathbf{v})}{m_d(\mathbf{N})-m_d(\mathbf{w})-\alpha} \binom{m_d(\mathbf{v})}{\alpha}}.
    \end{align*}
    Notice each factor corresponds to a Krawtchouk polynomial of the form $K_{m_d(\mathbf{N})-m_d(\mathbf{w})}\prt{m_d(\mathbf{v});m_d(\mathbf{N})}$. Therefore, the shadow coefficients can be rewritten as
    \begin{equation}\label{eq:S=Adashes}
    \begin{split}
    S_\mathbf{w}(M,N)&=\sum_{S\subseteq [n]}\corch{\prod_{d\in \mathbb{D}} K_{m_d(\mathbf{N})-m_d(\mathbf{w})}\prt{m_d(\mathcal{D}(S));m_d(\mathbf{N})}}\ptr{S}{\ptr{S^c}{M}\ptr{S^c}{N}}\\
        &=\sum_{\mathbf{v}\subseteq\mathbf{N}}\sum_{\mathcal{D}(S)=\mathbf{v}}\corch{\prod_{d\in \mathbb{D}} K_{m_d(\mathbf{N})-m_d(\mathbf{w})}\prt{m_d(\mathbf{v});m_d(\mathbf{N})}}\ptr{S}{\ptr{S^c}{M}\ptr{S^c}{N}}\\
        &=\sum_{\mathbf{v}\subseteq\mathbf{N}}\corch{\prod_{d\in \mathbb{D}} K_{m_d(\mathbf{N})-m_d(\mathbf{w})}\prt{m_d(\mathbf{v});m_d(\mathbf{N})}}\sum_{\mathcal{D}(S)=\mathbf{v}}\ptr{S}{\ptr{S^c}{M}\ptr{S^c}{N}}\\
        &=\sum_{\mathbf{v}\subseteq\mathbf{N}}\corch{\prod_{d\in \mathbb{D}} K_{m_d(\mathbf{N})-m_d(\mathbf{w})}\prt{m_d(\mathbf{v});m_d(\mathbf{N})}}A'_\mathbf{v}(M,N).
    \end{split}
    \end{equation}
    
    This means we can also express the shadow weight enumerator in terms of the unitary one. By substituting the relation from Eq. \eqref{eq:S=Adashes} into the definition of the multivariate shadow weight enumerator, we obtain
\begin{align*}
    S_{MN}\prt{\vec{x},\vec{y}}&=\sum_{\mathbf{w}\subseteq \mathbf{N}}S_{\mathbf{w}}(M,N)\prod_{d\in \mathbb{D}} x_d^{m_d(\mathbf{N})-m_d(\mathbf{w})}y_d^{m_d(\mathbf{w})}\\
    &=\sum_{\mathbf{w}\subseteq \mathbf{N}}\claud{\sum_{\mathbf{v}\subseteq\mathbf{N}}\corch{\prod_{d\in \mathbb{D}} K_{m_d(\mathbf{N})-m_d(\mathbf{w})}\prt{m_d(\mathbf{v});m_d(\mathbf{N})}}A'_\mathbf{v}(M,N)}\prod_{d\in \mathbb{D}} x_d^{m_d(\mathbf{N})-m_d(\mathbf{w})}y_d^{m_d(\mathbf{w})}.
\end{align*}

Next, we exchange the order of summation between $\mathbf{w}$ and $\mathbf{v}$. By pulling the unitary coefficient $A'_\mathbf{v}(M,N)$ out of the inner sum and grouping the dimension-specific terms together into a single product, this yields
\begin{equation*}
    S_{MN}\prt{\vec{x},\vec{y}} = \sum_{\mathbf{v}\subseteq \mathbf{N}} A'_\mathbf{v}(M,N) \sum_{\mathbf{w}\subseteq \mathbf{N}}\corch{\prod_{d\in\mathbb{D}}K_{m_d(\mathbf{N})-m_d(\mathbf{w})}\prt{m_d(\mathbf{v});m_d(\mathbf{N})} x_d^{m_d(\mathbf{N})-m_d(\mathbf{w})}y_d^{m_d(\mathbf{w})}}.
\end{equation*}

Because the summation over the sub-multisets $\mathbf{w} \subseteq \mathbf{N}$ decouples across each distinct dimension, it is equivalent to summing independently over the integer multiplicities $m_d(\mathbf{w})$ from $0$ up to $m_d(\mathbf{N})$. Exchanging the product and the sum gives
\begin{equation*}
    S_{MN}\prt{\vec{x},\vec{y}} = \sum_{\mathbf{v}\subseteq \mathbf{N}} A'_\mathbf{v}(M,N) \prod_{d\in\mathbb{D}}\corch{ \sum_{m_d(\mathbf{w})=0}^{m_d(\mathbf{N})}K_{m_d(\mathbf{N})-m_d(\mathbf{w})}\prt{m_d(\mathbf{v});m_d(\mathbf{N})} x_d^{m_d(\mathbf{N})-m_d(\mathbf{w})}y_d^{m_d(\mathbf{w})}}.
\end{equation*}

The inner sum now exactly matches the generating function property of the Krawtchouk polynomials. Evaluating this generating function maps the sum into a product of binomials
\begin{equation*}
    S_{MN}\prt{\vec{x},\vec{y}} = \sum_{\mathbf{v}\subseteq \mathbf{N}} A'_\mathbf{v}(M,N) \prod_{d\in\mathbb{D}} (x_d + y_d)^{m_d(\mathbf{N})-m_d(\mathbf{v})} (y_d - x_d)^{m_d(\mathbf{v})}.
\end{equation*}

Finally, recognizing this form as the original multivariate unitary weight enumerator evaluated at a transformed set of variables, we arrive at the identity
\begin{equation}\label{eq:SMN=AdashMN}
    S_{MN}\prt{\claud{x_d}_{d\in\mathbb{D}},\claud{y_d}_{d\in\mathbb{D}}} = A'_{MN}\prt{\claud{x_d+y_d}_{d\in\mathbb{D}},\claud{y_d-x_d}_{d\in\mathbb{D}}}.
\end{equation}
    Combining these results yields
    \begin{align*}
    S_{MN}\prt{\claud{x_d}_{d\in\mathbb{D}}, \claud{y_d}_{d\in\mathbb{D}}} &= A'_{MN}\prt{\claud{x_d+y_d}_{d\in\mathbb{D}}, \claud{y_d-x_d}_{d\in\mathbb{D}}} \\
    &= A_{MN}\prt{\claud{\frac{(d-1)x_d+(d+1)y_d}{d}}_{d\in\mathbb{D}}, \claud{\frac{y_d-x_d}{d}}_{d\in\mathbb{D}}},
\end{align*}
    where we have used Eq. \eqref{eq:SMN=AdashMN} and Eq. \eqref{eq:A=Adash_polynomial}.
\end{proof}
\subsection{Linear transformations}
Notice that Eq. \eqref{eq:AdashW=sumAV} reflects how the unitary coefficients $A'_\mathbf{w}$ can be expressed as a linear combination of Shor-Laflamme multiset coefficients $A_\mathbf{v}$. A direct application of the mixed-dimensional quantum MacWilliams identity in Theorem \ref{thm:mixed-dimensionalMacWilliams} allows us invert this relation to compute Shor-Laflamme coefficients given the unitary ones.
\begin{lemma}\label{lemma:inversion}
For any dimension multiset $\mathbf{w} \subseteq \mathbf{N}$, the Shor-Laflamme multiset coefficients $A_{\mathbf{w}}$ can be explicitly computed from the unitary multiset coefficients $A'_{\mathbf{v}}$ as
\begin{equation*}
    A_{\mathbf{w}} (M,N)= \sum_{\mathbf{v} \subseteq \mathbf{w}} (-1)^{|\mathbf{w}| - |\mathbf{v}|} \corch{\prod_{d \in \mathbb{D}} \binom{m_d(\mathbf{N}) - m_d(\mathbf{v})}{m_d(\mathbf{w}) - m_d(\mathbf{v})} d^{m_d(\mathbf{v})}} A'_{\mathbf{v}}(M,N).
\end{equation*}
\end{lemma}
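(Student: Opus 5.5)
The plan is to start from Eq. \eqref{eq:A=Adash_polynomial}, established in the proof of Theorem \ref{thm:mixed-dimensionalMacWilliams}:
\begin{equation*}
A_{MN}(\vec{x},\vec{y})=A'_{MN}\prt{\claud{x_d-y_d}_{d\in\mathbb{D}},\claud{d\,y_d}_{d\in\mathbb{D}}}.
\end{equation*}
I will then extract coefficients of monomials on both sides. The left-hand side has the coefficient $A_{\mathbf{w}}(M,N)$ in front of $\prod_{d}x_d^{m_d(\mathbf{N})-m_d(\mathbf{w})}y_d^{m_d(\mathbf{w})}$. The monomials for distinct multisets $\mathbf{w}$ are distinct, since the exponents of $y_d$ determine $\mathbf{w}$. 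So it suffices to expand the right-hand side in this monomial basis and read off the coefficient.

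To expand, I write the right-hand side as
\begin{equation*}
\sum_{\mathbf{v}\subseteq\mathbf{N}}A'_{\mathbf{v}}(M,N)\prod_{d\in\mathbb{D}}(x_d-y_d)^{m_d(\mathbf{N})-m_d(\mathbf{v})}\,d^{m_d(\mathbf{v})}y_d^{m_d(\mathbf{v})}.
\end{equation*}
For each $d$, I apply the binomial theorem to $(x_d-y_d)^{m_d(\mathbf{N})-m_d(\mathbf{v})}$, summing over $k_d$, the number of $y_d$ factors taken. This produces $\binom{m_d(\mathbf{N})-m_d(\mathbf{v})}{k_d}(-1)^{k_d}x_d^{m_d(\mathbf{N})-m_d(\mathbf{v})-k_d}y_d^{m_d(\mathbf{v})+k_d}$.

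Next I reindex by setting $m_d(\mathbf{w})=m_d(\mathbf{v})+k_d$, which ranges from $m_d(\mathbf{v})$ to $m_d(\mathbf{N})$. Because the choices for different $d$ are independent, the product over $d$ of these sums is the same as a single sum over multisets $\mathbf{w}$ with $\mathbf{v}\subseteq\mathbf{w}\subseteq\mathbf{N}$. This is the same decoupling-across-dimensions argument used in the proof of Theorem \ref{thm:mixed-dimensionalMacWilliams}, run in reverse.

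The exponent of $x_d$ becomes $m_d(\mathbf{N})-m_d(\mathbf{w})$, so the monomial has the required form. The binomial becomes $\binom{m_d(\mathbf{N})-m_d(\mathbf{v})}{m_d(\mathbf{w})-m_d(\mathbf{v})}$. The signs combine as $\prod_d(-1)^{m_d(\mathbf{w})-m_d(\mathbf{v})}=(-1)^{|\mathbf{w}|-|\mathbf{v}|}$, since $|\mathbf{u}|=\sum_d m_d(\mathbf{u})$.

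Finally I swap the order of the $\mathbf{v}$ and $\mathbf{w}$ summations. For fixed $\mathbf{w}$, the sum runs over $\mathbf{v}\subseteq\mathbf{w}$. Comparing coefficients then gives exactly the claimed formula, with the factor $\prod_d d^{m_d(\mathbf{v})}$ coming from the substitution $y_d\mapsto d\,y_d$.

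I expect the only delicate point to be justifying the coefficient comparison. The right-hand side is a genuine polynomial identity in independent formal variables, so equality of polynomials gives equality of coefficients, using the bijection between multisets $\mathbf{w}\subseteq\mathbf{N}$ and exponent vectors $(m_d(\mathbf{w}))_d$. I also need to be careful that the sign exponent is the total cardinality difference rather than something dimension-weighted. Beyond that, the argument is routine bookkeeping.
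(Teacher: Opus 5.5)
Your proposal is correct and follows essentially the same route as the paper: start from Eq.~\eqref{eq:A=Adash_polynomial}, expand $(x_d-y_d)^{m_d(\mathbf{N})-m_d(\mathbf{v})}$ by the binomial theorem, reindex via $m_d(\mathbf{w})=m_d(\mathbf{v})+\alpha$, and compare coefficients, with the sign collected as $(-1)^{|\mathbf{w}|-|\mathbf{v}|}$. Your remark that distinct multisets $\mathbf{w}$ give distinct monomials, which is what justifies comparing coefficients, makes explicit a point the paper leaves implicit.
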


\begin{proof}
From Eq. \eqref{eq:A=Adash_polynomial}, expanding both sides using the definitions of the multivariate polynomials we have
\begin{equation*}
    \sum_{\mathbf{v}\subseteq \mathbf{N}}A'_\mathbf{v}(M,N)\prod_{d\in\mathbb{D}}\prt{x_d-y_d}^{m_d(\mathbf{N})-m_d(\mathbf{v})}\prt{dy_d}^{m_d(\mathbf{v})}=\sum_{\mathbf{w}\subseteq \mathbf{N}}A_\mathbf{w}(M,N)\prod_{d\in  \mathbb{D}}x_d^{m_d(\mathbf{N})-m_d(\mathbf{w})}y_d^{m_d(\mathbf{w})}.
\end{equation*}
To find $A_{\mathbf{w}}(M,N)$, we identify the coefficient of the monomial $\prod_{d\in\mathbb{D}}x_d^{m_d(\mathbf{N})-m_d(\mathbf{w})}y_d^{m_d(\mathbf{w})}$ on the left-hand side. Expanding the term $(x_d - y_d)^{m_d(\mathbf{N}) - m_d(\mathbf{v})}$ using the binomial theorem we obtain
\begin{align*}
    \text{LHS} &= \sum_{\mathbf{v}\subseteq \mathbf{N}}A'_\mathbf{v}(M,N)\prod_{d\in\mathbb{D}} \corch{ d^{m_d(\mathbf{v})}y_d^{m_d(\mathbf{v})} \sum_{\alpha=0}^{m_d(\mathbf{N})-m_d(\mathbf{v})} \binom{m_d(\mathbf{N})-m_d(\mathbf{v})}{\alpha} x_d^{m_d(\mathbf{N})-m_d(\mathbf{v})-\alpha} (-y_d)^\alpha } \\
    &= \sum_{\mathbf{v}\subseteq \mathbf{N}}A'_\mathbf{v}(M,N)\prod_{d\in\mathbb{D}} \corch{ \sum_{\alpha=0}^{m_d(\mathbf{N})-m_d(\mathbf{v})} \binom{m_d(\mathbf{N})-m_d(\mathbf{v})}{\alpha} (-1)^\alpha d^{m_d(\mathbf{v})} x_d^{m_d(\mathbf{N})-(m_d(\mathbf{v})+\alpha)} y_d^{m_d(\mathbf{v})+\alpha} }.
\end{align*}
We now equate powers of $y_d$ by setting $m_d(\mathbf{w}) = m_d(\mathbf{v}) + \alpha$, which implies $\alpha = m_d(\mathbf{w}) - m_d(\mathbf{v})$. Since $\alpha \ge 0$, this requires $m_d(\mathbf{w}) \ge m_d(\mathbf{v})$, corresponding to the multiset inclusion $\mathbf{v} \subseteq \mathbf{w}$. Substituting $\alpha$ we extract the coefficient
\begin{equation*}
    A_{\mathbf{w}}(M,N) = \sum_{\mathbf{v} \subseteq \mathbf{w}} A'_{\mathbf{v}}(M,N) \prod_{d\in\mathbb{D}} \binom{m_d(\mathbf{N}) - m_d(\mathbf{v})}{m_d(\mathbf{w}) - m_d(\mathbf{v})} (-1)^{m_d(\mathbf{w}) - m_d(\mathbf{v})} d^{m_d(\mathbf{v})}.
\end{equation*}
Noticing that $\sum_{d \in \mathbb{D}} (m_d(\mathbf{w}) - m_d(\mathbf{v})) = |\mathbf{w}| - |\mathbf{v}|$ and factoring the global sign $(-1)^{|\mathbf{w}| - |\mathbf{v}|}$ out of the product completes the proof.
\end{proof}
Building on these results, we can now use the MacWilliams identity from Theorem \ref{thm:mixed-dimensionalMacWilliams} and shadow identity from Theorem \ref{thm:shadow_identity} to express both the multiset dual and shadow coefficients explicitly as linear combinations of the primal Shor-Laflamme multiset coefficients via generalized Krawtchouk polynomials.
\begin{proposition}\label{prop:Bs_Ss_from_AsKrawtchouk}
    Given $\mathbf{v}\subseteq \mathbf{N}$ the following relations for the dual and shadow multiset coefficients hold
    \begin{align}
        B_{\mathbf{v}}(M,N) &= \prt{\dim [n]}^{-1} \sum_{\mathbf{w}\subseteq \mathbf{N}} A_\mathbf{w} (M,N) \prod_{d\in\mathbb{D}} \tilde{K}_{m_d(\mathbf{v})}\prt{m_d(\mathbf{w});m_d(\mathbf{N}),1,d^2-1}\label{eq:B=A_linear},\\
        S_{\mathbf{v}}(M,N) &= \prt{\dim [n]}^{-1} \sum_{\mathbf{w}\subseteq \mathbf{N}} (-1)^{|\mathbf{w}|}A_\mathbf{w}(M,N) \prod_{d\in\mathbb{D}} \tilde{K}_{m_d(\mathbf{v})}\prt{m_d(\mathbf{w});m_d(\mathbf{N}),d-1,d+1}\label{eq:S=A_linear},
    \end{align}
    where $\tilde{K}_j\prt{k;n,\gamma,\delta}$ are the generalized Krawtchouk polynomials defined as 
    \begin{equation*}
\tilde{K}_j\prt{k;n,\gamma,\delta}=\sum_{\alpha}\prt{-1}^\alpha \binom{n-k}{j-\alpha}\binom{k}{\alpha}\gamma^{\corch{(n-k)-(j-\alpha)}}\delta^{(j-\alpha)}.
    \end{equation*}
\end{proposition}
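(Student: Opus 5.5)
The plan is to read off both relations as coefficient extractions from the polynomial identities already proved: Theorem \ref{thm:mixed-dimensionalMacWilliams} for $B_{\mathbf{v}}$ and Theorem \ref{thm:shadow_identity} for $S_{\mathbf{v}}$. No new operator-theoretic input is needed; the argument is purely algebraic.

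For the dual coefficients, I first rewrite Theorem \ref{thm:mixed-dimensionalMacWilliams} so that $B_{MN}$ appears on the left. Consider, for each $d\in\mathbb{D}$, the linear substitution
\begin{equation*}
T_d:(x_d,y_d)\mapsto\prt{\frac{x_d+(d^2-1)y_d}{d},\frac{x_d-y_d}{d}}.
\end{equation*}
A direct check shows that $T_d$ is an involution. Indeed, applying it twice gives $\frac{1}{d^2}\prt{x_d+(d^2-1)y_d+(d^2-1)(x_d-y_d)}=x_d$ in the first slot and $\frac{1}{d^2}\cdot d^2y_d=y_d$ in the second. Since the identity $A_{MN}=B_{MN}\circ T$ holds as a polynomial identity, substituting $T(\vec x,\vec y)$ for $(\vec x,\vec y)$ yields
\begin{equation*}
B_{MN}(\vec x,\vec y)=A_{MN}\prt{\claud{\tfrac{x_d+(d^2-1)y_d}{d}}_{d\in\mathbb{D}},\claud{\tfrac{x_d-y_d}{d}}_{d\in\mathbb{D}}}.
\end{equation*}
I then expand the right-hand side as
\begin{equation*}
\sum_{\mathbf{w}}A_{\mathbf{w}}\prod_d d^{-m_d(\mathbf{N})}\prt{x_d+(d^2-1)y_d}^{m_d(\mathbf{N})-m_d(\mathbf{w})}\prt{x_d-y_d}^{m_d(\mathbf{w})}.
\end{equation*}
The prefactor $\prod_d d^{-m_d(\mathbf{N})}$ equals $(\dim[n])^{-1}$. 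Finally, I extract the coefficient of $\prod_d x_d^{m_d(\mathbf{N})-m_d(\mathbf{v})}y_d^{m_d(\mathbf{v})}$. As in Lemma \ref{lemma:inversion}, this coefficient factorizes over $d$ because the variables for different dimensions are independent.

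For a fixed $d$, write $n=m_d(\mathbf{N})$, $k=m_d(\mathbf{w})$ and $j=m_d(\mathbf{v})$. To get $y_d^j$, I take $\alpha$ factors of $y_d$ from $(x_d-y_d)^k$, which contributes $\binom{k}{\alpha}(-1)^\alpha$. The remaining $j-\alpha$ factors of $y_d$ come from $(x_d+(d^2-1)y_d)^{n-k}$, which contributes $\binom{n-k}{j-\alpha}(d^2-1)^{j-\alpha}\cdot 1^{(n-k)-(j-\alpha)}$. Summing over $\alpha$ gives exactly $\tilde K_j(k;n,1,d^2-1)$, which is Eq. \eqref{eq:B=A_linear}.

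For the shadow coefficients, I start directly from Theorem \ref{thm:shadow_identity}. Its right-hand side is
\begin{equation*}
\sum_{\mathbf{w}}A_{\mathbf{w}}\prod_d d^{-m_d(\mathbf{N})}\prt{(d-1)x_d+(d+1)y_d}^{m_d(\mathbf{N})-m_d(\mathbf{w})}\prt{y_d-x_d}^{m_d(\mathbf{w})}.
\end{equation*}
I write $(y_d-x_d)^{m_d(\mathbf{w})}=(-1)^{m_d(\mathbf{w})}(x_d-y_d)^{m_d(\mathbf{w})}$. Collecting these signs over $d$ produces $(-1)^{|\mathbf{w}|}$, and the prefactor again gives $(\dim[n])^{-1}$. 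The same coefficient extraction then yields, per dimension,
\begin{equation*}
\sum_\alpha(-1)^\alpha\binom{k}{\alpha}\binom{n-k}{j-\alpha}(d-1)^{(n-k)-(j-\alpha)}(d+1)^{j-\alpha}=\tilde K_j(k;n,d-1,d+1),
\end{equation*}
which gives Eq. \eqref{eq:S=A_linear}.

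The only step needing care is justifying $B_{MN}=A_{MN}\circ T$. I handle it via the involution property of $T_d$ together with the fact that the identity in Theorem \ref{thm:mixed-dimensionalMacWilliams} holds for arbitrary formal variables, so substituting the transformed variables is legitimate. The remaining work is bookkeeping: matching the exponent of $\gamma$ with $(n-k)-(j-\alpha)$, and noting that binomials with out-of-range arguments vanish, so $\alpha$ may range over all integers.
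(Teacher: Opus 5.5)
Your proposal is correct and follows essentially the same route as the paper: invert the MacWilliams identity to get $B_{MN}=A_{MN}\circ T$, expand, and match monomial coefficients factor by factor over $d\in\mathbb{D}$; then repeat with the shadow identity after pulling out $(-1)^{|\mathbf{w}|}$. The differences are cosmetic. You explicitly verify that each $T_d$ is an involution, where the paper only appeals to the symmetry of the MacWilliams proof. You also compute the per-dimension coefficient directly as a binomial convolution instead of citing the generating-function identity $\sum_j \tilde K_j(k;n,\gamma,\delta)x^{n-j}y^j=(\gamma x+\delta y)^{n-k}(x-y)^k$.
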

\begin{proof}
We start by proving Eq. \eqref{eq:B=A_linear}. Notice that, from the symmetry of the quantum MacWilliams identity proof we can invert the relation and state that 
\begin{equation*}
B_{MN}\prt{\claud{x_d}_{d\in\mathbb{D}},\claud{y_d}_{d\in\mathbb{D}}}=A_{MN}\prt{\claud{\frac{x_d+(d^2-1)y_d}{d}}_{d\in\mathbb{D}},\claud{\frac{x_d-y_d}{d}}_{d\in\mathbb{D}}}.
\end{equation*}
We will drop the $(M,N)$ notation for practical reasons. Expanding both sides according to Definition \ref{def:ShorLaflamme_unitary_weight_enumerators} and developing we obtain
\begin{align*}
\sum_{\mathbf{v} \subseteq \mathbf{N}}B_\mathbf{v}\prod_{d\in\mathbb{D}}x_d^{m_d(\mathbf{N})-m_d(\mathbf{v})}y_d^{m_d(\mathbf{v})}&=\sum_{\mathbf{w} \subseteq \mathbf{N}}A_\mathbf{w}\prod_{d\in\mathbb{D}}\prt{\frac{x_d+(d^2-1)y_d}{d}}^{m_d(\mathbf{N})-m_d(\mathbf{w})}\prt{\frac{x_d-y_d}{d}}^{m_d(\mathbf{w})}\\
&=\sum_{\mathbf{w} \subseteq \mathbf{N}}A_\mathbf{w}\prod_{d\in\mathbb{D}}\frac{1}{d^{m_d(\mathbf{N})}}\prt{x_d+(d^2-1)y_d}^{m_d(\mathbf{N})-m_d(\mathbf{w})}\prt{x_d-y_d}^{m_d(\mathbf{w})}\\
&=\frac{1}{\dim[n]}\sum_{\mathbf{w} \subseteq \mathbf{N}}A_\mathbf{w}\prod_{d\in\mathbb{D}}\corch{\sum_{j_d=0}^{m_d(\mathbf{N})} \tilde{K}_{j_d}\prt{m_d(\mathbf{w});m_d(\mathbf{N}),1,d^2-1}x_d^{m_d(\mathbf{N})-j_d}y_d^{j_d}},
\end{align*}
where on the last equality we have used the property $\sum_{j}\tilde{K}_j\prt{k;n,\gamma,\delta}x^{n-j}y^j=\prt{\gamma x+\delta y}^{n-k}\prt{x-y}^k$ derived in \cite{HuberEltschkaSiewertGuhne2018}. 

The expression can be further simplified by expanding the product of sums. This yields a multiple sum over all possible combinations of indices $j_d$ for each $d \in \mathbb{D}$, i.e.,
\begin{align*}
\text{RHS}&=\prt{\dim [n]}^{-1}\sum_{\mathbf{w} \subseteq \mathbf{N}}A_\mathbf{w} \corch{ \sum_{j_{d_1}=0}^{m_{d_1}(\mathbf{N})} \dots \sum_{j_{d_{|\mathbb{D}|}}=0}^{m_{d_{|\mathbb{D}|}}(\mathbf{N})} \prod_{d\in\mathbb{D}} \tilde{K}_{j_d}\prt{m_d(\mathbf{w});m_d(\mathbf{N}),1,d^2-1}x_d^{m_d(\mathbf{N})-j_d}y_d^{j_d} }.
\end{align*}
Notice that summing over all valid tuples of multiplicities $(j_{d_1}, \dots, j_{d_{|\mathbb{D}|}})$ is equivalent to summing over all possible sub-multisets $\mathbf{u} \subseteq \mathbf{N}$, where we identify $j_d = m_d(\mathbf{u})$ for all $d \in \mathbb{D}$. Applying this substitution and swapping the order of summation yields
\begin{align*}
\text{RHS}&=\prt{\dim [n]}^{-1}\sum_{\mathbf{w} \subseteq \mathbf{N}}A_\mathbf{w}\sum_{\mathbf{u}\subseteq \mathbf{N}}\corch{\prod_{d\in\mathbb{D}} \tilde{K}_{m_d(\mathbf{u})}\prt{m_d(\mathbf{w});m_d(\mathbf{N}),1,d^2-1}x_d^{m_d(\mathbf{N})-m_d(\mathbf{u})}y_d^{m_d(\mathbf{u})}}\\
&=\prt{\dim [n]}^{-1}\sum_{\mathbf{u\subseteq N}}\corch{\sum_{\mathbf{w}\subseteq \mathbf{N}}A_\mathbf{w}\prod_{d\in\mathbb{D}} \tilde{K}_{m_d(\mathbf{u})}\prt{m_d(\mathbf{w});m_d(\mathbf{N}),1,d^2-1}}\prod_{d\in \mathbb{D}}x_d^{m_d(\mathbf{N})-m_d(\mathbf{u})}y_d^{m_d(\mathbf{u})}.
\end{align*}
By equating the coefficients of the monomials $\prod_{d} x_d^{m_d(\mathbf{N})-m_d(\mathbf{v})} y_d^{m_d(\mathbf{v})}$ on both sides, we obtain the explicit linear relation
\begin{equation*}
B_{\mathbf{v}} = \prt{\dim [n]}^{-1} \sum_{\mathbf{w}\subseteq \mathbf{N}} A_\mathbf{w} \prod_{d\in\mathbb{D}} \tilde{K}_{m_d(\mathbf{v})}\prt{m_d(\mathbf{w});m_d(\mathbf{N}),1,d^2-1}.
\end{equation*}
Now we prove Eq. \eqref{eq:S=A_linear} by proceeding analogously:
\begin{align*}
\sum_{\mathbf{v} \subseteq \mathbf{N}}S_\mathbf{v}\prod_{d\in\mathbb{D}}x_d^{m_d(\mathbf{N})-m_d(\mathbf{v})}y_d^{m_d(\mathbf{v})}&=\sum_{\mathbf{w} \subseteq \mathbf{N}}A_\mathbf{w}\prod_{d\in\mathbb{D}}\prt{\frac{(d-1)x_d+(d+1)y_d}{d}}^{m_d(\mathbf{N})-m_d(\mathbf{w})}\prt{\frac{y_d-x_d}{d}}^{m_d(\mathbf{w})}\\
&=\sum_{\mathbf{w} \subseteq \mathbf{N}}A_\mathbf{w}\prod_{d\in\mathbb{D}}\frac{(-1)^{m_d(\mathbf{w})}}{d^{m_d(\mathbf{N})}}\prt{(d-1)x_d+(d+1)y_d}^{m_d(\mathbf{N})-m_d(\mathbf{w})}\prt{x_d-y_d}^{m_d(\mathbf{w})},
\end{align*}
where we have factored out a $(-1)$ to match the polynomial identity form $(x_d - y_d)$. Swapping the sum and the product and taking into account that $\prod_{d\in\mathbb{D}}(-1)^{m_d(\mathbf{w})}=(-1)^{\sum_{d\in\mathbb{D}} m_d(\mathbf{w})}=(-1)^{|\mathbf{w}|}$ we obtain
\begin{align*}
\text{RHS}&=\prt{\dim [n]}^{-1}\sum_{\mathbf{w} \subseteq \mathbf{N}}(-1)^{|\mathbf{w}|}A_\mathbf{w}\prod_{d\in\mathbb{D}}\corch{\sum_{j_d=0}^{m_d(\mathbf{N})} \tilde{K}_{j_d}\prt{m_d(\mathbf{w});m_d(\mathbf{N}),d-1,d+1}x_d^{m_d(\mathbf{N})-j_d}y_d^{j_d}}\\
&=\prt{\dim [n]}^{-1}\sum_{\mathbf{u\subseteq N}}\corch{\sum_{\mathbf{w}\subseteq \mathbf{N}}A_\mathbf{w}(-1)^{|\mathbf{w}|}\prod_{d\in\mathbb{D}} \tilde{K}_{m_d(\mathbf{u})}\prt{m_d(\mathbf{w});m_d(\mathbf{N}),d-1,d+1}}\prod_{d\in \mathbb{D}}x_d^{m_d(\mathbf{N})-m_d(\mathbf{u})}y_d^{m_d(\mathbf{u})}.
\end{align*}
As before, equating the coefficients proves the claim.
\end{proof}
\section{Bounds on mixed-dimensional codes}\label{sec:bounds_on_mixed-dimensional_codes}
The enumerator identities derived in the previous section provide the algebraic foundation for determining the physical limits of quantum codes. In this section, we derive generalized versions of the linear program in \cite{HuberEltschkaSiewertGuhne2018} and the Hamming and Singleton bounds, showing how the multiset approach accounts for the non-uniform resource costs of heterogeneous systems.
\subsection{Linear program}

Following the methodology established in \cite{HuberEltschkaSiewertGuhne2018}, we can formulate a linear program to systematically rule out the existence of quantum codes with specific parameters. To be physically valid, the Shor-Laflamme multiset coefficients of any hypothetical code must satisfy the general positivity and minimum distance constraints outlined in Proposition \ref{prop:coeff_properties}, alongside the structural symmetries imposed by the mixed-dimensional MacWilliams identity (Theorem \ref{thm:mixed-dimensionalMacWilliams}) and the mixed-dimensional shadow identity (Theorem \ref{thm:shadow_identity}). By leveraging Proposition \ref{prop:Bs_Ss_from_AsKrawtchouk}, these polynomial identities can be cast directly as linear constraints on the primal coefficients $A_{\mathbf{w}}(\Pi_\mathcal{Q})$. Combining these elements yields the central feasibility theorem of this section.

\begin{theorem}\label{thm:LP}
    Given the parameters $((D_1,D_2,\dots, D_n), K, D)$ of a hypothetical quantum error-correcting code, if there is no set of valid coefficients $\{A_{\mathbf{v}}\}_{\mathbf{v}\subseteq \mathbf{N}}$ such that 
    \begin{equation}
    \begin{aligned}
        A_{\emptyset} &= KB_{\emptyset} = K^2, \\
        A_\mathbf{v} &= KB_{\mathbf{v}} \ge 0, \quad &&\forall \mathbf{v}\subseteq \mathbf{N}: \quad 1 < \dim{\mathbf{v}} < D,\\
        KB_{\mathbf{v}} &\ge A_\mathbf{v} \ge 0, \quad &&\forall \mathbf{v}\subseteq \mathbf{N}: \quad \dim{\mathbf{v}} \ge D,\\
        S_\mathbf{v} &\ge 0, \quad &&\forall \mathbf{v}\subseteq \mathbf{N},
    \end{aligned}
    \end{equation}
    where the dual coefficients $B_{\mathbf{v}}$ and shadow  coefficients $S_{\mathbf{v}}$ are strictly defined by the linear transformations
    \begin{equation*}
    \begin{split}
        B_{\mathbf{v}} &= \prt{\dim [n]}^{-1} \sum_{\mathbf{w}\subseteq \mathbf{N}} A_\mathbf{w} \prod_{d\in\mathbb{D}} \tilde{K}_{m_d(\mathbf{v})}\prt{m_d(\mathbf{w});m_d(\mathbf{N}),1,d^2-1},\\
        S_{\mathbf{v}} &= \prt{\dim [n]}^{-1} \sum_{\mathbf{w}\subseteq \mathbf{N}} (-1)^{|\mathbf{w}|}A_\mathbf{w} \prod_{d\in\mathbb{D}} \tilde{K}_{m_d(\mathbf{v})}\prt{m_d(\mathbf{w});m_d(\mathbf{N}),d-1,d+1},
    \end{split}
    \end{equation*}
    then the code does not exist.
\end{theorem}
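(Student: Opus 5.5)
The proof goes by contraposition. Suppose a code $\mathcal{Q}$ with parameters $((D_1,\dots,D_n),K,D)$ exists and let $\Pi_\mathcal{Q}$ be its projector. We will show that the numbers $A_\mathbf{v}:=A_\mathbf{v}(\Pi_\mathcal{Q})$, $\mathbf{v}\subseteq\mathbf{N}$, form a feasible point of the system. This contradicts the hypothesis that no valid set of coefficients exists. The only work is to check each constraint against a result already established.

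First, the normalisation. The proof of Proposition \ref{prop:coeff_properties} computes $A_\emptyset(\Pi_\mathcal{Q})=K^2$ and $B_\emptyset(\Pi_\mathcal{Q})=K$, hence $A_\emptyset=KB_\emptyset=K^2$. Second, the same proposition gives $KB_\mathbf{v}(\Pi_\mathcal{Q})\ge A_\mathbf{v}(\Pi_\mathcal{Q})\ge 0$ for every $\mathbf{v}$. This covers the constraint for $\dim\mathbf{v}\ge D$. It also gives nonnegativity in the middle range.

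The equality $A_\mathbf{v}=KB_\mathbf{v}$ for $1<\dim\mathbf{v}<D$ comes from the ``only if'' direction of Proposition \ref{prop:coeff_properties}, which applies because $\mathcal{Q}$ has dimensional minimum distance $D$. Note that $\dim\mathbf{v}=1$ holds exactly when $\mathbf{v}=\emptyset$, since every local dimension is at least $2$. So the first two lines of the system together exhaust all $\mathbf{v}$ with $\dim\mathbf{v}<D$.

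Next, we must check that the $B_\mathbf{v}$ appearing in these constraints, which the theorem \emph{defines} through the Krawtchouk transform of the $A_\mathbf{w}$, coincide with the true dual coefficients $B_\mathbf{v}(\Pi_\mathcal{Q})$. This is exactly Eq. \eqref{eq:B=A_linear} of Proposition \ref{prop:Bs_Ss_from_AsKrawtchouk}, applied with $M=N=\Pi_\mathcal{Q}$. Likewise, Eq. \eqref{eq:S=A_linear} identifies the linearly defined $S_\mathbf{v}$ with the true shadow coefficients $S_\mathbf{v}(\Pi_\mathcal{Q},\Pi_\mathcal{Q})$.

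It remains to show $S_\mathbf{v}\ge 0$. The projector $\Pi_\mathcal{Q}$ is positive semidefinite, so Rains' inequality \eqref{eq:generalized_shadow_inequality}, applied with $T$ replaced by $T^c$, makes each inner sum in definition \eqref{eq:shadow_coeff} nonnegative. Summing these over all $T$ with $\mathcal{D}(T)=\mathbf{v}$ then gives $S_\mathbf{v}(\Pi_\mathcal{Q})\ge 0$.

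All constraints therefore hold, and the existence of the code would produce a feasible solution. Contrapositively, if no feasible solution exists, the code does not exist.

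The only delicate point is the identification of the linearly defined $B_\mathbf{v}$ and $S_\mathbf{v}$ with the actual operator quantities. Proposition \ref{prop:Bs_Ss_from_AsKrawtchouk} was derived for arbitrary Hermitian $M,N$, so specialising to $M=N=\Pi_\mathcal{Q}$ is legitimate and nothing new needs to be proved.
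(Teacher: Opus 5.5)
Your proof is correct and follows essentially the paper's route. The paper gives no separate proof of Theorem~\ref{thm:LP}; it justifies it only by the paragraph preceding it. That paragraph says the true coefficients $A_{\mathbf{v}}(\Pi_\mathcal{Q})$ satisfy Proposition~\ref{prop:coeff_properties}, and that Proposition~\ref{prop:Bs_Ss_from_AsKrawtchouk} turns the MacWilliams and shadow identities into linear constraints. The remark after the definition of $S_{\mathbf{w}}$, via Rains' inequality for positive semidefinite $\Pi_\mathcal{Q}$, supplies $S_{\mathbf{v}}\ge 0$. Your write-up makes this contrapositive argument explicit.
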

If we restrict our search specifically to pure quantum error-correcting codes, the conditions become even tighter. In a pure code, the scalar product of any non-trivial detectable error with the code projector strictly vanishes (see Eq. \eqref{eq:knill-laflamme_conditions}). This immediately forces the corresponding multiset coefficients to zero, leading to the following corollary.

\begin{corollary}\label{cor:pure_LP}
    If the hypothetical quantum error-correcting code in Theorem \ref{thm:LP} is assumed to be pure, the linear program is further constrained by the strict condition:
    \begin{equation}
        A_\mathbf{v} = KB_{\mathbf{v}} = 0, \quad \forall \mathbf{v}\subseteq \mathbf{N}: \quad 1 < \dim{\mathbf{v}} < D.
    \end{equation}
    If no valid set of coefficients $\{A_{\mathbf{v}}\}_{\mathbf{v}\subseteq \mathbf{N}}$ satisfies the general constraints of Theorem \ref{thm:LP} alongside this purity constraint, then a pure code with parameters $((D_1,D_2,\dots, D_n), K, D)$ does not exist.
\end{corollary}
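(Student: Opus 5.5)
The corollary follows from Theorem \ref{thm:LP} once we show that every pure code has Shor-Laflamme multiset coefficients satisfying the extra constraint $A_\mathbf{v}=KB_\mathbf{v}=0$ for $1<\dim\mathbf{v}<D$. The plan is a contrapositive argument. Suppose a pure code $\mathcal{Q}$ with parameters $((D_1,\dots,D_n),K,D)$ exists. Then the coefficients $A_\mathbf{v}(\Pi_\mathcal{Q})$ form a feasible point of the enlarged program, which contradicts the assumed infeasibility.

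First, $\mathcal{Q}$ is in particular a QECC with these parameters. So, as in the argument underlying Theorem \ref{thm:LP}, its coefficients satisfy all the general constraints:
\begin{itemize}
\item $A_\emptyset=KB_\emptyset=K^2$ and the inequalities $KB_\mathbf{v}\ge A_\mathbf{v}\ge 0$, by Proposition \ref{prop:coeff_properties};
\item $A_\mathbf{v}=KB_\mathbf{v}$ for $\dim\mathbf{v}<D$, by the same proposition;
\item $S_\mathbf{v}\ge0$, by Eq. \eqref{eq:generalized_shadow_inequality} applied with $M=N=\Pi_\mathcal{Q}$ (positive semidefinite);
\item the expressions for $B_\mathbf{v}$ and $S_\mathbf{v}$ in terms of the $A_\mathbf{w}$, by Proposition \ref{prop:Bs_Ss_from_AsKrawtchouk}.
\end{itemize}

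Next, we verify the purity constraint. Take $\mathbf{v}$ with $1<\dim\mathbf{v}<D$, and let $E$ be any error with $\mathcal{D}(\supp{E})=\mathbf{v}$. Since $\dim\mathbf{v}>1$, the set $\supp{E}$ is nonempty, so $E\neq\mathds{1}$. Also $\dimwt{E}=\dim\mathbf{v}<D$, so $E$ is detectable. Purity gives $c_E=\tra{E}/\dim[n]=0$, because non-identity elements of the nice error basis are traceless. The Knill-Laflamme conditions \eqref{eq:knill-laflamme_conditions} then give $\bra{i_\mathcal{Q}}E\ket{j_\mathcal{Q}}=0$ for all $i,j$. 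Substituting into the explicit forms of $A_\mathbf{v}$ and $B_\mathbf{v}$ from the proof of Proposition \ref{prop:coeff_properties}, both sums vanish term by term. Hence $A_\mathbf{v}=0$ and $B_\mathbf{v}=0$, so $A_\mathbf{v}=KB_\mathbf{v}=0$.

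Therefore the coefficients of $\mathcal{Q}$ satisfy the constraints of Theorem \ref{thm:LP} together with the purity constraint. This contradicts the hypothesis that no such coefficient set exists, so no pure code with these parameters exists.

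There is no real obstacle here. The only point needing care is the edge case: the condition $\dim\mathbf{v}>1$ is exactly what excludes $\mathbf{v}=\emptyset$ (where $\dim\emptyset=1$), so the identity is never forced to zero. This matters because $A_\emptyset=K^2\neq0$.
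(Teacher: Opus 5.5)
Your proposal is correct and follows essentially the paper's own brief justification. Purity forces $c_E=0$ for every non-identity detectable error, so the Knill--Laflamme matrix elements vanish, which gives $A_\mathbf{v}=KB_\mathbf{v}=0$ for $1<\dim\mathbf{v}<D$; infeasibility of the enlarged program then rules out the pure code. The only cosmetic difference is that you get $B_\mathbf{v}=0$ directly from $\bra{i_\mathcal{Q}}E\ket{j_\mathcal{Q}}=0$, whereas the paper deduces it from $A_\mathbf{v}=0$ together with $A_\mathbf{v}=KB_\mathbf{v}$.
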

\subsection{Quantum Hamming bound}
Our multiset approach allows us to prove not only weight-related bounds but also geometric volume bounds, such as the quantum Hamming bound for mixed dimensions. To do so, we must distinguish between pure and non-degenerate codes. While a pure code requires the code projector to yield zero information about any detectable error algebraically meaning $c_E=0$ in the Knill-Laflamme conditions from Eq. \eqref{eq:knill-laflamme_conditions}, a code is said to be \textit{non-degenerate} with respect to a set of correctable errors if every error in that set maps the code subspace $\mathcal{Q}$ to a distinct, mutually orthogonal subspace within the total Hilbert space \cite{Gottesman2026}. This geometric property allows us to sum the dimensions of the error spaces to establish a strict upper bound. 

We adapt the proof from \cite{EkertMacchiavello96} to our mixed-dimensional requirements.
\begin{theorem}
    Let $\mathbb{H} = \bigotimes_{i=1}^n \mathbb{C}^{D_i}$ be a mixed-dimensional Hilbert space characterized by the total dimension multiset $\mathbf{N}$ and a set of distinct dimensions $\mathbb{D}$. Suppose a non-degenerate quantum error-correcting code encodes a $K$-dimensional subspace $\mathcal{Q} \subseteq \mathbb{H}$ with a dimensional minimum distance $D$. If the code is designed to correct all errors up to a dimensional weight threshold $T$, where $T^2 < D$, then the code dimension is bounded by
    \begin{equation*}
        K \sum_{\substack{\mathbf{v} \subseteq \mathbf{N} \\ \dim \mathbf{v} \le T}} \prod_{d \in \mathbb{D}} \binom{m_d(\mathbf{N})}{m_d(\mathbf{v})} (d^2 - 1)^{m_d(\mathbf{v})} \le \prod_{d \in \mathbb{D}} d^{m_d(\mathbf{N})}.
    \end{equation*}
\end{theorem}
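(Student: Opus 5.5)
The plan is the standard sphere-packing (volume) argument of \cite{EkertMacchiavello96}, with the scalar weight replaced by dimension multisets.

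\textbf{Step 1: the correctable set.} Let $\mathcal{E}_T=\{E\in\mathcal{E} : \dimwt{E}\le T\}$. Since $\dimwt{E}=\dim\mathcal{D}(\supp{E})$, this set splits into the classes $\{E : \mathcal{D}(\supp{E})=\mathbf{v}\}$ for $\dim\mathbf{v}\le T$. For $E_a,E_b\in\mathcal{E}_T$ we have $\supp{E_a^\dagger E_b}\subseteq\supp{E_a}\cup\supp{E_b}$, so
\[
\dimwt{E_a^\dagger E_b}\le \dimwt{E_a}\dimwt{E_b}\le T^2<D .
\]
Hence every product $E_a^\dagger E_b$ is detectable and $\mathcal{E}_T$ is a correctable set, as discussed after Eq.~\eqref{eq:knill-laflamme_conditions}.

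\textbf{Step 2: counting $\mathcal{E}_T$.} I count errors with a fixed dimension multiset $\mathbf{v}$. First choose the support $S$ with $\mathcal{D}(S)=\mathbf{v}$. Partitioning $[n]=\coprod_d[n]_d$ as in Lemma~\ref{lemma:counting_Cvw}, this means choosing $m_d(\mathbf{v})$ indices out of $m_d(\mathbf{N})$ for each $d$, giving $\prod_d\binom{m_d(\mathbf{N})}{m_d(\mathbf{v})}$ supports. (This is $C_{\emptyset,\mathbf{v}}$.) Next, for a fixed support $S$, each local factor $E_i$ with $i\in S$ must be one of the $D_i^2-1$ non-identity elements of $\mathcal{E}_i$. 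Here $|\mathcal{E}_i|=D_i^2$ because it is an orthogonal basis of the operators on $\mathbb{C}^{D_i}$. This gives $\prod_d (d^2-1)^{m_d(\mathbf{v})}$ choices. Multiplying and summing over $\mathbf{v}$ with $\dim\mathbf{v}\le T$ yields exactly the sum in the statement.

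\textbf{Step 3: orthogonality and dimension count.} By the non-degeneracy hypothesis, the subspaces $E\mathcal{Q}$ for $E\in\mathcal{E}_T$ are mutually orthogonal. Each has dimension $K$, because $E$ is a nonzero multiple of a unitary: $E^\dagger E\propto \mathds{1}$ for the nice error basis, as for the Weyl-Heisenberg operators. All these subspaces lie in $\mathbb{H}$, so
\[
K|\mathcal{E}_T|\le\dim[n]=\prod_d d^{m_d(\mathbf{N})} .
\]
Combined with Step 2, this is the claimed bound.

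\textbf{Main obstacle.} The delicate point is Step 3, namely that each $E\mathcal{Q}$ really has dimension $K$ and that the orthogonality is among distinct basis errors rather than arbitrary linear combinations. If unitarity of the basis elements is not taken as part of "nice error basis", I would instead argue through the Knill--Laflamme matrix. Consider the vectors $E\ket{i_\mathcal{Q}}$ for $E\in\mathcal{E}_T$ and $i\le K$. Their Gram matrix is
\[
\bra{i_\mathcal{Q}}E_a^\dagger E_b\ket{j_\mathcal{Q}}=\delta_{ij}c_{ab}.
\]
Non-degeneracy means $c_{ab}$ is diagonal with nonzero entries. The Gram matrix therefore has full rank $K|\mathcal{E}_T|$, which again forces $K|\mathcal{E}_T|\le\dim[n]$.
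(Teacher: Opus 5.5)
Your proposal is correct and follows the same sphere-packing argument as the paper. You count the errors with $\dim\mathbf{v}\le T$ as $C_{\emptyset,\mathbf{v}}\prod_d(d^2-1)^{m_d(\mathbf{v})}$, then use non-degeneracy to fit that many mutually orthogonal $K$-dimensional subspaces into $\mathbb{H}$. Your Step 1 and the Gram-matrix fallback in Step 3 (which only needs $c_{ab}$ to be nonsingular) supply two points the paper leaves implicit: that $\mathcal{E}_T$ is genuinely correctable, and that each $E\mathcal{Q}$ has dimension $K$.
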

 
\begin{proof}
    Let the quantum error-correcting code encode a $K$-dimensional subspace. For the code to successfully correct a set of errors in a non-degenerate manner, each correctable error must map the code subspace $\mathcal{Q}$ to a distinct, mutually orthogonal $K$-dimensional subspace within the total Hilbert space. 

    To establish the bound, we must count the total number of correctable errors. By assumption, the code corrects all errors whose dimension multiset $\mathbf{v}$ satisfies $\dim \mathbf{v} \le T$. For a fixed dimension multiset $\mathbf{v} \subseteq \mathbf{N}$, the number of valid subsystem supports is determined by $C_{\emptyset,\mathbf{v}}=\prod_{d\in \mathbb{D}}\binom{m_d(\mathbf{N})}{m_d(\mathbf{v})}$. 

    On each individual qudit of dimension $d$ within that chosen support, there are exactly $(d^2 - 1)$ non-trivial, linearly independent generalized Pauli errors. Therefore, the total number of distinct errors acting exactly on a subsystem with multiset $\mathbf{v}$ is given by
    \begin{equation*}
        \prod_{d \in \mathbb{D}} \binom{m_d(\mathbf{N})}{m_d(\mathbf{v})} (d^2 - 1)^{m_d(\mathbf{v})}.
    \end{equation*}

    The total number of correctable errors is the sum of these possibilities over all valid error profiles $\mathbf{v} \subseteq \mathbf{N}$ such that $\dim \mathbf{v} \le T$. Note that the ``no error'' case (the identity operator) corresponds to $\mathbf{v} = \emptyset$, which has dimension $\dim \emptyset = 1 \le T$, naturally evaluating to $1$ in our product.

    Because the code is non-degenerate, every single one of these errors requires its own dedicated $K$-dimensional orthogonal subspace. The sum of the dimensions of all these mutually orthogonal error subspaces cannot exceed the total available dimension of the Hilbert space, $\dim[n]$. Noticing that $\dim[n] = \prod_{d \in \mathbb{D}} d^{m_d(\mathbf{N})}$, we arrive at the inequality
    \begin{equation*}
        K \sum_{\substack{\mathbf{v} \subseteq \mathbf{N} \\ \dim \mathbf{v} \le T}} \prod_{d \in \mathbb{D}} \binom{m_d(\mathbf{N})}{m_d(\mathbf{v})} (d^2 - 1)^{m_d(\mathbf{v})} \le \prod_{d \in \mathbb{D}} d^{m_d(\mathbf{N})},
    \end{equation*}
    which completes the proof.
\end{proof}

\begin{example}\label{example:hamming22d}
    Let us evaluate the system composed of two qubits and one qudit of dimension $d \ge 5$, characterized by the total dimension multiset $\mathbf{N} = \{2, 2, d\}$ and a total physical dimension of $\dim[n] = 4d$. 
    
    Suppose we design a non-degenerate code with a dimensional minimum distance of $D=5$. For a set of errors to be strictly correctable, their maximum dimensional weight $T$ must satisfy $T^2 < D$. Since $T^2 < 5$ and $T$ must be an integer, the maximum correctable threshold is $T = 2$. Physically, this means the code can successfully correct a single qubit error, but it cannot reliably correct an error on the qudit.

    To find the maximum encodable dimension $K$, we evaluate the sum of the error volumes for all sub-multisets $\mathbf{v} \subseteq \mathbf{N}$ such that $\dim \mathbf{v} \le 2$:
    \begin{itemize}
        \item No errors ($\mathbf{v} = \emptyset$): The dimension is $\dim \emptyset = 1$. The volume is $\binom{2}{0}(2^2 - 1)^0 \times \binom{1}{0}(d^2 - 1)^0 = 1 \times 1 = 1$.
        \item One qubit error ($\mathbf{v} = \{2\}$): The dimension is $\dim \{2\} = 2$. The volume is $\binom{2}{1}(2^2 - 1)^1 \times \binom{1}{0}(d^2 - 1)^0 = 2 \times 3 \times 1 = 6.$
    \end{itemize}
    Any other error profile, such as an error on the qudit ($\mathbf{v} = \{d\}$) or on both qubits ($\mathbf{v} = \{2,2\}$), has a dimension of $4$ or higher and strictly exceeds the correctable threshold $T=2$.

    Summing these volumes, the total number of required orthogonal $K$-dimensional subspaces is $1 + 6 = 7$. Applying the Hamming bound inequality yields
    \begin{equation*}
        K \le \frac{4d}{7}.
    \end{equation*}
    Because the code dimension $K$ must be an integer, we conclude that $K \le \lfloor \frac{4d}{7} \rfloor$. 
\end{example}
\subsection{Quantum Singleton bound} 
This new approach allows us to prove other bounds such as the quantum Singleton bound for mixed-dimensional systems proven in \cite{BallZhang2026}. This proof is inspired by the Singleton bound proof given in \cite{HuberGrassl2020} for homogeneous systems.
\begin{theorem}\label{thm:singleton_bound}
If there exists a quantum error-correcting code defined over a system with total dimension multiset $\mathbf{N}$, code dimension $K$ and dimensional minimum distance $D$, then for any partition of the total multiset into three disjoint sub-multisets $\mathbf{N} = \mathbf{w}_1 \sqcup \mathbf{w}_2 \sqcup \mathbf{w}_3$ satisfying

\begin{equation*}
    \dim \mathbf{w}_1 < D \quad \text{and} \quad \dim \mathbf{w}_2 < D,
\end{equation*}
it must hold that
\begin{equation*}
    \dim \mathbf{w}_3 \ge K.
\end{equation*}
\end{theorem}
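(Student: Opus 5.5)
Our approach follows the homogeneous argument of \cite{HuberGrassl2020}, working with the calligraphic unitary coefficients from Definition \ref{def:calligraphic_coeff} on individual subsystems rather than the multiset sums. Fix the three sub-multisets and pick actual disjoint index sets $S_1,S_2,S_3\subseteq[n]$ with $\mathcal{D}(S_i)=\mathbf{w}_i$ and $S_1\sqcup S_2\sqcup S_3=[n]$. Since $\dim S_1<D$, every error supported inside $S_1$ is detectable. Take $M=N=\Pi_\mathcal{Q}$. Using the Bloch-form computation in the proof of Theorem \ref{thm:mixed-dimensionalMacWilliams},
\begin{equation*}
\mathcal{A}'_{S_1}=(\dim S_1)^{-1}\sum_{\supp{E}\subseteq S_1}\left|\tra{E\Pi_\mathcal{Q}}\right|^2 .
\end{equation*}
By the Knill--Laflamme conditions every such $E$ satisfies $\Pi_\mathcal{Q}E\Pi_\mathcal{Q}=c_E\Pi_\mathcal{Q}$. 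The analogous computation for $\mathcal{B}'_{S_1}$, via Lemma \ref{lemma:partial_trace_CPTP}, then gives
\begin{equation*}
\mathcal{B}'_{S_1}=(\dim S_1)^{-1}\sum_{\supp{E}\subseteq S_1}\tra{E\Pi_\mathcal{Q}E^\dagger\Pi_\mathcal{Q}}=(\dim S_1)^{-1}K\sum|c_E|^2 .
\end{equation*}
Together these yield the key identity $K\mathcal{B}'_{S_1}=\mathcal{A}'_{S_1}$, which is the subsystem-level version of Proposition \ref{prop:coeff_properties}. The same identity holds for $S_2$.

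Next, translate these into purity statements. Write $\rho=\Pi_\mathcal{Q}/K$. Since $\mathcal{B}'_{S}=\mathcal{A}'_{S^c}$, the identity reads
\begin{equation*}
\tra{\rho_{S_1^c}^2}=\frac{\mathcal{A}'_{S_1^c}}{K^2}=\frac{1}{K}\cdot\frac{\mathcal{A}'_{S_1}}{K^2}=\frac{1}{K}\tra{\rho_{S_1}^2}.
\end{equation*}
So $\tra{\rho_{S_2S_3}^2}=K^{-1}\tra{\rho_{S_1}^2}$, and likewise $\tra{\rho_{S_1S_3}^2}=K^{-1}\tra{\rho_{S_2}^2}$.

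Now use the standard purity inequality $\tra{\rho_{AB}^2}\ge \tra{\rho_A^2}/\dim B$. This follows from Cauchy--Schwarz, since $\tra{\rho_A^2}=\tra{(\ptr{B}{\rho_{AB}})^2}\le \dim B\,\tra{\rho_{AB}^2}$. Applying it with the roles chosen so that $S_3$ is traced out gives
\begin{equation*}
\tra{\rho_{S_2S_3}^2}\ge \frac{\tra{\rho_{S_2}^2}}{\dim S_3},\qquad \tra{\rho_{S_1S_3}^2}\ge \frac{\tra{\rho_{S_1}^2}}{\dim S_3}.
\end{equation*}
Substituting the two identities, $K^{-1}\tra{\rho_{S_1}^2}\ge \tra{\rho_{S_2}^2}/\dim S_3$ and $K^{-1}\tra{\rho_{S_2}^2}\ge \tra{\rho_{S_1}^2}/\dim S_3$. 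Multiplying these and cancelling the positive factor $\tra{\rho_{S_1}^2}\tra{\rho_{S_2}^2}$ yields $K^{-2}\ge (\dim S_3)^{-2}$, that is, $\dim\mathbf{w}_3\ge K$.

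The main obstacle is the first step. We must verify carefully that $K\mathcal{B}'_{S}=\mathcal{A}'_{S}$ holds at the level of a single subsystem $S$, not just for the aggregated multiset coefficients. This requires checking that the CPTP expression from Lemma \ref{lemma:partial_trace_CPTP} matches $\sum_{\supp{E}\subseteq S}\tra{E\Pi E^\dagger\Pi}$, with the correct normalisation $(\dim S)^{-1}$. It also requires that the Knill--Laflamme conditions apply to every $E$ supported in $S$, which holds because any such $E$ has $\dimwt{E}\le\dim S<D$. The remaining manipulations are routine. The multiset formulation only enters through the choice of representatives $S_i$, since the bound depends on $\mathbf{w}_3$ only via $\dim S_3=\dim\mathbf{w}_3$.
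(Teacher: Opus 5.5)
Your proof is correct. It follows the same Huber--Grassl skeleton as the paper: two Knill--Laflamme equalities relating a correctable part to its complement, two inequalities saying that discarding $\mathbf{w}_3$ costs at most a factor $\dim\mathbf{w}_3$, then multiply and cancel. The difference is the level at which you run it.

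The paper never fixes representatives. It works with the normalized coefficients $\bar{A}'_{\mathbf{w}}=A'_{\mathbf{w}}/C_{\emptyset,\mathbf{w}}$, which are the averages of $\mathcal{A}'_S$ over all $S$ with $\mathcal{D}(S)=\mathbf{w}$. It obtains $\bar{A}'_{\mathbf{w}}=K\bar{A}'_{\mathbf{N}\setminus\mathbf{w}}$ from the aggregated relation $A_{\mathbf{v}}=KB_{\mathbf{v}}$ of Proposition~\ref{prop:coeff_properties} combined with Eq.~\eqref{eq:AdashW=sumAV}. It then proves the monotonicity $\bar{A}'_{\mathbf{w}\sqcup\mathbf{u}}\ge(\dim\mathbf{u})^{-1}\bar{A}'_{\mathbf{w}}$ from a ratio inequality for the binomial factors $C_{\mathbf{v},\mathbf{w}}/C_{\emptyset,\mathbf{w}}$ together with $A_{\mathbf{v}}\ge 0$.

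You instead apply Knill--Laflamme to a single subsystem to get $K\mathcal{B}'_{S_1}=\mathcal{A}'_{S_1}$. You replace the binomial-ratio step by the elementary bound $\tra{\rho_{AB}^2}\ge\tra{\rho_A^2}/\dim B$. The paper's two relations are exactly the averages of yours over all choices of representatives. The step you flag as the main obstacle is already done in the proof of Theorem~\ref{thm:mixed-dimensionalMacWilliams}, where $\mathcal{B}'_S(M,N)=(\dim S)^{-1}\sum_{\supp{E}\subseteq S}\tra{EME^\dagger N}$ is derived.

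What each route buys:
\begin{itemize}
\item Your route is shorter and needs no combinatorics or enumerator transforms. It also makes plain that the bound holds for each concrete partition of $[n]$.
\item The paper's route keeps the argument inside the multiset-enumerator formalism. This links it to the quantities in the linear program of Theorem~\ref{thm:LP} and to their reuse in Corollary~\ref{cor:pure_singleton}.
\end{itemize}

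Your approach handles that corollary equally well. Purity gives $\tra{\rho_S^2}=1/\dim S$ whenever $\dim S<D$, so $\tra{\rho_{S^c}^2}=1/(K\dim S)$. Comparing with $\tra{\rho_{S^c}^2}\ge 1/\dim S^c$ yields $K\le\dim\mathbf{N}/(\dim\mathbf{s})^2$ for $\mathbf{s}=\mathcal{D}(S)$.
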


\begin{proof}
Let $\Pi_\mathcal{Q}$ be the projector onto a QECC $\mathcal{Q}$. Let us define the normalized unitary coefficient $\bar{A}'_{\mathbf{w}}(\Pi_\mathcal{Q})$ by dividing the standard unitary multiset coefficient by the total number of subsystems $C_{\emptyset, \mathbf{w}}$
\begin{equation*}
    \bar{A}'_{\mathbf{w}}= \frac{A'_{\mathbf{w}}}{C_{\emptyset, \mathbf{w}}} = (\dim \mathbf{w})^{-1} \sum_{\mathbf{v} \subseteq \mathbf{w}} \frac{C_{\mathbf{v}, \mathbf{w}}}{C_{\emptyset, \mathbf{w}}} A_{\mathbf{v}},
\end{equation*}
where we have dropped the $\Pi_\mathcal{Q}$ notation.
From Theorem \ref{thm:LP}, we know that $A_{\mathbf{v}} \le K B_{\mathbf{v}}$ holds for all $\mathbf{v} \subseteq \mathbf{N}$. Furthermore, by definition, $B'_{\mathbf{w}} = A'_{\mathbf{N}\setminus\mathbf{w}}$. By the symmetry of the  binomial coefficients, $C_{\emptyset, \mathbf{w}} = C_{\emptyset, \mathbf{N}\setminus\mathbf{w}}$. Dividing the universal bound $A'_{\mathbf{w}} \le K B'_{\mathbf{w}}$ by this constant shows that the normalized coefficients satisfy the relation
\begin{equation}\label{eq:singleton_normalized_bound}
    \bar{A}'_{\mathbf{w}} \le K \bar{A}'_{\mathbf{N}\setminus\mathbf{w}},
\end{equation}
with equality holding if $\dim \mathbf{w} < D$. 

Consider two multisets $\mathbf{w}$ and $\mathbf{u}$. By analyzing the ratio of the combinatorial factors inside the summation, we have
\begin{align*}
    \frac{C_{\mathbf{v}, \mathbf{w}}}{C_{\emptyset, \mathbf{w}}} &= \prod_{d\in \mathbb{D}} \corch{ \frac{m_d(\mathbf{w})! \prt{m_d(\mathbf{N})-m_d(\mathbf{v})}!}{\prt{m_d(\mathbf{w})-m_d(\mathbf{v})}! m_d(\mathbf{N})!} } \\
    &= \prod_{d\in \mathbb{D}} \corch{ m_d(\mathbf{w})\prt{m_d(\mathbf{w})-1}\cdots \prt{m_d(\mathbf{w})-m_d(\mathbf{v})+1}\frac{\prt{m_d(\mathbf{N})-m_d(\mathbf{v})}!}{m_d(\mathbf{N})!} }\\
    &\le \prod_{d\in \mathbb{D}} \corch{ \prt{m_d(\mathbf{w})+m_d(\mathbf{u})}\cdots \prt{m_d(\mathbf{w})+m_d(\mathbf{u})-m_d(\mathbf{v})+1}\frac{\prt{m_d(\mathbf{N})-m_d(\mathbf{v})}!}{m_d(\mathbf{N})!} }= \frac{C_{\mathbf{v}, \mathbf{w}\sqcup \mathbf{u}}}{C_{\emptyset, \mathbf{w}\sqcup \mathbf{u}}}.
\end{align*}

Because $A_{\mathbf{v}} \ge 0$ for all $\mathbf{v} \subseteq \mathbf{N}$, restricting the summation index from $\mathbf{w} \sqcup \mathbf{u}$ to $\mathbf{w}$ strictly bounds the sum from below. Therefore,
\begin{equation}\label{eq:singleton_monotonicity}
    \begin{split}
    \bar{A}'_{\mathbf{w} \sqcup \mathbf{u}} &= \prt{\dim(\mathbf{w}\sqcup\mathbf{u})}^{-1} \sum_{\mathbf{v}\subseteq \mathbf{w}\sqcup\mathbf{u}} \frac{C_{\mathbf{v}, \mathbf{w}\sqcup \mathbf{u}}}{C_{\emptyset, \mathbf{w}\sqcup \mathbf{u}}}A_\mathbf{v}\ge \prt{\dim \mathbf{w}}^{-1} \prt{\dim \mathbf{u}}^{-1}  \sum_{\mathbf{v}\subseteq \mathbf{w}} \frac{C_{\mathbf{v}, \mathbf{w}\sqcup \mathbf{u}}}{C_{\emptyset, \mathbf{w}\sqcup \mathbf{u}}}A_\mathbf{v}  \\
    &\ge \prt{\dim \mathbf{w}}^{-1} \prt{\dim \mathbf{u}}^{-1}  \sum_{\mathbf{v}\subseteq \mathbf{w}} \frac{C_{\mathbf{v}, \mathbf{w}}}{C_{\emptyset, \mathbf{w}}}A_\mathbf{v} = (\dim \mathbf{u})^{-1} \bar{A}'_{\mathbf{w}}.
    \end{split}
\end{equation}

We can now prove the bound. Let us partition the total system multiset into three disjoint subsets $\mathbf{N} = \mathbf{w}_1 \sqcup \mathbf{w}_2 \sqcup \mathbf{w}_3$, with respective dimensions $d_1$, $d_2$ and $d_3$. We assume $\mathbf{w}_1$ and $\mathbf{w}_2$ are correctable, meaning $d_1 < D$ and $d_2 < D$. 

Applying the equality from Eq. \eqref{eq:singleton_normalized_bound} for $d_1 < D$ yields $\bar{A}'_{\mathbf{w}_1} = K \bar{A}'_{\mathbf{w}_2 \sqcup \mathbf{w}_3}$, and applying the monotonicity condition from Eq. \eqref{eq:singleton_monotonicity} to the right-hand side gives $\bar{A}'_{\mathbf{w}_1} = K \bar{A}'_{\mathbf{w}_2 \sqcup \mathbf{w}_3} \ge \frac{K}{d_3} \bar{A}'_{\mathbf{w}_2}$.

By applying the exact same logic using the equality for the second correctable subsystem $d_2 < D$, we obtain the symmetric bound $\bar{A}'_{\mathbf{w}_2} = K \bar{A}'_{\mathbf{w}_1 \sqcup \mathbf{w}_3} \ge \frac{K}{d_3} \bar{A}'_{\mathbf{w}_1}$.

Substituting $\bar{A}'_{\mathbf{w}_2}$ back into the first equation yields $\bar{A}'_{\mathbf{w}_1} \ge \frac{K}{d_3} \prt{\frac{K}{d_3} \bar{A}'_{\mathbf{w}_1}} = \frac{K^2}{d_3^2} \bar{A}'_{\mathbf{w}_1}$.

Because $\bar{A}'_{\mathbf{w}_1}$ contains the empty set term, it is strictly positive. We can therefore divide both sides by $\bar{A}'_{\mathbf{w}_1}$ to obtain
\begin{equation*}
    1 \ge \frac{K^2}{d_3^2}, 
\end{equation*}
which implies $d_3 \ge K$ and the proof is finished.
\end{proof}
Interestingly, for pure codes, the same kind of proof allows us to obtain a tighter bound, as the next corollary suggests.
\begin{corollary}\label{cor:pure_singleton}
If the quantum error-correcting code is pure, then for any sub-multiset $\mathbf{s} \subset \mathbf{N}$ satisfying $\dim \mathbf{s} < D$, the code dimension is bounded by
\begin{equation*}
    K \le \frac{\dim \mathbf{N}}{(\dim \mathbf{s})^2}.
\end{equation*}
\end{corollary}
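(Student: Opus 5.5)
The plan is to run the same normalized-coefficient argument as in the proof of Theorem~\ref{thm:singleton_bound}, but with a single correctable block $\mathbf{s}$ instead of two. Purity supplies exact knowledge of $\bar{A}'_{\mathbf{s}}$, so the second correctable block is not needed.

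\emph{Step 1: evaluate $\bar A'_{\mathbf{s}}$ exactly.} For a pure code, Corollary~\ref{cor:pure_LP} gives $A_{\mathbf{v}}=0$ for every nonempty $\mathbf{v}$ with $\dim\mathbf{v}<D$. If $\mathbf{v}\subseteq\mathbf{s}$, then $\dim\mathbf{v}\le\dim\mathbf{s}<D$, so every nonempty $\mathbf{v}\subseteq\mathbf{s}$ has $A_{\mathbf{v}}=0$. Hence in
\begin{equation*}
\bar A'_{\mathbf{s}}=(\dim\mathbf{s})^{-1}\sum_{\mathbf{v}\subseteq\mathbf{s}}\frac{C_{\mathbf{v},\mathbf{s}}}{C_{\emptyset,\mathbf{s}}}A_{\mathbf{v}}
\end{equation*}
only the term $\mathbf{v}=\emptyset$ survives. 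Using $A_\emptyset=K^2$ from Proposition~\ref{prop:coeff_properties}, this gives $\bar A'_{\mathbf{s}}=K^2/\dim\mathbf{s}$.

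\emph{Step 2: use the equality case of Eq.~\eqref{eq:singleton_normalized_bound}.} Since $\dim\mathbf{s}<D$, that equation holds with equality:
\begin{equation*}
\bar A'_{\mathbf{s}}=K\,\bar A'_{\mathbf{N}\setminus\mathbf{s}}.
\end{equation*}

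\emph{Step 3: bound $\bar A'_{\mathbf{N}\setminus\mathbf{s}}$ from below.} Apply the monotonicity inequality Eq.~\eqref{eq:singleton_monotonicity} with $\mathbf{w}=\emptyset$ and $\mathbf{u}=\mathbf{N}\setminus\mathbf{s}$. Note that $\bar A'_\emptyset=A_\emptyset=K^2$, because the only term is $\mathbf{v}=\emptyset$ with $C_{\emptyset,\emptyset}=1$ and $\dim\emptyset=1$. This yields
\begin{equation*}
\bar A'_{\mathbf{N}\setminus\mathbf{s}}\ge\frac{K^2}{\dim(\mathbf{N}\setminus\mathbf{s})}=\frac{K^2\dim\mathbf{s}}{\dim\mathbf{N}}.
\end{equation*}

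\emph{Step 4: combine.} Substituting into Step 2 gives
\begin{equation*}
\frac{K^2}{\dim\mathbf{s}}\ge K\cdot\frac{K^2\dim\mathbf{s}}{\dim\mathbf{N}}.
\end{equation*}
Dividing by $K^2>0$ gives $K\le\dim\mathbf{N}/(\dim\mathbf{s})^2$.

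The only delicate point is verifying that Eq.~\eqref{eq:singleton_monotonicity} is legitimate in the degenerate case $\mathbf{w}=\emptyset$. There the sum over $\mathbf{v}\subseteq\mathbf{w}$ reduces to $\mathbf{v}=\emptyset$, and the combinatorial ratio $C_{\emptyset,\mathbf{u}}/C_{\emptyset,\mathbf{u}}=1$. The inequality then follows simply from dropping the nonnegative terms $A_{\mathbf{v}}$ with $\mathbf{v}\neq\emptyset$, so it holds. Everything else is direct substitution.
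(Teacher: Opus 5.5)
Your proposal is correct and takes essentially the same approach as the paper. Both arguments evaluate $\bar{A}'_{\mathbf{s}}=K^2/\dim\mathbf{s}$ using purity, use the equality case $\bar{A}'_{\mathbf{s}}=K\bar{A}'_{\mathbf{N}\setminus\mathbf{s}}$, and bound $\bar{A}'_{\mathbf{N}\setminus\mathbf{s}}$ below by keeping only the $A_\emptyset=K^2$ term. The only difference is that you phrase that last step as the $\mathbf{w}=\emptyset$ case of Eq.~\eqref{eq:singleton_monotonicity}, which, as you note yourself, is exactly the paper's direct discarding of the nonnegative terms $A_{\mathbf{v}}$ with $\mathbf{v}\neq\emptyset$.
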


\begin{proof}
Let $\mathbf{s} \subset \mathbf{N}$ such that $\dim \mathbf{s} < D$. Because the code is pure, $A_{\mathbf{v}} = 0$ for all proper sub-multisets $\emptyset \neq \mathbf{v} \subseteq \mathbf{s}$. Consequently, the normalized coefficient for $\mathbf{s}$ collapses entirely to the empty-set term
\begin{equation*}
    \bar{A}'_{\mathbf{s}} = (\dim \mathbf{s})^{-1} \sum_{\mathbf{v} \subseteq \mathbf{s}} \frac{C_{\mathbf{v}, \mathbf{s}}}{C_{\emptyset, \mathbf{s}}} A_{\mathbf{v}} = (\dim \mathbf{s})^{-1} A_{\emptyset} = \frac{K^2}{\dim \mathbf{s}}.
\end{equation*}

Applying the strict equality from Eq. \eqref{eq:singleton_normalized_bound} for $\dim \mathbf{s} < D$ yields $\bar{A}'_{\mathbf{s}} = K \bar{A}'_{\mathbf{N}\setminus\mathbf{s}}$. Substituting and solving for $\bar{A}'_{\mathbf{N}\setminus\mathbf{s}}$ yields
\begin{equation*}
 \bar{A}'_{\mathbf{N}\setminus\mathbf{s}} = \frac{K}{\dim \mathbf{s}}.
\end{equation*}

Now, consider the expansion of the normalized coefficient for the complement $\mathbf{N} \setminus \mathbf{s}$. Since $A_{\mathbf{v}} \ge 0$ for all multisets, we can lower-bound this sum by discarding all terms except the empty-set term $A_\emptyset$ which is equal to $K^2$. That is
\begin{equation*}
    \bar{A}'_{\mathbf{N}\setminus\mathbf{s}} = \prt{\dim (\mathbf{N}\setminus\mathbf{s})}^{-1} \sum_{\mathbf{v} \subseteq \mathbf{N}\setminus\mathbf{s}} \frac{C_{\mathbf{v}, \mathbf{N}\setminus\mathbf{s}}}{C_{\emptyset, \mathbf{N}\setminus\mathbf{s}}} A_{\mathbf{v}} \ge \prt{\dim (\mathbf{N}\setminus\mathbf{s})}^{-1} A_{\emptyset} = \frac{K^2}{\dim (\mathbf{N}\setminus\mathbf{s})}.
\end{equation*}

Equating this lower bound with the exact value we found above gives
\begin{equation*}
    \frac{K}{\dim \mathbf{s}} \ge \frac{K^2}{\dim (\mathbf{N}\setminus\mathbf{s})}.
\end{equation*}

Using the total dimension relation $\dim (\mathbf{N}\setminus\mathbf{s}) = \frac{\dim \mathbf{N}}{\dim \mathbf{s}}$ and rearranging proves the claim.
\end{proof}
\begin{example}
    Consider again a hypothetical mixed-dimensional quantum error-correcting code defined over two qubits and a qudit of dimension $d \ge 5$, giving a total dimension multiset $\mathbf{N} = \claud{2, 2, d}$. The total dimension of the system is $\dim \mathbf{N} = 2^2 \cdot d = 4d$. Suppose we require the code to have a dimensional distance threshold of $D = 5$. We can analyze the constraints on the code dimension $K$ using both the general and the pure mixed-dimensional quantum Singleton bounds.

    First, let us assume the code is pure. To obtain the tightest bound, we must identify the sub-multiset $\mathbf{s} \subset \mathbf{N}$ with the largest physical dimension strictly less than $D=5$. Since $d \ge 5$, the optimal choice is the subset of the two qubits, $\mathbf{s} = \claud{2, 2}$, which yields $\dim \mathbf{s} = 4 < 5$. Applying the mixed-dimensional pure Singleton bound gives
    \begin{equation*}
        K \le \frac{\dim \mathbf{N}}{(\dim \mathbf{s})^2} = \frac{4d}{4^2} = \frac{d}{4}.
    \end{equation*}
    Since the code dimension must be an integer, any pure code with these parameters is strictly restricted to $K \le \lfloor \frac{d}{4} \rfloor$. 

    Next, we evaluate the general mixed-dimensional quantum Singleton bound without the purity assumption. We must partition the total multiset into three disjoint sub-multisets $\mathbf{N} = \mathbf{w}_1 \sqcup \mathbf{w}_2 \sqcup \mathbf{w}_3$ such that $\dim \mathbf{w}_1 < 5$ and $\dim \mathbf{w}_2 < 5$. An optimal partition is $\mathbf{w}_1 = \claud{2}$,  $\mathbf{w}_2 = \claud{2}$ and $\mathbf{w}_3 = \claud{d}$. The general bound dictates that $K \le \dim \mathbf{w}_3$, which restricts the code dimension to $$K \le d.$$

    This discrepancy highlights a profound feature of mixed-dimensional spaces. The general Singleton bound theoretically leaves mathematical room for a code up to $K=d$. However, as evaluated previously for this exact system in Example \ref{example:hamming22d}, the geometric volume-packing constraint given by the quantum Hamming bound strictly restricts non-degenerate codes to $K \le \lfloor \frac{4d}{7} \rfloor$. Because $\lfloor \frac{4d}{7} \rfloor < d$ for all $d \ge 5$, the Hamming bound rules out a large family of parameters that the general Singleton bound theoretically permits (for a non-degenerate code). 
    
    Furthermore, because $\lfloor \frac{d}{4} \rfloor < \lfloor \frac{4d}{7} \rfloor$ for all $d \ge 5$, any valid non-degenerate code that successfully maximizes its dimension up to the Hamming bound is mathematically forced to be impure. For instance, if $d=5$, the general Singleton bound permits $K \le 5$, but the Hamming bound strictly limits non-degenerate codes to $K \le 2$. Since purity implies $K \le 1$, a non-degenerate code achieving $K=2$ is mathematically forced to be an impure code. None of these codes are ruled out by the Linear Program in Theorem \ref{thm:LP}, except the pure code for $K=2$. This makes total sense because purity implies non-degeneracy, but there exist non-degenerate impure codes \cite{Cao2022}. 
\end{example}

From the previous example an important observation follows.
\begin{remark}
In homogeneous Hilbert spaces, saturation of the quantum Singleton bound strictly requires the code to be pure \cite{Rains99,HuberGrassl2020}, meaning quantum maximum distance separable codes (QMDS) are pure. Corollary \ref{cor:pure_singleton} establishes a profound departure from this rule in mixed-dimensional systems: if the maximal pure-code Singleton bound $K_{\text{pure}}$ is strictly less than the general bound $K_{\text{general}}$ from Theorem \ref{thm:singleton_bound}, any mixed-dimensional code saturating the Singleton bound is mathematically forced to be impure.
\end{remark}

\section{Absolutely maximally entangled states}\label{sec:ame_states}
While the previous sections focused on the constraints governing quantum error-correcting codes, we now turn our attention to absolutely maximally entangled (AME) states. In this section, we utilize the multiset weight enumerator machinery to establish generalized Scott and shadow bounds for mixed-dimensional entanglement and introduce a combinatorial grid method for the explicit construction of tripartite AME states.

Recall that we say a state $\ket{\psi}\in \mathbb{H}$ is \textit{absolutely maximally entangled} (AME) \cite{HuberEltschkaSiewertGuhne2018, BallZhang2026} if 
\begin{equation*}
\rho_S=\frac{1}{\dim_S}\mathds{1}_S
\end{equation*}
for all subsystems $S\subseteq [n]$ for which $\dim S\le \Delta \coloneqq \sqrt{\dim[n]}$. If this is the case, notice the purity is easily computed as $\tra{\rho_S^2}=(\dim S)^{-1}$.

If $\ket{\psi}$ is an AME state, notice that for subsystems for which $\dim S >\Delta$, its complement satisfies $\dim S^c \le \Delta$, which means $\rho_{S^c}=\frac{1}{\dim S^c}\mathds{1}_{S^c}$. From the Schmidt decomposition it can be seen that the purities hold the relation $\tra{\rho_S^2}=\tra{\rho_{S^c}^2}=(\dim S^c)^{-1}$.

This implies that 
\begin{equation*}
\mathcal{A}'_S(\ket{\psi})=\tra{\rho_S^2}=\frac{1}{\min\prt{\dim S, \dim S^c}}
\end{equation*}
for all $S\subseteq [n]$. So the unitary coefficients for AME states can be computed as

\begin{equation}\label{eq:Adash_AME}
    \begin{split}
        A'_\mathbf{w}(\ket{\psi})&=\sum_{\mathcal{D}(S)=\mathbf{w}} \mathcal{A}'_S(\ket{\psi})=\sum_{\mathcal{D}(S)=\mathbf{w}} \frac{1}{\min\prt{\dim S, \dim S^c}}=\frac{1}{\min\prt{\prod_{d\in \mathbf{w}}d, \frac{\dim[n]}{\prod_{d\in \mathbf{w}}d}}}\sum_{\mathcal{D}(S)=\mathbf{w}}1,\\
        &=\frac{1}{\min\prt{\dim\mathbf{w}, \frac{\dim[n]}{\dim\mathbf{w}}}} \prod_{d\in\mathbb{D}} \binom{m_d(\mathbf{N})}{m_d(\mathbf{w})},
    \end{split}
\end{equation}
where on the second line we have added the combinatorial factor $C_{\emptyset,\mathbf{w}}$ counting the number of subsystems with fixed dimensional multiset $\mathbf{w}$.

Furthermore, for an AME state $\ket{\psi}$, any reduction to a subsystem $S$ with $\dim S \le \Delta=\sqrt{\dim [n]}$ is maximally mixed, i.e., $\rho_S = (\dim S)^{-1}\mathds{1}_S$. Since the non-identity operators in our local basis $\mathcal{E}$ are traceless, $\tra{E\rho} = \tra{E_S \rho_S} = 0$ for all $E \neq \mathds{1}$ supported on $S$. Consequently, the coefficients vanish for these subsystems: $A_{\mathbf{w}}(\ket{\psi}) = 0$ for all multisets $\mathbf{v}$ such that $0 < \dim \mathbf{w} \le \Delta$. This follows also from Eq. \eqref{eq:Adash_AME}.

Notice that Eq. \eqref{eq:Adash_AME} provides an explicit expression for the unitary coefficients. Applying now Lemma \ref{lemma:inversion} we are able to find an explicit expression for the Shor-Laflamme coefficients for AME states. This is
\begin{equation*}
    A_{\mathbf{w}}(\ket{\psi}) = \sum_{\mathbf{v} \subseteq \mathbf{w}} \frac{(-1)^{|\mathbf{w}| - |\mathbf{v}|}}{\min\prt{\prod_{d\in \mathbf{v}}d, \frac{\dim[n]}{\prod_{d\in \mathbf{v}}d}}} \corch{\prod_{d \in \mathbb{D}} \binom{m_d(\mathbf{N}) - m_d(\mathbf{v})}{m_d(\mathbf{w}) - m_d(\mathbf{v})}\binom{m_d(\mathbf{N})}{m_d(\mathbf{v})} d^{m_d(\mathbf{v})}},
\end{equation*}
which can be simplified using 
\begin{equation*}
\binom{m_d(\mathbf{N}) - m_d(\mathbf{v})}{m_d(\mathbf{w}) - m_d(\mathbf{v})}\binom{m_d(\mathbf{N})}{m_d(\mathbf{v})}=\binom{m_d(\mathbf{N})}{m_d(\mathbf{w})}\binom{m_d(\mathbf{w})}{m_d(\mathbf{v})}
\end{equation*} 
to 
\begin{equation*}
    A_{\mathbf{w}}(\ket{\psi}) = \prt{\prod_{d \in \mathbb{D}} \binom{m_d(\mathbf{N})}{m_d(\mathbf{w})}} \sum_{\mathbf{v} \subseteq \mathbf{w}} \frac{(-1)^{|\mathbf{w}| - |\mathbf{v}|}}{\min\prt{\dim \mathbf{v}, \frac{\dim[n]}{\dim \mathbf{v}}}} \prod_{d \in \mathbb{D}} \binom{m_d(\mathbf{w})}{m_d(\mathbf{v})} d^{m_d(\mathbf{v})},
\end{equation*}
where $\dim \mathbf{v} = \prod_{d \in \mathbb{D}} d^{m_d(\mathbf{v})}$. 

This formulation is particularly instructive. Choose $\emptyset\subset\mathbf{w}\subseteq \mathbf{N}$ such that $\dim \mathbf{w}\le \sqrt{\dim [n]}$. Hence, since $\mathbf{v}\subseteq\mathbf{w}$, we have that $\dim \mathbf{v}\le\sqrt{\dim [n]}$. Therefore the $\min$ function in the equation above always evaluates to $\dim \mathbf{v}$. Note that $\dim \mathbf{v}=\prod_{d\in \mathbb{D}}d^{m_d(\mathbf{v})}$, which means we can further simplify to
\begin{align*}
A_\mathbf{w}\prt{\ket{\psi}}= \prt{\prod_{d \in \mathbb{D}} \binom{m_d(\mathbf{N})}{m_d(\mathbf{w})}} \sum_{\mathbf{v} \subseteq \mathbf{w}} (-1)^{|\mathbf{w}| - |\mathbf{v}|} \prod_{d \in \mathbb{D}} \binom{m_d(\mathbf{w})}{m_d(\mathbf{v})}.
\end{align*}
Using that $\sum_{d \in \mathbb{D}} (m_d(\mathbf{w}) - m_d(\mathbf{v})) = |\mathbf{w}| - |\mathbf{v}|$ and swapping the product and the sum we obtain
\begin{align*}
A_\mathbf{w}\prt{\ket{\psi}}&= \prt{\prod_{d \in \mathbb{D}} \binom{m_d(\mathbf{N})}{m_d(\mathbf{w})}} \sum_{\mathbf{v} \subseteq \mathbf{w}} \prod_{d \in \mathbb{D}} (-1)^{m_d(\mathbf{w})-m_d(\mathbf{v})}\binom{m_d(\mathbf{w})}{m_d(\mathbf{v})}\\&= \prt{\prod_{d \in \mathbb{D}} \binom{m_d(\mathbf{N})}{m_d(\mathbf{w})}} \prod_{d \in \mathbb{D}}\corch{\sum_{m_d(\mathbf{v})=0}^{m_d(\mathbf{w})}  (-1)^{m_d(\mathbf{w})-m_d(\mathbf{v})}\binom{m_d(\mathbf{w})}{m_d(\mathbf{v})}},
\end{align*}
which corresponds to the binomial expression of $(x+y)^n$ for $x=1$ and $y=-1$. Therefore $$A_\mathbf{w}\prt{\ket{\psi}}= 0,$$ for all $\mathbf{w}$ such that $\dim \mathbf{w}\le\Delta= \sqrt{\dim [n]}$, as mentioned above.
\subsection{Scott bound}

The enumerator framework developed in the previous sections allows us to derive strict algebraic inequalities that must be satisfied for any mixed-dimensional AME state to exist. This serves as a direct generalization of the bounds found by Scott for homogeneous systems \cite{Scott2004}.

\begin{theorem}\label{thm:mixed_scott_bound}
    Let $\mathbb{H} = \bigotimes_{i=1}^n \mathbb{C}^{D_i}$ be a mixed-dimensional Hilbert space. If an AME state $\ket{\psi} \in \mathbb{H}$ exists, then for any two nested dimension multisets $\mathbf{w}_1 \subset \mathbf{w}_2 \subseteq \mathbf{N}$ satisfying:
    \begin{enumerate}
        \item $\dim \mathbf{w}_1 > \Delta$,
        \item $\dim \mathbf{v} \le \Delta$ for all proper sub-multisets $\mathbf{v} \subset \mathbf{w}_1$,
        \item $\dim \mathbf{u} \le \Delta$ for all proper sub-multisets $\mathbf{u} \subset \mathbf{w}_2$ with $\mathbf{u} \neq \mathbf{w}_1$,
    \end{enumerate}
    the dimensions must satisfy the inequality
    \begin{equation}\label{eq:mixed_scott_inequality}
        \frac{(\dim \mathbf{w}_2)^2}{\dim[n]} - 1 \ge \prod_{d \in \mathbb{D}} \binom{m_d(\mathbf{w}_2)}{m_d(\mathbf{w}_1)} \prt{ \frac{(\dim \mathbf{w}_1)^2}{\dim[n]} - 1 }.
    \end{equation}
\end{theorem}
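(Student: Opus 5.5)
The plan is to follow Scott's original argument: evaluate the explicit Shor--Laflamme coefficients of an AME state at the two multisets $\mathbf{w}_1$ and $\mathbf{w}_2$, and then use the positivity $A_{\mathbf{w}_2}(\ket{\psi})\ge 0$ from Proposition~\ref{prop:coeff_properties}. The starting point is the closed formula derived just above from Lemma~\ref{lemma:inversion} and Eq.~\eqref{eq:Adash_AME}:
\begin{equation*}
A_{\mathbf{w}}(\ket{\psi})=C_{\emptyset,\mathbf{w}}\sum_{\mathbf{v}\subseteq\mathbf{w}}(-1)^{|\mathbf{w}|-|\mathbf{v}|}\prod_{d\in\mathbb{D}}\binom{m_d(\mathbf{w})}{m_d(\mathbf{v})}\,\frac{\dim\mathbf{v}}{\min\prt{\dim\mathbf{v},\dim[n]/\dim\mathbf{v}}}.
\end{equation*}
The last factor equals $1$ whenever $\dim\mathbf{v}\le\Delta$. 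It equals $(\dim\mathbf{v})^2/\dim[n]$ when $\dim\mathbf{v}>\Delta$.

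I will write each such factor as $1$ plus the excess $\prt{(\dim\mathbf{v})^2/\dim[n]-1}$. The part coming from the $1$'s is the full alternating binomial sum, which vanishes for $\mathbf{w}\neq\emptyset$, exactly as in the computation preceding the theorem. Hence only the sub-multisets with $\dim\mathbf{v}>\Delta$ contribute, each with its excess. For $\mathbf{w}_1$, hypothesis 2 leaves only $\mathbf{v}=\mathbf{w}_1$, so
\begin{equation*}
A_{\mathbf{w}_1}=C_{\emptyset,\mathbf{w}_1}\prt{\frac{(\dim\mathbf{w}_1)^2}{\dim[n]}-1}.
\end{equation*}
For $\mathbf{w}_2$, hypothesis 3 leaves only $\mathbf{v}=\mathbf{w}_2$ and $\mathbf{v}=\mathbf{w}_1$. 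The latter term carries the multiplicity $\prod_{d}\binom{m_d(\mathbf{w}_2)}{m_d(\mathbf{w}_1)}$, which counts the ways of choosing a copy of the multiset $\mathbf{w}_1$ inside $\mathbf{w}_2$; it also carries the sign $(-1)^{|\mathbf{w}_2|-|\mathbf{w}_1|}$.

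The key combinatorial step is to show that this sign is $-1$, i.e.\ that $|\mathbf{w}_2|=|\mathbf{w}_1|+1$. Suppose not. Then there is a multiset $\mathbf{u}$ with $\mathbf{w}_1\subsetneq\mathbf{u}\subsetneq\mathbf{w}_2$. Since all local dimensions are at least $2$, we get $\dim\mathbf{u}>\dim\mathbf{w}_1>\Delta$. This contradicts hypothesis 3, because $\mathbf{u}\neq\mathbf{w}_1$. (If trivial dimension-one factors were allowed, this step would need care, but here $D_i\ge2$ by assumption.) Therefore
\begin{equation*}
\frac{A_{\mathbf{w}_2}}{C_{\emptyset,\mathbf{w}_2}}=\frac{(\dim\mathbf{w}_2)^2}{\dim[n]}-1-\prod_{d\in\mathbb{D}}\binom{m_d(\mathbf{w}_2)}{m_d(\mathbf{w}_1)}\prt{\frac{(\dim\mathbf{w}_1)^2}{\dim[n]}-1}.
\end{equation*}
Since $A_{\mathbf{w}_2}\ge 0$ and $C_{\emptyset,\mathbf{w}_2}>0$, this gives Eq.~\eqref{eq:mixed_scott_inequality}.

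The main obstacle I expect is bookkeeping rather than ideas. The first point is handling the multiset summation correctly: the sum over $\mathbf{v}\subseteq\mathbf{w}$ runs over multisets weighted by binomial products, not over individual subsets. The second is justifying that the ``excess'' decomposition kills every term with $\dim\mathbf{v}\le\Delta$. I will do the latter by reusing verbatim the factorization of the alternating sum into $\prod_d(1-1)^{m_d(\mathbf{w})}$ shown before the theorem.
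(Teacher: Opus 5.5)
Your proof is correct, but it runs through the inverted relation instead of the paper's forward one. You start from the closed alternating-sign formula for $A_{\mathbf{w}}(\ket{\psi})$, obtained from Lemma~\ref{lemma:inversion} combined with Eq.~\eqref{eq:Adash_AME}. Splitting each factor as $1$ plus an excess, only the above-threshold sub-multisets survive for every $\mathbf{w}\neq\emptyset$.

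The paper instead uses Eq.~\eqref{eq:AdashW=sumAV}, $A'_{\mathbf{w}}=(\dim\mathbf{w})^{-1}\sum_{\mathbf{v}\subseteq\mathbf{w}}C_{\mathbf{v},\mathbf{w}}A_{\mathbf{v}}$. It combines this with $A_{\mathbf{v}}=0$ for $0<\dim\mathbf{v}\le\Delta$ and the known AME value of $A'_{\mathbf{w}}$. It solves first for $A_{\mathbf{w}_1}$, then obtains $A_{\mathbf{w}_2}=C_{\emptyset,\mathbf{w}_2}\prt{(\dim\mathbf{w}_2)^2/\dim[n]-1}-C_{\mathbf{w}_1,\mathbf{w}_2}A_{\mathbf{w}_1}$. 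It finishes with $C_{\mathbf{w}_1,\mathbf{w}_2}C_{\emptyset,\mathbf{w}_1}=C_{\emptyset,\mathbf{w}_2}\prod_d\binom{m_d(\mathbf{w}_2)}{m_d(\mathbf{w}_1)}$.

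Because all forward coefficients are positive, the minus sign on the $\mathbf{w}_1$ term comes for free in the paper's route. It therefore never needs $|\mathbf{w}_2|=|\mathbf{w}_1|+1$. In your route the sign $(-1)^{|\mathbf{w}_2|-|\mathbf{w}_1|}$ must be pinned down, and your argument via hypothesis~3 does this correctly. Note that $\dim\mathbf{u}\ge\dim\mathbf{w}_1>\Delta$ already suffices, so no appeal to $D_i\ge 2$ is needed.

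In exchange, your route gives the general formula $A_{\mathbf{w}}=C_{\emptyset,\mathbf{w}}\sum_{\mathbf{v}\subseteq\mathbf{w},\,\dim\mathbf{v}>\Delta}(-1)^{|\mathbf{w}|-|\mathbf{v}|}\prod_{d}\binom{m_d(\mathbf{w})}{m_d(\mathbf{v})}\prt{(\dim\mathbf{v})^2/\dim[n]-1}$ for every nonempty $\mathbf{w}$. From it, analogous inequalities for deeper nestings can be read off without back-substitution. It also makes explicit that the hypotheses force $\mathbf{w}_2$ to be $\mathbf{w}_1$ plus a single element, which the paper only remarks on informally in its homogeneous discussion.
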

\begin{proof}
Consider $\mathbf{w}_1\subseteq \mathbf{N}$ such that $\dim \mathbf{w}_1> \Delta$ and $\dim \mathbf{v}\le \Delta$ for every proper sub-multiset $\mathbf{v}\subset \mathbf{w}_1$. From the discussion in Eq. \eqref{eq:Adash_AME},
\begin{equation}\label{eq:AdashW1}
    A'_{\mathbf{w}_1}=\prod_{d\in \mathbb{D}}\binom{m_d(\mathbf{N})}{m_d(\mathbf{w}_1)} \frac{\dim \mathbf{w}_1}{\dim [n]}.
\end{equation}
Using Eq. \eqref{eq:AdashW=sumAV}, we can expand $A'_{\mathbf{w}_1}$ in terms of the Shor-Laflamme operators $A_\mathbf{v}$ with $\mathbf{v}\subseteq \mathbf{w}_1$.  However, the only ones surviving are $A_\emptyset$ and $A_{\mathbf{w}_1}$ itself since the rest satisfy $0<\dim \mathbf{v}\le \Delta$. Therefore
\begin{equation*}
 A'_{\mathbf{w}_1}=(\dim \mathbf{w}_1)^{-1}\prt{C_{\emptyset, \mathbf{w}_1}A_\emptyset+C_{\mathbf{w}_1,\mathbf{w}_1}A_{\mathbf{w}_1}}=(\dim \mathbf{w}_1)^{-1}\prt{\prod_{d\in \mathbb{D}}\binom{m_d(\mathbf{N})}{m_d(\mathbf{w}_1)}A_\emptyset + A_{\mathbf{w}_1}}.
\end{equation*}
Substituting Eq. \eqref{eq:AdashW1} and solving for $A_{\mathbf{w}_1}$, we obtain
\begin{equation}\label{eq:Aw1_explicit}
A_{\mathbf{w}_1} = \prod_{d\in \mathbb{D}}\binom{m_d(\mathbf{N})}{m_d(\mathbf{w}_1)} \prt{\frac{(\dim \mathbf{w}_1)^2}{\dim [n]} - 1}.
\end{equation}
Now consider $\mathbf{w}_2\supset \mathbf{w}_1$ such that $\dim \mathbf{u}\le \Delta$ for all $\mathbf{u}\subseteq \mathbf{w}_2$ such that $\mathbf{u}\neq\mathbf{w}_1$. Now the only non vanishing terms in the expansion of $A'_{\mathbf{w}_2}$ are $A_\emptyset$, $A_{\mathbf{w}_1}$ and $A_{\mathbf{w}_2}$. That is
\begin{equation*}
A'_{\mathbf{w}_2}=\prt{\dim \mathbf{w}_2}^{-1}\prt{\prod_{d\in \mathbb{D}}\binom{m_d(\mathbf{N})}{m_d(\mathbf{w}_2)}+C_{\mathbf{w}_1,\mathbf{w}_2}A_{\mathbf{w}_1}+A_{\mathbf{w}_2}}.
\end{equation*}
On the other hand, $A'_{\mathbf{w}_2}=\prod_{d\in \mathbb{D}}\binom{m_d(\mathbf{N})}{m_d(\mathbf{w}_2)}\frac{\dim \mathbf{w}_2}{\dim [n]}$, so solving now for $A_{\mathbf{w}_2}$ yields
   \begin{equation}\label{eq:Aw2_explicit}
        A_{\mathbf{w}_2} = \prod_{d\in \mathbb{D}}\binom{m_d(\mathbf{N})}{m_d(\mathbf{w}_2)} \prt{\frac{(\dim \mathbf{w}_2)^2}{\dim [n]} - 1} - C_{\mathbf{w}_1,\mathbf{w}_2} A_{\mathbf{w}_1}
    \end{equation}

Now, substituting $A_{\mathbf{w}_1}$ from Eq. \eqref{eq:Aw1_explicit} into Eq. \eqref{eq:Aw2_explicit} we obtain
\begin{equation*}
        A_{\mathbf{w}_2} = \prod_{d\in \mathbb{D}}\binom{m_d(\mathbf{N})}{m_d(\mathbf{w}_2)} \prt{\frac{(\dim \mathbf{w}_2)^2}{\dim [n]} - 1} - C_{\mathbf{w}_1,\mathbf{w}_2} \prod_{d\in \mathbb{D}}\binom{m_d(\mathbf{N})}{m_d(\mathbf{w}_1)} \prt{\frac{(\dim \mathbf{w}_1)^2}{\dim [n]} - 1}.
\end{equation*}
Notice that
\begin{align*}
        C_{\mathbf{w}_1,\mathbf{w}_2} \prod_{d\in \mathbb{D}}\binom{m_d(\mathbf{N})}{m_d(\mathbf{w}_1)} &= \prod_{d\in \mathbb{D}} \corch{\binom{m_d(\mathbf{N}) - m_d(\mathbf{w}_1)}{m_d(\mathbf{w}_2) - m_d(\mathbf{w}_1)} \binom{m_d(\mathbf{N})}{m_d(\mathbf{w}_1)}}= \prod_{d\in \mathbb{D}} \corch{\binom{m_d(\mathbf{N})}{m_d(\mathbf{w}_2)} \binom{m_d(\mathbf{w}_2)}{m_d(\mathbf{w}_1)}},
    \end{align*}
So
\begin{equation*}
        A_{\mathbf{w}_2} = \prod_{d\in \mathbb{D}}\binom{m_d(\mathbf{N})}{m_d(\mathbf{w}_2)} \corch{ \prt{\frac{(\dim \mathbf{w}_2)^2}{\dim [n]} - 1} - \prod_{d\in \mathbb{D}}\binom{m_d(\mathbf{w}_2)}{m_d(\mathbf{w}_1)} \prt{\frac{(\dim \mathbf{w}_1)^2}{\dim [n]} - 1} },
    \end{equation*}
and imposing non-negativity we obtain the bound
\begin{equation*}
\frac{(\dim \mathbf{w}_2)^2}{\dim [n]} - 1\ge \prod_{d\in \mathbb{D}}\binom{m_d(\mathbf{w}_2)}{m_d(\mathbf{w}_1)} \prt{\frac{(\dim \mathbf{w}_1)^2}{\dim [n]} - 1}.
\end{equation*}
\end{proof}
Now Scott bound from \cite{Scott2004} can be recovered easily.
\begin{corollary}
If $\ket{\psi}\in \bigotimes_{i=1}^n \mathbb{C}^q$ is an AME state, then 
\begin{align*}
    n\le \begin{cases*}
    2(q^2-1)&$n$ even,\\
    2q(q+1)-1&$n$ odd.\\
    \end{cases*}
\end{align*}
\end{corollary}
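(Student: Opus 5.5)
The plan is to specialize Theorem~\ref{thm:mixed_scott_bound} to the homogeneous case $\mathbb{D}=\{q\}$, $\mathbf{N}=\{q,\dots,q\}$ with $|\mathbf{N}|=n$. Here every multiset is determined by its cardinality, $\dim\mathbf{w}=q^{|\mathbf{w}|}$, $\dim[n]=q^n$ and $\Delta=q^{n/2}$. The condition $\dim\mathbf{v}\le\Delta$ becomes $|\mathbf{v}|\le n/2$. I will choose $\mathbf{w}_1$ of the smallest size exceeding $n/2$ and $\mathbf{w}_2$ of the next size, and then check that the three hypotheses of Theorem~\ref{thm:mixed_scott_bound} hold. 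The product of binomials in Eq.~\eqref{eq:mixed_scott_inequality} then collapses to a single binomial $\binom{|\mathbf{w}_2|}{|\mathbf{w}_1|}$.

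\emph{Case $n$ even, $n=2k$.} I take $|\mathbf{w}_1|=k+1$ and $|\mathbf{w}_2|=k+2$. Condition 1 holds since $q^{k+1}>q^k$. Condition 2 holds because every proper sub-multiset of $\mathbf{w}_1$ has size at most $k$. Condition 3 needs care, since proper sub-multisets of $\mathbf{w}_2$ of size $k+1$ other than $\mathbf{w}_1$ exist as subsets. As multisets over a single symbol, however, any size-$(k+1)$ sub-multiset equals $\mathbf{w}_1$, so only sizes up to $k$ remain and the condition holds. I should double-check that the combinatorial multiplicity is already accounted for by $C_{\mathbf{w}_1,\mathbf{w}_2}$ in the proof, which it is, because multisets rather than subsets index the coefficients. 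The inequality becomes
\[
q^{4}-1\ge (k+2)\,(q^{2}-1),
\]
using $(\dim\mathbf{w}_2)^2/\dim[n]=q^{2k+4-2k}=q^4$ and $(\dim\mathbf{w}_1)^2/\dim[n]=q^2$. Dividing by $q^2-1$ gives $q^2+1\ge k+2$, that is $k\le q^2-1$, so $n\le 2(q^2-1)$.

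\emph{Case $n$ odd, $n=2k+1$.} I take $|\mathbf{w}_1|=k+1$ and $|\mathbf{w}_2|=k+2$. Now $(\dim\mathbf{w}_1)^2/\dim[n]=q^{2k+2-2k-1}=q$ and $(\dim\mathbf{w}_2)^2/\dim[n]=q^3$. The inequality reads $q^3-1\ge (k+2)(q-1)$, hence $q^2+q+1\ge k+2$, so $k\le q^2+q-1$ and $n=2k+1\le 2q^2+2q-1=2q(q+1)-1$.

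The only delicate point is the verification of condition 3, i.e.\ interpreting "proper sub-multisets $\mathbf{u}\neq\mathbf{w}_1$" correctly in the homogeneous setting. If that reading were doubted, I would instead argue directly from the proof of Theorem~\ref{thm:mixed_scott_bound}. All $A_{\mathbf{u}}$ with $0<|\mathbf{u}|\le k$ vanish, and only $A_\emptyset$, $A_{\mathbf{w}_1}$, $A_{\mathbf{w}_2}$ enter Eq.~\eqref{eq:AdashW=sumAV}. Nonnegativity of $A_{\mathbf{w}_2}$ then yields the same inequality, and the remaining steps are routine arithmetic.
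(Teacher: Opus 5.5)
Your proposal is correct and is essentially the paper's proof. Like the paper, you specialize Theorem~\ref{thm:mixed_scott_bound} to $\mathbb{D}=\{q\}$ with $|\mathbf{w}_1|=\lfloor n/2\rfloor+1$ and $|\mathbf{w}_2|=|\mathbf{w}_1|+1$, so the inequality becomes $q^4-1\ge(k+2)(q^2-1)$ for $n=2k$ and $q^3-1\ge(k+2)(q-1)$ for $n=2k+1$; your explicit odd-case arithmetic and your check of Condition~3 (every size-$(k+1)$ sub-multiset coincides with $\mathbf{w}_1$) spell out what the paper leaves implicit, and for $n\le 2$, where $\mathbf{w}_2$ does not fit inside $\mathbf{N}$, the bound holds trivially.
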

\begin{proof}
    Let $n$ be even, and choose standard weights $k_1 = \frac{n}{2} + 1$ and $k_2 = \frac{n}{2} + 2$. The multisets have dimensions $\dim \mathbf{w}_1 = q^{n/2 + 1}$ and $\dim \mathbf{w}_2 = q^{n/2 + 2}$. The total dimension is $\dim[n] = q^n$. The relative combinatorial factor evaluates to $\binom{n/2 + 2}{n/2 + 1} = \frac{n}{2} + 2$. Substituting into Eq. \eqref{eq:mixed_scott_inequality} yields
    \begin{align*}
        \frac{q^{n+4}}{q^n} - 1 &\ge \prt{\frac{n}{2} + 2} \prt{\frac{q^{n+2}}{q^n} - 1} \\
        q^4 - 1 &\ge \prt{\frac{n}{2} + 2} (q^2 - 1) \\
        (q^2 - 1)(q^2 + 1) &\ge \prt{\frac{n}{2} + 2} (q^2 - 1).
    \end{align*}
    Dividing by $(q^2 - 1)$ gives $q^2 + 1 \ge \frac{n}{2} + 2$, which simplifies to the exact bound $n \le 2(q^2 - 1)$. Applying the same logic for odd $n$ using $k_1 = \frac{n-1}{2} + 1$ yields $n \le 2q(q+1) - 1$.
\end{proof}
To understand how the mixed-dimensional multiset conditions naturally collapse to the standard choices in the homogeneous literature, consider a system where all $n$ parties share a constant local dimension $q$. In this limit, any dimension multiset $\mathbf{w}$ is entirely characterized by its cardinality $k = |\mathbf{w}|$, giving it a physical dimension of $\dim \mathbf{w} = q^k$. The AME threshold dimension is simply $\Delta = q^{n/2}$. Condition 1 requires $\dim \mathbf{w}_1 > \Delta$, which translates to $q^{k_1} > q^{n/2}$, or strictly $k_1 > n/2$. Simultaneously, Condition 2 demands that all proper sub-multisets, which contain at most $k_1 - 1$ qudits, must not exceed the threshold. This implies $q^{k_1 - 1} \le q^{n/2}$, which simplifies to $k_1 \le n/2 + 1$. The only integer $k_1$ that simultaneously satisfies both bounds, 
\begin{equation*}
    n/2 < k_1 \le n/2 + 1,
\end{equation*}
is exactly $k_1 = \lfloor n/2 \rfloor + 1$. This perfectly recovers the  weight selection used to derive the standard Scott bounds for both even and odd $n$. Furthermore, Condition 3 ensures that any larger target multiset $\mathbf{w}_2$ can only cross 
the threshold via $\mathbf{w}_1$, forcing its size to be exactly $k_2 = k_1 + 1$.
\begin{example}
    As an application of Theorem \ref{thm:mixed_scott_bound}, we can prove the non-existence of an AME state in a mixed-dimensional system composed of seven qubits and one qutrit. Let $\mathbb{H} = (\mathbb{C}^2)^{\otimes 7} \otimes \mathbb{C}^3$. The total dimension multiset is $\mathbf{N} = \claud{2, 2, 2, 2, 2, 2, 2, 3}$, giving a total system dimension of $\dim[n] = 2^7 \times 3 = 384$. The AME threshold is $\Delta = \sqrt{384} \approx 19.59$.
    
    To apply the bound, we select two nested multisets consisting entirely of qubits. Let $\mathbf{w}_1 = \claud{2, 2, 2, 2, 2}$ (five qubits) and $\mathbf{w}_2 = \claud{2, 2, 2, 2, 2, 2}$ (six qubits). We verify the three necessary conditions:
    \begin{enumerate}
        \item $\dim \mathbf{w}_1 = 2^5 = 32$. Since $32 > 19.59$, the first condition holds.
        \item The proper sub-multisets of $\mathbf{w}_1$ contain at most 4 qubits, meaning their maximum dimension is $2^4 = 16$. Since $16 \le 19.59$, the second condition is strictly met.
        \item The proper sub-multisets of $\mathbf{w}_2$, excluding $\mathbf{w}_1$, also contain at most 4 qubits, thus bounded by a dimension of 16. Since $16 \le 19.59$, the third condition is met.
    \end{enumerate}
    
    Since the conditions are satisfied, any hypothetical AME state must obey the inequality in Eq. \eqref{eq:mixed_scott_inequality}. The relative multiset binomial coefficient evaluates to choosing 5 qubits from a pool of 6, yielding
    \begin{equation*}
        \prod_{d\in \mathbb{D}}\binom{m_d(\mathbf{w}_2)}{m_d(\mathbf{w}_1)} = \binom{6}{5}\binom{0}{0} = 6.
    \end{equation*}
    Substituting the dimensions $\dim \mathbf{w}_2 = 64$, $\dim \mathbf{w}_1 = 32$ and $\dim[n] = 384$ into the bound we obtain
    \begin{align*}
        \frac{29}{3} &\ge 10. 
    \end{align*}
    This results in a mathematical contradiction. Therefore, an AME state for 7 qubits and 1 qutrit strictly does not exist.
\end{example}
\subsection{Shadow bounds}
In \cite{HuberEltschkaSiewertGuhne2018} it is proven that the shadow inequalities from Eq. \eqref{eq:generalized_shadow_inequality} provide tighter bounds than Scott's when considering AME states. Here we follow the same line of thought.

Using Eq. \eqref{eq:S=Adashes} and plugging Eq. \eqref{eq:Adash_AME} we can also relate the shadow coefficients and the unitary ones for AME states as
\begin{equation}\label{eq:Shadows_AME}
\begin{split}
    S_\mathbf{w}(\ket{\psi})=\sum_{\mathbf{v}\subseteq \mathbf{N}} \frac{1}{\min\prt{\dim\mathbf{v}, \frac{\dim[n]}{\dim\mathbf{v}}}} \prod_{d\in\mathbb{D}} \corch{K_{m_d(\mathbf{N})-m_d(\mathbf{w})}\prt{m_d(\mathbf{v});m_d(\mathbf{N})}\binom{m_d(\mathbf{N})}{m_d(\mathbf{v})}}.
\end{split}
\end{equation}
In particular, we can compute the first shadow coefficient for the empty set as
\begin{equation}\label{eq:shadow_empty_set_AME}
S_\emptyset(\ket{\psi})=\sum_{\mathbf{v}\subseteq \mathbf{N}} \frac{\prt{-1}^{|\mathbf{v}|}}{\min\prt{\dim\mathbf{v}, \frac{\dim[n]}{\dim\mathbf{v}}}} \prod_{d\in \mathbb{D}} \binom{m_d\prt{\mathbf{N}}}{m_d(\mathbf{v})}.
\end{equation}
For constant local dimensions, this expression reduces to Eq. (49) in \cite{HuberEltschkaSiewertGuhne2018}.
Note from Eq. \eqref{eq:Shadows_AME}, that $S_\mathbf{w}(\ket{\psi})$ solely depends on the multiplicities of the dimensions of the total Hilbert space and the particular multiset $\mathbf{w}$ for which the shadow coefficient is computed. Therefore, given the dimensions of a Hilbert, a negative shadow coefficient implies the non-existence of absolutely maximally entangled states in that Hilbert space.

An example is given in \cite{HuberEltschkaSiewertGuhne2018}. Here we give the explicit derivation in our multiset approach.
\begin{example} AME states in $\mathbb{H}=\mathbb{C}^2\otimes\mathbb{C}^2\otimes\mathbb{C}^2\otimes\mathbb{C}^3$ do not exist. We compute the shadow coefficient for the empty set according to Eq. \eqref{eq:shadow_empty_set_AME}, which yields 
\begin{align*}
    S_\emptyset&=\frac{(-1)^0}{\min\prt{1,\frac{24}{1}}}\binom{3}{0}\binom{1}{0}\\
    &\quad+\frac{(-1)^1}{\min\prt{2,\frac{24}{2}}}\binom{3}{1}\binom{1}{0}
    +\frac{{(-1)^1}}{\min\prt{3,\frac{24}{3}}}\binom{3}{0}\binom{1}{1}\\
    &\quad+\frac{(-1)^2}{\min\prt{2\cdot 3,\frac{24}{2\cdot 3}}}\binom{3}{1}\binom{1}{1}+\frac{(-1)^2}{\min\prt{2\cdot 2,\frac{24}{2\cdot 2}}}\binom{3}{2}\binom{1}{0}\\
    &\quad+\frac{(-1)^3}{\min\prt{2\cdot 2\cdot 2,\frac{24}{2\cdot 2\cdot 2}}}\binom{3}{3}\binom{1}{0}+\frac{(-1)^3}{\min\prt{2\cdot 2\cdot 3,\frac{24}{2\cdot 2\cdot 3}}}\binom{3}{2}\binom{1}{1}\\
    &\quad+\frac{(-1)^4}{\min\prt{2\cdot 2\cdot 2\cdot 3,\frac{24}{2\cdot 2\cdot 2\cdot 3}}}\binom{3}{3}\binom{1}{1}=1-\frac{3}{2}-\frac{1}{3}+\frac{3}{4}+\frac{3}{4}-\frac{1}{3}-\frac{3}{2}+1=-\frac{1}{6}\not\ge 0.
\end{align*}
Therefore AME states in $\mathbb{H}=\mathbb{C}^2\otimes\mathbb{C}^2\otimes\mathbb{C}^2\otimes\mathbb{C}^3$ do not exist.
\end{example}
This multivariate framework provides an efficient path to evaluating the full suite of shadow multiset enumerators for arbitrary party counts and local dimensions. The predictive power of this analytical approach is visualized in Fig. \ref{fig:ame_heatmap}, which maps the existence landscape of mixed-dimensional AME states formed by qubits and qutrits by identifying regions where the non-negativity of shadow enumerators is violated. For instance, no AME states exist for 8 parties consisting of qubits and qutrits.

A similar heatmap for qutrits and ququarts can be found in Appendix \ref{sec:qutrits_ququarts}.
\begin{figure}[ht]
    \centering
    \includegraphics[width=0.81\textwidth]{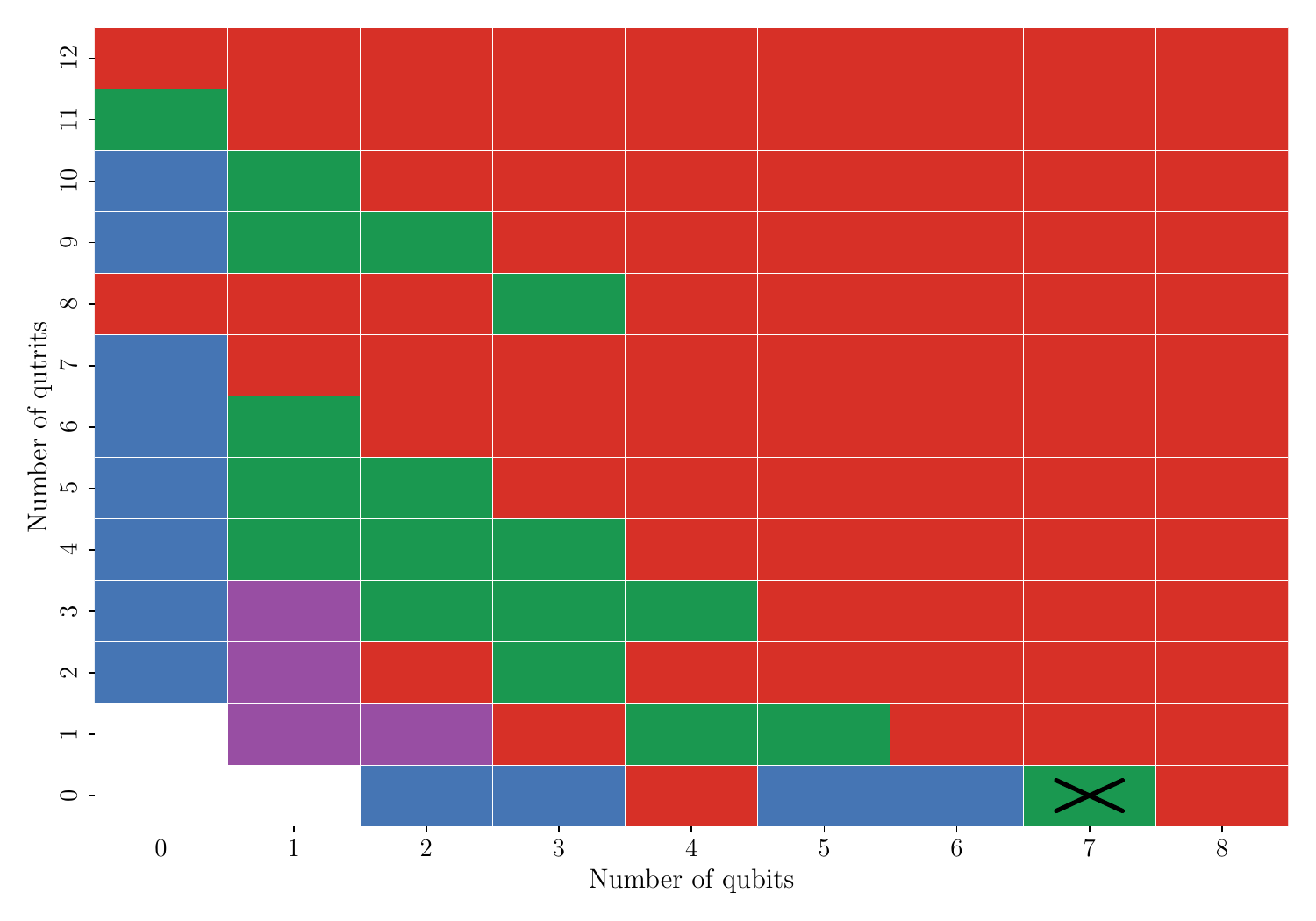}
    \caption{Existence of mixed-dimensional AME states formed by qubits and qutrits based on the positivity of shadow multiset enumerators. Green squares represent potentially existing states, while red squares indicate states strictly forbidden by negative shadow enumerators. Blue squares are already existing states in homogeneous systems \cite{TableAME, huber_quantum_code_bounds}. The seven-qubit AME state was ruled out by \cite{HuberOtfriedSiewert17}, marked with a cross. Purple squares represent existing AME states in heterogeneous scenarios. The $\mathbb{C}^2\otimes \mathbb{C}^3$ has the trivial AME $\frac{1}{\sqrt2}\prt{\ket{00}+\ket{11}}$, an embedded Bell state. An example for $\mathbb{C}^2\otimes \prt{\mathbb{C}^3}^{\otimes 3}$ is given in \cite{BallZhang2026} and \cite{HuberEltschkaSiewertGuhne2018}. Our construction of AME states in $\mathbb{C}^2\otimes \mathbb{C}^2\otimes \mathbb{C}^3$ and $\mathbb{C}^2\otimes \mathbb{C}^3\otimes \mathbb{C}^3$ can be found in Section \ref{sec:tripartiteAME} in Eq. \eqref{eq:AME223} and Eq. \eqref{eq:AME233}, respectively.}
    \label{fig:ame_heatmap}
\end{figure}

To end this section, let us recall our closed formulas for the mixed-dimensional coefficients for AME states $\ket{\psi}$ in the general setting $\mathbb{H}=\bigotimes_{i=1}^{n} \mathbb{C}^{D_i}$ for $\mathbf{w}\subseteq\mathbf{N}$ and $S\subseteq [n]$:
\begin{align}
A_{\mathbf{w}}(\ket{\psi}) &= \prt{\prod_{d \in \mathbb{D}} \binom{m_d(\mathbf{N})}{m_d(\mathbf{w})}} \sum_{\mathbf{v} \subseteq \mathbf{w}} \frac{(-1)^{|\mathbf{w}| - |\mathbf{v}|}}{\min\prt{\dim \mathbf{v}, \frac{\dim[n]}{\dim \mathbf{v}}}} \prod_{d \in \mathbb{D}} \binom{m_d(\mathbf{w})}{m_d(\mathbf{v})} d^{m_d(\mathbf{v})},\nonumber\\
A'_\mathbf{w}(\ket{\psi})&=\frac{1}{\min\prt{\dim \mathbf{w}, \frac{\dim[n]}{\dim \mathbf{w}}}} \prod_{d\in\mathbb{D}} \binom{m_d(\mathbf{N})}{m_d(\mathbf{w})},\nonumber\\
S_\mathbf{w}(\ket{\psi})&=\sum_{\mathbf{v}\subseteq \mathbf{N}}\frac{1}{\min\prt{\dim \mathbf{v}, \frac{\dim[n]}{\dim \mathbf{v}}}} \prod_{d\in\mathbb{D}}  \corch{K_{m_d(\mathbf{N})-m_d(\mathbf{w})}\prt{m_d(\mathbf{v});m_d(\mathbf{N})}\binom{m_d(\mathbf{N})}{m_d(\mathbf{v})}},\nonumber\\
\mathcal{A}'_S(\ket{\psi})&=\frac{1}{\min\prt{\dim S, \dim S^c}}.\label{eq:AME_cal_enumerators}
\end{align}
Once we obtain $A_\mathbf{w}$ for all sub-multisets $\mathbf{w}\subseteq\mathbf{N}$, we can recover the dual enumerators $B_\mathbf{w}$ using Proposition \ref{prop:Bs_Ss_from_AsKrawtchouk} and we obtain a complete characterization of the enumerators for AME states in any dimension. 

\begin{example}
Consider the AME state in $\mathbb{C}^2\otimes \mathbb{C}^3\otimes \mathbb{C}^3\otimes \mathbb{C}^3$ found in \cite{BallZhang2026}. That is
\begin{align*}
\ket{\psi}&=\frac{1}{\sqrt{12}}(\ket{0022}+\ket{0201}+\ket{0120}+\ket{0011}+\ket{0102}+\ket{0210}\\ 
&\qquad\qquad-\ket{1101}+\ket{1110}-\ket{1012}+\ket{1202}-\ket{1220}+\ket{1021}).
\end{align*}
Its corresponding multiset coefficients are given in Table \ref{tab:AME2333_enumerators}.
\begin{table}[ht]
    \centering
    \footnotesize 
    \setlength{\tabcolsep}{4pt} 
    \renewcommand{\arraystretch}{2}
    \caption{Complete weight enumerator profile for the mixed-dimensional AME state in $\mathbb{C}^2\otimes\mathbb{C}^3\otimes\mathbb{C}^3\otimes\mathbb{C}^3$ from \cite{BallZhang2026}. Shor-Laflamme ($A_{\mathbf{v}}$, $B_{\mathbf{v}}$), shadow ($S_{\mathbf{v}}$), unitary ($A'_{\mathbf{v}}$, $B'_{\mathbf{v}}$) and calligraphic ($\mathcal{A}'_S$, $\mathcal{B}'_S$) enumerators are evaluated. Subsystems $S \subseteq \{1, 2, 3, 4\}$ are grouped by their corresponding dimension multiset $\mathcal{D}(S)=\mathbf{v}$, where party 1 is the qubit and parties 2, 3, and 4 are the qutrits. Due to the symmetry of AME states, their calligraphic enumerators depend strictly on the dimension multiset of the subsystem as seen in Eq. \eqref{eq:AME_cal_enumerators}. Therefore, the unitary coefficient for a given multiset is simply the product of its corresponding calligraphic coefficients and the combinatorial number of subsystems sharing that multiset.}
    \label{tab:AME2333_enumerators} 
      \vspace{10pt}
    \begin{tabular}{ccccccccc}
        \toprule
        \textbf{Subsystems} $S$ & \textbf{Multiset} $\mathcal{D}(S)=\mathbf{v}$ & $A_{\mathbf{v}}$ & $B_{\mathbf{v}}$ & $S_{\mathbf{v}}$ & $A'_{\mathbf{v}}$ & $B'_{\mathbf{v}}$ & $\mathcal{A}'_S$ & $\mathcal{B}'_S$ \\
        \midrule
        $\emptyset$ & $\emptyset$ & $1$ & $1$ & $0$ & $1$ & $1$ & $1$ & $1$ \\
        $\{1\}$ & $\{2\}$ & $0$ & $0$ & $0$ & $\frac{1}{2}$ & $\frac{1}{2}$ & $\frac{1}{2}$ & $\frac{1}{2}$ \\
        $\{2\}$, $\{3\}$, $\{4\}$ & $\{3\}$ & $0$ & $0$ & $0$ & $1$ & $1$ & $\frac{1}{3}$ & $\frac{1}{3}$ \\
        $\{1, 2\}$, $\{1, 3\}$, $\{1, 4\}$ & $\{2, 3\}$ & $0$ & $0$ & $6$ & $\frac{1}{2}$ & $\frac{1}{2}$ & $\frac{1}{6}$ & $\frac{1}{6}$ \\
        $\{2, 3\}$, $\{2, 4\}$, $\{3, 4\}$ & $\{3, 3\}$ & $\frac{3}{2}$ & $\frac{3}{2}$ & $4$ & $\frac{1}{2}$ & $\frac{1}{2}$ & $\frac{1}{6}$ & $\frac{1}{6}$ \\
        $\{1, 2, 3\}$, $\{1, 2, 4\}$, $\{1, 3, 4\}$ & $\{2, 3, 3\}$ & $\frac{27}{2}$ & $\frac{27}{2}$ & $0$ & $1$ & $1$ & $\frac{1}{3}$ & $\frac{1}{3}$ \\
        $\{2, 3, 4\}$ & $\{3, 3, 3\}$ & $11$ & $11$ & $0$ & $\frac{1}{2}$ & $\frac{1}{2}$ & $\frac{1}{2}$ & $\frac{1}{2}$ \\
        $\{1, 2, 3, 4\}$ & $\{2, 3, 3, 3\}$ & $27$ & $27$ & $6$ & $1$ & $1$ & $1$ & $1$ \\
        \bottomrule
    \end{tabular}
\end{table}
\end{example}
\subsection{Tripartite AME states}\label{sec:tripartiteAME}
Let $\mathbb{H} = \mathbb{C}^{d_1}\otimes \mathbb{C}^{d_2}\otimes \mathbb{C}^{d_3}$ be our tripartite Hilbert space such that $d_1\le d_2 d_3$, $d_2\le d_1 d_3$ and $d_3 \le d_1d_2$. Without loss of generality we can assume $d_1\le d_2$ and our inequalities become
\begin{equation*}
\frac{d_1}{d_2}\le \frac{d_2}{d_1}\le d_3\le d_1 d_2.
\end{equation*}

Here, we formally define a constructive methodology to build absolutely maximally entangled (AME) states in this scenario by mapping the quantum algebraic constraints to a combinatorial grid problem. A similar construction is given in \cite{ShenChen21} with different constraints and not allowing instances with partially symmetric states, for example, states in which $d_1=d_2$.

\begin{theorem}\label{thm:grid_tripartiteAME}
An AME state in $\mathbb{C}^{d_1}\otimes \mathbb{C}^{d_2}\otimes \mathbb{C}^{d_3}$ with the dimensions satisfying $\frac{d_1}{d_2}\le \frac{d_2}{d_1}\le d_3\le d_1 d_2$ can be explicitly constructed if there exists a $d_1 \times d_2$ grid of non-negative probability weights $a_{jk} \ge 0$, structured as follows:

\begin{equation*}
\renewcommand{\arraystretch}{1.5} 
\begin{array}{c|ccccc|c}
    & \ket{0}_2 & \ket{1}_2 & \ket{2}_2 & \dots & \ket{d_2-1}_2 & \sum\\
    \hline
    \ket{0}_1 & a_{00} & a_{01}  & a_{02}  & \dots &  a_{0(d_2-1)} &\mathbf{1/d_1}\\
    \ket{1}_1 & a_{10} & a_{11}  & a_{12}  & \dots &  a_{1(d_2-1)}&\mathbf{1/d_1}\\
    \vdots & \vdots & \vdots & \vdots & \ddots & \vdots & \vdots \\
    \ket{d_1-1}_1 & a_{(d_1-1)0} & a_{(d_1-1)1}  & a_{(d_1-1)2}  & \dots &  a_{(d_1-1)(d_2-1)}&\mathbf{1/d_1}\\
    \hline 
    \sum& \mathbf{1/d_2} & \mathbf{1/d_2} & \mathbf{1/d_2} & \dots &\mathbf{1/d_2} &\mathbf{1}
\end{array}
\end{equation*}

The non-zero entries of this grid must be partitionable into $d_3$ disjoint sets, each labeled by a logical state $\ket{i}_3$ for $i=0,\dots,d_3-1$, simultaneously satisfying the following combinatorial rules:
\begin{enumerate}
    \item \textbf{Partition Normalization:} The sum of the weights inside each partition $\ket{i}_3$ is exactly $1/d_3$.
    \item \textbf{Row Balance:} The sum of the weights in any given row $j$ is exactly $1/d_1$.
    \item \textbf{Column Balance:} The sum of the weights in any given column $k$ is exactly $1/d_2$.
    \item \textbf{Strict Disjointedness:} No individual partition $\ket{i}_3$ contains more than one non-zero entry in any given row or column.
\end{enumerate}
\end{theorem}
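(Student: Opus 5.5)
We will build the state explicitly from the grid and then check the AME condition one subsystem at a time. Define
\begin{equation*}
\ket{\psi}=\sum_{i=0}^{d_3-1}\sum_{(j,k)\in P_i}\sqrt{a_{jk}}\,\ket{j}_1\ket{k}_2\ket{i}_3,
\end{equation*}
where $P_i$ is the set of grid positions assigned to partition $\ket{i}_3$. Since the partitions are disjoint and the weights sum to $1$, $\braket{\psi}{\psi}=\sum_{jk}a_{jk}=1$. For the three-party space $\dim[n]=d_1d_2d_3$, so we must identify which subsystems satisfy $\dim S\le\Delta=\sqrt{d_1d_2d_3}$. The hypotheses $d_1\le d_2d_3$, $d_2\le d_1d_3$, $d_3\le d_1d_2$ give $d_\ell^2\le d_1d_2d_3$, so every single party must be maximally mixed. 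Any pair $\{\ell,\ell'\}$ has complement a single party $\ell''$ with $d_{\ell''}\le d_\ell d_{\ell'}$, hence $\dim\{\ell,\ell'\}\ge\Delta$, with equality only in the boundary case. When the pair has dimension exactly $\Delta$, its reduction is maximally mixed if and only if the complementary single-party reduction is, because the Schmidt decomposition gives both the same nonzero spectrum and the ranks match. The empty set and the full system are trivial. The whole proof therefore reduces to showing that $\rho_1$, $\rho_2$ and $\rho_3$ are maximally mixed.

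First compute $\rho_3=\ptr{12}{\ketbra{\psi}{\psi}}$. Its entries are $(\rho_3)_{ii'}=\sum_{(j,k)\in P_i\cap P_{i'}}\sqrt{a_{jk}a_{jk}}$. Because $P_i\cap P_{i'}=\emptyset$ for $i\ne i'$, the matrix is diagonal, and its diagonal entries equal $1/d_3$ by the partition normalization rule. Hence $\rho_3=\mathds{1}/d_3$.

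Next compute $\rho_1$, which is the step where disjointedness does its real work. We have
\begin{equation*}
(\rho_1)_{jj'}=\sum_{i,k}\sqrt{a_{jk}a_{j'k}}\,\mathbb{1}[(j,k)\in P_i]\,\mathbb{1}[(j',k)\in P_i].
\end{equation*}
An off-diagonal term with $j\ne j'$ requires two entries of the same partition in the same column $k$. The strict disjointedness rule forbids this, so all off-diagonal entries vanish. The diagonal entries are $\sum_k a_{jk}=1/d_1$ by the row balance rule, so $\rho_1=\mathds{1}/d_1$. The argument for $\rho_2$ is symmetric: off-diagonal entries would need two entries of one partition in the same row, which is forbidden, and the diagonal entries equal $1/d_2$ by column balance.

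The only delicate point is bookkeeping rather than a genuine obstacle. One must confirm, as above, that no bipartite subsystem with dimension strictly below $\Delta$ exists, and that the equality case is handled through the Schmidt-rank argument. The ordering $d_1/d_2\le d_2/d_1\le d_3\le d_1d_2$ encodes exactly the three inequalities used, and it also ensures such a grid can be consistent. Note also that each partition sums to $1/d_3$ while containing at most $\min(d_1,d_2)$ entries; this is compatible with the assumptions and is not needed for the proof.
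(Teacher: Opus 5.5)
Your proposal is correct and follows essentially the same route as the paper: you build $\ket{\psi}$ from the square roots of the grid weights, obtain $\rho_3$ maximally mixed from the disjointness of the partitions together with Rule 1, and obtain $\rho_1,\rho_2$ maximally mixed because Rule 4 kills the off-diagonal terms while Rules 2--3 fix the diagonal. Your explicit treatment of the boundary case $\dim\{\ell,\ell'\}=\Delta$ via the Schmidt decomposition is a small but real improvement, since the paper only asserts that all two-party reductions exceed the threshold (and that $d_1\le d_2\le d_3$, which the stated hypotheses do not actually guarantee).
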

Note that the combinatorial rules above provide sufficient conditions for constructing an AME state, but they are not strictly necessary. Specifically, Rule 4 (Strict Disjointedness) represents just one simple, systematic way to guarantee that the off-diagonal terms of the reduced density matrices vanish, though other cancellations using complex phases are possible.

\begin{proof}
We proceed constructively by assuming a $d_1 \times d_2$ grid of non-negative weights $a_{jk}$ exists and satisfies Rules 1–4. From this grid, we define $d_3$ disjoint partitions, each labeled by an index $i \in \{0, \dots, d_3-1\}$. For each partition $i$, we construct a $d_1 \times d_2$ matrix $\Phi_i$ whose elements are defined by the square roots of the grid weights, that is
\begin{equation*}
(\Phi_i)_{jk} = \begin{cases}
\sqrt{a_{jk}} & \text{if cell } (j,k) \text{ belongs to partition } i, \\
0 & \text{otherwise}.
\end{cases}
\end{equation*}
Using these matrices, we define the tripartite quantum state
\begin{equation*}
\ket{\psi} = \sum_{i=0}^{d_3-1} \sum_{j=0}^{d_1-1} \sum_{k=0}^{d_2-1} (\Phi_i)_{jk} \ket{j}_1\ket{k}_2\ket{i}_3.
\end{equation*}

To verify that $\ket{\psi}$ is an AME state, we must show that its reductions to any single subsystem $1, 2$ or $3$ are maximally mixed. Because the dimensions satisfy $d_1 \le d_2 \le d_3$ and $d_3 \le d_1 d_2$, all single-party reductions fall under the mixed-dimensional AME threshold, while all two-party reductions exceed it.

First, we evaluate the reduction to subsystem 3. We rewrite the global state as $\ket{\psi} = \frac{1}{\sqrt{d_3}}\sum_{i=0}^{d_3-1} \ket{\psi_i}\ket{i}_3$, where $\ket{\psi_i} = \sqrt{d_3} \sum_{j,k} (\Phi_i)_{jk} \ket{j}_1\ket{k}_2$. The inner product between any two such states is
\begin{equation*}
\braket{\psi_i}{\psi_m} = d_3 \sum_{j=0}^{d_1-1} \sum_{k=0}^{d_2-1} (\Phi_i)_{jk}^* (\Phi_m)_{jk}.
\end{equation*}
For $i \neq m$, Rule 4 dictates that the partitions are strictly disjoint, meaning $(\Phi_i)_{jk}$ and $(\Phi_m)_{jk}$ cannot be simultaneously non-zero for the same cell $(j,k)$. Thus, $\braket{\psi_i}{\psi_m} = 0$. For $i = m$, Rule 1 (Partition Normalization) dictates that $\sum_{j,k} |(\Phi_i)_{jk}|^2 = \sum_{(j,k) \in \text{partition } i} a_{jk} = \frac{1}{d_3}$. Therefore, $\braket{\psi_i}{\psi_i} = d_3 (\frac{1}{d_3}) = 1$. The states $\ket{\psi_i}$ form an orthonormal basis, confirming that $\rho_3 = \frac{1}{d_3}\mathds{1}_{d_3}$.

Second, tracing out subsystem 3 yields $\rho_{12} = \sum_{i=0}^{d_3-1} \sum_{j,j',k,k'} (\Phi_i)_{jk} (\Phi_i)_{j'k'}^* \ketbra{j}{j'}_1 \otimes \ketbra{k}{k'}_2$. Tracing out subsystem 2 gives the reduced state for subsystem 1 as
\begin{equation*}
\rho_1 = \sum_{j,j'=0}^{d_1-1} \left( \sum_{i=0}^{d_3-1} (\Phi_i \Phi_i^\dagger)_{jj'} \right) \ketbra{j}{j'}_1.
\end{equation*}
For the off-diagonal elements ($j \neq j'$), the coefficient expands to $\sum_i \sum_k (\Phi_i)_{jk} (\Phi_i)_{j'k}^*$. Rule 4 requires that no single partition $i$ contains more than one non-zero entry in any given column $k$. Thus, for any fixed $i$ and $k$, the product $(\Phi_i)_{jk} (\Phi_i)_{j'k}^*$ must be zero, meaning all off-diagonal terms strictly vanish. For the diagonal elements ($j = j'$), the coefficient evaluates to $\sum_i \sum_k |(\Phi_i)_{jk}|^2 = \sum_k a_{jk}$. By Rule 2 (Row Balance), this sum equals $1/d_1$. Therefore, $\rho_1 = \frac{1}{d_1}\mathds{1}_{d_1}$.

Finally, tracing out subsystem 1 from $\rho_{12}$ yields the reduced state for subsystem 2
\begin{equation*}
\rho_2 = \sum_{k,k'=0}^{d_2-1} \left( \sum_{i=0}^{d_3-1} (\Phi_i^\dagger \Phi_i)_{k'k} \right) \ketbra{k}{k'}_2.
\end{equation*}
For the off-diagonal elements ($k \neq k'$), the coefficient expands to $\sum_i \sum_j (\Phi_i)_{jk}^* (\Phi_i)_{jk'}$. Rule 4 requires that no partition $i$ contains more than one non-zero entry in any given row $j$, ensuring these cross-terms are identically zero. For the diagonal elements ($k = k'$), the coefficient is $\sum_i \sum_j |(\Phi_i)_{jk}|^2 = \sum_j a_{jk}$. By Rule 3 (Column Balance), this sum equals $1/d_2$. Therefore, $\rho_2 = \frac{1}{d_2}\mathds{1}_{d_2}$.

Since $\rho_1$, $\rho_2$, and $\rho_3$ are all maximally mixed, the constructed state $\ket{\psi}$ satisfies the condition for being an absolutely maximally entangled state, finishing the proof.
\end{proof}

To concretize this abstraction, let us apply the generalized combinatorial rules to explicitly construct an AME state.
\begin{example}
 Consider the Hilbert space $\mathbb{H} = \mathbb{C}^2 \otimes \mathbb{C}^3 \otimes \mathbb{C}^4$. In the language of our generalization, the local dimensions are $d_1 = 2$ (subsystem 1), $d_2 = 3$ (subsystem 2) and $d_3 = 4$ (subsystem 3). Following our established framework, we require a $d_1 \times d_2$ probability grid. Thus, we construct a single $2 \times 3$ grid, plotting the squared amplitudes. 

We must partition the populated cells into $d_3 = 4$ distinct sets. To do this, we append a subscript label $\ket{i}$ to each populated cell to denote its assignment to one of the four logical states of the third subsystem: $\ket{0}$, $\ket{1}$, $\ket{2}$ and $\ket{3}$.

The solution is given by the following table, where we have dropped the subindices indicating the subsystems.
\begin{equation*}
\renewcommand{\arraystretch}{1.5} 
\begin{array}{c|ccc|c}
    & \ket{0} & \ket{1} & \ket{2} & \sum \\
    \hline
    \ket{0} & \frac{1}{12} & \frac{1}{6} & \frac{1}{4} & \mathbf{1/2} \\
    \ket{1} & \frac{1}{4}& \frac{1}{6}& \frac{1}{12} & \mathbf{1/2} \\
    \hline
     \sum & \mathbf{1/3} & \mathbf{1/3} & \mathbf{1/3} & \mathbf{1}
\end{array}
\end{equation*}
The partitions are 
\begin{itemize}
    \item Partition $\ket{0}$ (yielding $\psi_0$): $a_{02}=\frac{1}{4}$,
    \item Partition $\ket{1}$ (yielding $\psi_1$): $a_{10}=\frac{1}{4}$,
    \item Partition $\ket{2}$ (yielding $\psi_2$): $a_{00}+a_{11}=\frac{1}{12} + \frac{1}{6} =\frac{1}{4}$,
    \item Partition $\ket{3}$ (yielding $\psi_3$): $a_{01}+a_{12}=\frac{1}{6} + \frac{1}{12} = \frac{1}{4}$.
\end{itemize}
Notice how perfectly this structure satisfies the requirements. The row sums ($1/d_1 = 1/2$) and column sums ($1/d_2 = 1/3$) are correct, and the labels confirm the normalization constraint ($1/d_3 = 1/4$), dividing the weights into four distinct sets.

Therefore our states are recovered as $\ket{\psi_i} = \sqrt{d_3} \sum_{j,k} (\Phi_i)_{jk} \ket{j}_1\ket{k}_2$ for each partition $i\in\{0,1,2,3\}$:
\begin{align*}
\ket{\psi_0}&=\sqrt{d_3}\sqrt{a_{02}}\ket{02}=2\frac{1}{\sqrt{4}}\ket{02}=\ket{02},\\
\ket{\psi_1}&=\sqrt{d_3}\sqrt{a_{10}}\ket{10}=2\frac{1}{\sqrt{4}}\ket{10}=\ket{10},\\
\ket{\psi_2}&=\sqrt{d_3}\prt{\sqrt{a_{00}}\ket{00}+\sqrt{a_{11}}\ket{11}}=2\prt{\frac{1}{\sqrt{12}}\ket{00}+\frac{1}{\sqrt{6}}\ket{11}}=\frac{1}{\sqrt{3}}\ket{00}+\sqrt{\frac{2}{3}}\ket{11},\\
\ket{\psi_3}&=\sqrt{d_3}\prt{\sqrt{a_{01}}\ket{01}+\sqrt{a_{12}}\ket{12}}=2\prt{\frac{1}{\sqrt{6}}\ket{01}+\frac{1}{\sqrt{12}}\ket{12}}=\sqrt{\frac{2}{3}}\ket{01}+\frac{1}{\sqrt{3}}\ket{12}.
\end{align*}
And our AME state is
\begin{equation}\label{eq:AME234}
\ket{\psi}_{2,3,4} = \frac{1}{2} \ket{020} + \frac{1}{2} \ket{101} + \frac{1}{2\sqrt{3}}\ket{002} + \frac{1}{\sqrt{6}}\ket{112} + \frac{1}{\sqrt{6}}\ket{013} + \frac{1}{2\sqrt{3}}\ket{123}.
\end{equation}
The complete enumerator computation is given in Table \ref{tab:AME234_enumerators}.
\begin{table}[ht]
    \centering
    \footnotesize
    \setlength{\tabcolsep}{4pt}
    \renewcommand{\arraystretch}{1.5}
    \caption{Complete enumerator profile for the mixed-dimensional AME state in $\mathbb{C}^2 \otimes \mathbb{C}^3 \otimes \mathbb{C}^4$ from Eq. \eqref{eq:AME234}. Subsystems $S \subseteq \{1, 2, 3\}$ are grouped by their corresponding dimension multiset $\mathbf{v}$, where party 1 is the qubit, party 2 is the qutrit and party 3 is the ququart.}
    \label{tab:AME234_enumerators}
    \begin{tabular}{ccccccccc}
        \toprule
        \textbf{Subsystems} $S$ & \textbf{Multiset} $\mathcal{D}(S)=\mathbf{v}$ & $A_{\mathbf{v}}$ & $B_{\mathbf{v}}$ & $S_{\mathbf{v}}$ & $A'_{\mathbf{v}}$ & $B'_{\mathbf{v}}$ & $\mathcal{A}'_S$ & $\mathcal{B}'_S$ \\
        \midrule
        $\emptyset$ & $\emptyset$ & $1$ & $1$ & $0$ & $1$ & $1$ & $1$ & $1$ \\
        $\{1\}$ & $\{2\}$ & $0$ & $0$ & $\frac{11}{6}$ & $\frac{1}{2}$ & $\frac{1}{2}$ & $\frac{1}{2}$ & $\frac{1}{2}$ \\
        $\{2\}$ & $\{3\}$ & $0$ & $0$ & $\frac{7}{6}$ & $\frac{1}{3}$ & $\frac{1}{3}$ & $\frac{1}{3}$ & $\frac{1}{3}$ \\
        $\{3\}$ & $\{4\}$ & $0$ & $0$ & $\frac{5}{6}$ & $\frac{1}{4}$ & $\frac{1}{4}$ & $\frac{1}{4}$ & $\frac{1}{4}$ \\
        $\{1, 2\}$ & $\{2, 3\}$ & $\frac{1}{2}$ & $\frac{1}{2}$ & $0$ & $\frac{1}{4}$ & $\frac{1}{4}$ & $\frac{1}{4}$ & $\frac{1}{4}$ \\
        $\{1, 3\}$ & $\{2, 4\}$ & $\frac{5}{3}$ & $\frac{5}{3}$ & $0$ & $\frac{1}{3}$ & $\frac{1}{3}$ & $\frac{1}{3}$ & $\frac{1}{3}$ \\
        $\{2, 3\}$ & $\{3, 4\}$ & $5$ & $5$ & $0$ & $\frac{1}{2}$ & $\frac{1}{2}$ & $\frac{1}{2}$ & $\frac{1}{2}$ \\
        $\{1, 2, 3\}$ & $\{2, 3, 4\}$ & $\frac{95}{6}$ & $\frac{95}{6}$ & $\frac{25}{6}$ & $1$ & $1$ & $1$ & $1$ \\
        \bottomrule
    \end{tabular}
\end{table}
\end{example}
We can also have partially symmetric AME states in $\mathbb{C}^2\otimes\mathbb{C}^2\otimes\mathbb{C}^3$ and $\mathbb{C}^2\otimes\mathbb{C}^3\otimes\mathbb{C}^3$.
\begin{example}
The combinatorial grid and the complete enumerator profile for an AME state in $\mathbb{C}^2\otimes\mathbb{C}^2\otimes\mathbb{C}^3$ are presented side-by-side in Table \ref{tab:AME223}. This results in the AME state
\begin{equation}\label{eq:AME223}
|\psi\rangle_{2,2,3} = \frac{1}{\sqrt{3}} |010\rangle + \frac{1}{\sqrt{3}} |101\rangle + \frac{1}{\sqrt{6}} |002\rangle + \frac{1}{\sqrt{6}} |112\rangle.
\end{equation}

\end{example}
\begin{example}
The combinatorial grid and the complete enumerator profile for an AME state in $\mathbb{C}^2\otimes\mathbb{C}^3\otimes\mathbb{C}^3$ are presented side-by-side in Table \ref{tab:AME233}. This results in the AME state
\begin{equation}\label{eq:AME233}
|\psi\rangle_{2,3,3}= \frac{1}{\sqrt{6}} |020\rangle + \frac{1}{\sqrt{6}} |100\rangle + \frac{1}{\sqrt{6}} |011\rangle + \frac{1}{\sqrt{6}} |121\rangle + \frac{1}{\sqrt{6}} |002\rangle + \frac{1}{\sqrt{6}} |112\rangle.
\end{equation}
\end{example}

\section{Conclusions}
In this work, we have established a comprehensive mathematical framework for the characterization and bounding of quantum error-correcting codes (QECC) and absolutely maximally entangled (AME) states in mixed-dimensional Hilbert spaces. By introducing the concept of dimension multisets, we moved beyond traditional scalar metrics of error weight, allowing for a fine-grained tracking of the physical composition of error supports across diverse quantum architectures.

Our central result, the mixed-dimensional quantum MacWilliams identity, provides the necessary algebraic link between multivariate Shor-Laflamme and unitary weight enumerators. We noticed this approach was more efficient and required less terms than previous ones. This foundation enabled the derivation of the mixed-dimensional shadow identity and the formulation of a linear program capable of systematically evaluating the feasibility of heterogeneous code parameters. 

The multiset approach proved instrumental not only in generalizing weight enumerator theory but also in establishing the quantum Hamming bound for non-degenerate mixed-dimensional codes, effectively generalizing the geometric volume-packing constraints of homogeneous systems. Furthermore, a significant physical insight emerged from our analysis of the mixed-dimensional quantum Singleton bound. We demonstrated that for pure mixed-dimensional codes, the parameter space is more severely restricted than previously assumed, resulting in a tighter bound with no direct homogeneous analogue. 

This leads to the consequence that in mixed-dimensional systems where the general Singleton bound strictly exceeds the pure-code limit, any valid quantum code saturating this bound is mathematically forced to be impure. This is a departure from the behavior of standard homogeneous QMDS codes, which are forced to be pure. Consequently, it raises an important open question for future study: what are the implications of these impurity constraints for QMDS codes in mixed-dimensional Hilbert spaces?

Finally, we applied this multiset machinery to the existence landscape of AME states. By leveraging generalized shadow inequalities, we ruled out several families of qubit-qutrit and qutrit-ququart states. To complement these non-existence results, we introduced a combinatorial grid methodology for the explicit construction of tripartite AME states, providing a bridge between quantum algebraic constraints and intuitive combinatorial design.

As quantum hardware continues to evolve toward heterogeneous architectures, integrating distinct physical substrates for logic, memory and communication, the multiset framework presented here provides the essential toolkit for designing robust, optimized error-correction protocols. Future research may extend these combinatorial grid methods to multipartite systems and explore the fault-tolerant properties of the impure codes identified by our framework.
\begin{acknowledgments}
We thank Felix Huber for providing Reference \cite{Nebe2006} and the anonymous referee for their comments. Simeon Ball acknowledges the support of the Spanish Ministry of Science, Innovation and Universities grant PID2023-147202NB-I00.
\end{acknowledgments}

\newpage
\appendix
\newpage
\appendix
\section{AME states from qutrits and ququarts}\label{sec:qutrits_ququarts}
We give an insight of the actual existence landscape of AME states formed by qutrits and ququarts. Our grid construction from Theorem \ref{thm:grid_tripartiteAME} yields AME states in $\mathbb{C}^3\otimes\mathbb{C}^3\otimes\mathbb{C}^4$ and $\mathbb{C}^3\otimes\mathbb{C}^4\otimes\mathbb{C}^4$, which are
\begin{align}
    |\psi\rangle_{3,3,4} &= \frac{1}{2\sqrt{3}} |000\rangle + \sqrt{\frac{1}{6}} |120\rangle + \frac{1}{2}|211\rangle + \frac{1}{2\sqrt{3}} |012\rangle + \sqrt{\frac{1}{6}} |102\rangle + \sqrt{\frac{1}{6}} |023\rangle + \frac{1}{2\sqrt{3}} |203\rangle,\label{eq:AME334}\\ 
    |\psi\rangle_{3,4,4} &= \frac{1}{2\sqrt{3}} |000\rangle + \sqrt{\frac{1}{6}} |120\rangle + \frac{1}{2}|211\rangle + \sqrt{\frac{1}{6}} |102\rangle + \frac{1}{2\sqrt{3}} |222\rangle + \frac{1}{2}|033\rangle.\label{eq:AME344}
\end{align}
The corresponding grids and enumerator profile can be found in Table \ref{tab:AME334} and Table \ref{tab:AME344}, respectively.

From Fig. \ref{fig:ame_heatmap_34} we can see that AME states formed by qutrits and ququarts for 8 parties can only exist in $\prt{\mathbb{C}^4}^{\otimes 8}$ or $\mathbb{C}^3\otimes \prt{\mathbb{C}^4}^{\otimes 7}$. According to \cite{TableAME}, the former has not been found yet. Also, no AME states exist for 12 parties formed by qutrits and ququarts.
\begin{figure}[ht]
    \centering
    \includegraphics[width=0.81\textwidth]{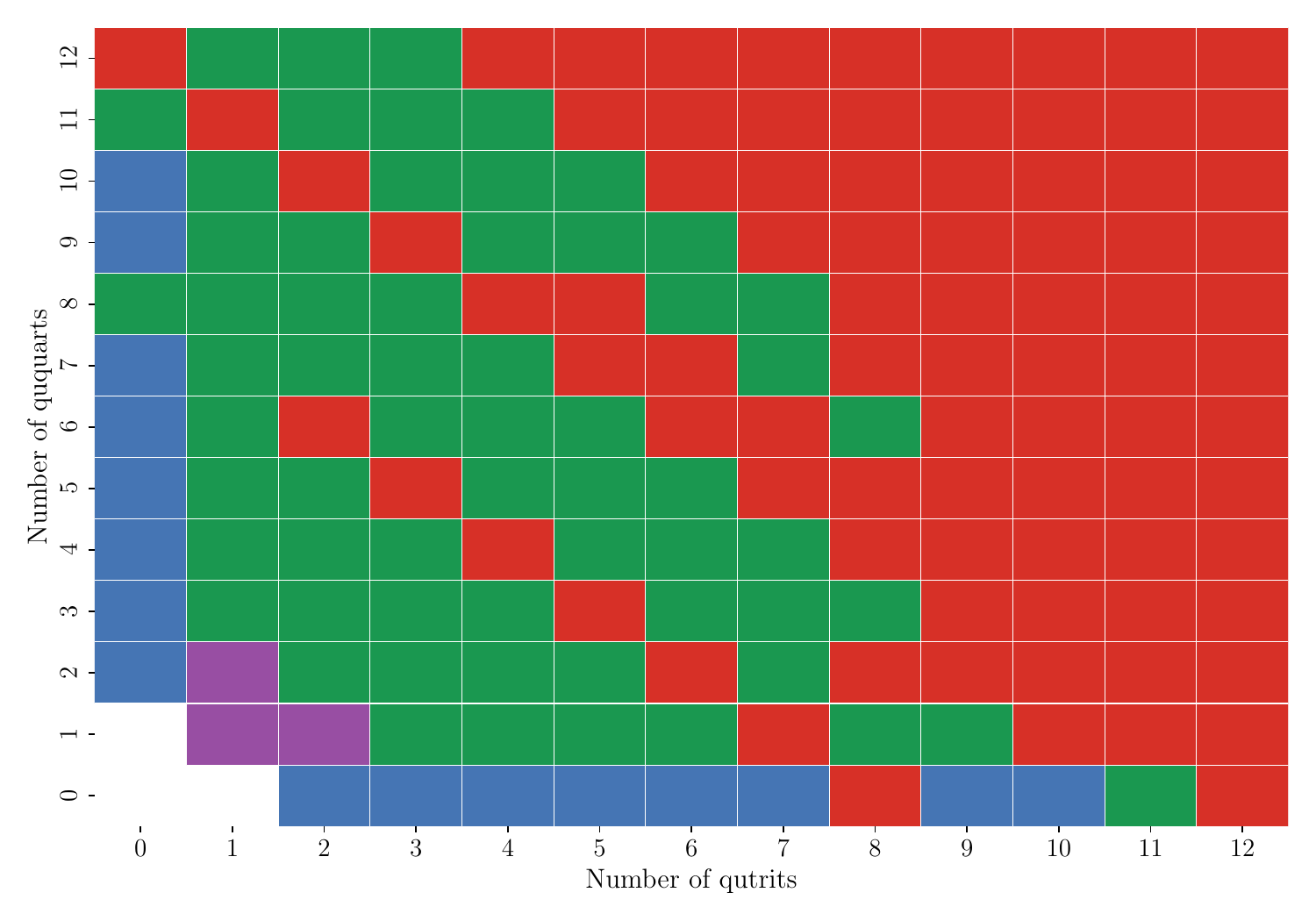}
    \caption{Existence of mixed-dimensional AME states based on the positivity of shadow multiset enumerators for qutrits and ququarts. Green squares represent potentially existing states, while red squares indicate states strictly forbidden by negative shadow enumerators. Blue squares are already existing states in homogeneous systems \cite{TableAME, huber_quantum_code_bounds}. Purple squares represent existing AME states in heterogeneous scenarios given in Eqs. \eqref{eq:AME334} and \eqref{eq:AME344}. The AME formed by a single qutrit and single ququart is trivial.}
    \label{fig:ame_heatmap_34}
\end{figure}
\newpage
\section{Tripartite AME grids and enumerator profiles}
We provide the combinatorial grids and the complete weight enumerator profiles for the tripartite AME states constructed in Section \ref{sec:tripartiteAME}.
\begin{table}[ht]
    \centering
    \caption{\textbf{(a)} Combinatorial grid for the AME state in $\mathbb{C}^2 \otimes \mathbb{C}^2 \otimes \mathbb{C}^3$ (Eq. \eqref{eq:AME223}) with the partitions $\{a_{01}\},\{a_{10}\}$ and $\{a_{00},a_{11}\}$. \textbf{(b)} Complete enumerator profile. Subsystems $S \subseteq \{1, 2, 3\}$ are grouped by their corresponding dimension multiset $\mathbf{v}$, where parties 1 and 2 are qubits, and party 3 is a qutrit.}
    \label{tab:AME223}
    \begin{minipage}[c]{0.3\textwidth}
        \centering
        \textbf{(a)}\par\vspace{1ex}
        \renewcommand{\arraystretch}{1.5} 
        $ \begin{array}{c|cc|c}
            & \ket{0} & \ket{1} &  \sum \\
            \hline
            \ket{0} & \frac{1}{6} & \frac{1}{3}  & \mathbf{1/2} \\
            \ket{1} & \frac{1}{3} & \frac{1}{6} & \mathbf{1/2} \\
            \hline
             \sum & \mathbf{1/2} & \mathbf{1/2}  & \mathbf{1}
        \end{array} $
    \end{minipage}\hfill
    \begin{minipage}[c]{0.68\textwidth}
        \centering
        \textbf{(b)}\par\vspace{1ex}
        \footnotesize
        \setlength{\tabcolsep}{3pt}
        \renewcommand{\arraystretch}{1.5}
        \begin{tabular}{ccccccccc}
            \toprule
            \textbf{Subsystems} $S$ & \textbf{Multiset} $\mathbf{v}$ & $A_{\mathbf{v}}$ & $B_{\mathbf{v}}$ & $S_{\mathbf{v}}$ & $A'_{\mathbf{v}}$ & $B'_{\mathbf{v}}$ & $\mathcal{A}'_S$ & $\mathcal{B}'_S$ \\
            \midrule
            $\emptyset$ & $\emptyset$ & $1$ & $1$ & $0$ & $1$ & $1$ & $1$ & $1$ \\
            $\{1\}$, $\{2\}$ & $\{2\}$ & $0$ & $0$ & $\frac{8}{3}$ & $1$ & $1$ & $\frac{1}{2}$ & $\frac{1}{2}$ \\
            $\{3\}$ & $\{3\}$ & $0$ & $0$ & $\frac{2}{3}$ & $\frac{1}{3}$ & $\frac{1}{3}$ & $\frac{1}{3}$ & $\frac{1}{3}$ \\
            $\{1, 2\}$ & $\{2, 2\}$ & $\frac{1}{3}$ & $\frac{1}{3}$ & $0$ & $\frac{1}{3}$ & $\frac{1}{3}$ & $\frac{1}{3}$ & $\frac{1}{3}$ \\
            $\{1, 3\}$, $\{2, 3\}$ & $\{2, 3\}$ & $4$ & $4$ & $0$ & $1$ & $1$ & $\frac{1}{2}$ & $\frac{1}{2}$ \\
            $\{1, 2, 3\}$ & $\{2, 2, 3\}$ & $\frac{20}{3}$ & $\frac{20}{3}$ & $\frac{14}{3}$ & $1$ & $1$ & $1$ & $1$ \\
            \bottomrule
        \end{tabular}
    \end{minipage}
    
\end{table}
\begin{table}[ht]
    \centering
    \caption{\textbf{(a)} Combinatorial grid for the AME state in $\mathbb{C}^2 \otimes \mathbb{C}^3 \otimes \mathbb{C}^3$ (Eq. \eqref{eq:AME233}) with the partitions $\{a_{02},a_{10}\},\{a_{01},a_{12}\},$ and $\{a_{00},a_{11}\}$. \textbf{(b)} Complete enumerator profile. Subsystems $S \subseteq \{1, 2, 3\}$ are grouped by their dimension multiset $\mathbf{v}$, where party 1 is the qubit and parties 2 and 3 are qutrits.}
    \label{tab:AME233}
    \begin{minipage}[c]{0.3\textwidth}
        \centering
        \textbf{(a)}\par\vspace{1ex}
        \renewcommand{\arraystretch}{1.5} 
        $ \begin{array}{c|ccc|c}
            & \ket{0} & \ket{1} & \ket{2} & \sum \\
            \hline
            \ket{0} & \frac{1}{6}& \frac{1}{6} & \frac{1}{6} & \mathbf{1/2} \\
            \ket{1} & \frac{1}{6}& \frac{1}{6} & \frac{1}{6} & \mathbf{1/2} \\
            \hline
             \sum & \mathbf{1/3} & \mathbf{1/3} & \mathbf{1/3} & \mathbf{1}
        \end{array} $
    \end{minipage}\hfill
    \begin{minipage}[c]{0.68\textwidth}
        \centering
        \textbf{(b)}\par\vspace{1ex}
        \footnotesize
        \setlength{\tabcolsep}{3pt} 
        \renewcommand{\arraystretch}{1.5}
        \begin{tabular}{ccccccccc}
            \toprule
            \textbf{Subsystems} $S$ & \textbf{Multiset} $\mathbf{v}$ & $A_{\mathbf{v}}$ & $B_{\mathbf{v}}$ & $S_{\mathbf{v}}$ & $A'_{\mathbf{v}}$ & $B'_{\mathbf{v}}$ & $\mathcal{A}'_S$ & $\mathcal{B}'_S$ \\
            \midrule
            $\emptyset$ & $\emptyset$ & $1$ & $1$ & $0$ & $1$ & $1$ & $1$ & $1$ \\
            $\{1\}$ & $\{2\}$ & $0$ & $0$ & $\frac{5}{3}$ & $\frac{1}{2}$ & $\frac{1}{2}$ & $\frac{1}{2}$ & $\frac{1}{2}$ \\
            $\{2\}$, $\{3\}$ & $\{3\}$ & $0$ & $0$ & $2$ & $\frac{2}{3}$ & $\frac{2}{3}$ & $\frac{1}{3}$ & $\frac{1}{3}$ \\
            $\{1, 2\}$, $\{1, 3\}$ & $\{2, 3\}$ & $2$ & $2$ & $0$ & $\frac{2}{3}$ & $\frac{2}{3}$ & $\frac{1}{3}$ & $\frac{1}{3}$ \\
            $\{2, 3\}$ & $\{3, 3\}$ & $\frac{7}{2}$ & $\frac{7}{2}$ & $0$ & $\frac{1}{2}$ & $\frac{1}{2}$ & $\frac{1}{2}$ & $\frac{1}{2}$ \\
            $\{1, 2, 3\}$ & $\{2, 3, 3\}$ & $\frac{23}{2}$ & $\frac{23}{2}$ & $\frac{13}{3}$ & $1$ & $1$ & $1$ & $1$ \\
            \bottomrule
        \end{tabular}
    \end{minipage}
    
\end{table}
\begin{table}[ht]
    \centering
    \caption{\textbf{(a)} Combinatorial grid for the AME state in $\mathbb{C}^3 \otimes \mathbb{C}^3 \otimes \mathbb{C}^4$ (Eq. \eqref{eq:AME334}) with the partitions $\{a_{00},a_{12}\}, \{a_{21}\}, \{a_{01},a_{10}\},$ and $\{a_{02},a_{20}\}$. \textbf{(b)} Complete enumerator profile. Subsystems $S \subseteq \{1, 2, 3\}$ are grouped by their dimension multiset $\mathbf{v}$, where parties 1 and 2 are qutrits and party 3 is the ququart.}
    \label{tab:AME334}
    \begin{minipage}[c]{0.3\textwidth}
        \centering
        \textbf{(a)}\par\vspace{1ex}
        \renewcommand{\arraystretch}{1.5} 
        $ \begin{array}{c|ccc|c}
            & \ket{0} & \ket{1} & \ket{2} & \sum \\
            \hline
            \ket{0} & \frac{1}{12} & \frac{1}{12} & \frac{1}{6} & \mathbf{1/3} \\
            \ket{1} & \frac{1}{6} & 0 & \frac{1}{6}& \mathbf{1/3} \\
            \ket{2} & \frac{1}{12} & \frac{1}{4} & 0 & \mathbf{1/3} \\
            \hline
             \sum & \mathbf{1/3} & \mathbf{1/3} & \mathbf{1/3} & \mathbf{1}
        \end{array} $
    \end{minipage}\hfill
    \begin{minipage}[c]{0.68\textwidth}
        \centering
        \textbf{(b)}\par\vspace{1ex}
        \footnotesize
        \setlength{\tabcolsep}{3pt} 
        \renewcommand{\arraystretch}{1.5}
        \begin{tabular}{ccccccccc}
            \toprule
            \textbf{Subsystems} $S$ & \textbf{Multiset} $\mathbf{v}$ & $A_{\mathbf{v}}$ & $B_{\mathbf{v}}$ & $S_{\mathbf{v}}$ & $A'_{\mathbf{v}}$ & $B'_{\mathbf{v}}$ & $\mathcal{A}'_S$ & $\mathcal{B}'_S$ \\
            \midrule
            $\emptyset$ & $\emptyset$ & $1$ & $1$ & $0$ & $1$ & $1$ & $1$ & $1$ \\
            $\{1\}, \{2\}$ & $\{3\}$ & $0$ & $0$ & $3$ & $\frac{2}{3}$ & $\frac{2}{3}$ & $\frac{1}{3}$ & $\frac{1}{3}$ \\
            $\{3\}$ & $\{4\}$ & $0$ & $0$ & $\frac{7}{6}$ & $\frac{1}{4}$ & $\frac{1}{4}$ & $\frac{1}{4}$ & $\frac{1}{4}$ \\
            $\{1, 2\}$ & $\{3, 3\}$ & $\frac{5}{4}$ & $\frac{5}{4}$ & $0$ & $\frac{1}{4}$ & $\frac{1}{4}$ & $\frac{1}{4}$ & $\frac{1}{4}$ \\
            $\{1, 3\}, \{2, 3\}$ & $\{3, 4\}$ & $6$ & $6$ & $0$ & $\frac{2}{3}$ & $\frac{2}{3}$ & $\frac{1}{3}$ & $\frac{1}{3}$ \\
            $\{1, 2, 3\}$ & $\{3, 3, 4\}$ & $\frac{111}{4}$ & $\frac{111}{4}$ & $\frac{23}{6}$ & $1$ & $1$ & $1$ & $1$ \\
            \bottomrule
        \end{tabular}
    \end{minipage}
    
\end{table}
\begin{table}[ht]
    \centering
    \caption{\textbf{(a)} Combinatorial grid for the AME state in $\mathbb{C}^3 \otimes \mathbb{C}^4 \otimes \mathbb{C}^4$ (Eq. \eqref{eq:AME344}) with the partitions $\{a_{00},a_{12}\}, \{a_{21}\}, \{a_{10},a_{22}\},$ and $\{a_{03}\}$. \textbf{(b)} Complete enumerator profile. Subsystems $S \subseteq \{1, 2, 3\}$ are grouped by their dimension multiset $\mathbf{v}$, where party 1 is a qutrit and parties 2 and 3 are ququarts.}
    \label{tab:AME344}
    \begin{minipage}[c]{0.35\textwidth}
        \centering
        \textbf{(a)}\par\vspace{1ex}
        \renewcommand{\arraystretch}{1.5} 
        $ \begin{array}{c|cccc|c}
            & \ket{0} & \ket{1} & \ket{2} & \ket{3} & \sum \\
            \hline
            \ket{0} & \frac{1}{12} & 0 & 0 & \frac{1}{4} & \mathbf{1/3} \\
            \ket{1} & \frac{1}{6} & 0 & \frac{1}{6} & 0 & \mathbf{1/3} \\
            \ket{2} & 0 & \frac{1}{4} & \frac{1}{12} & 0 & \mathbf{1/3} \\
            \hline
             \sum & \mathbf{1/4} & \mathbf{1/4} & \mathbf{1/4} & \mathbf{1/4} & \mathbf{1}
        \end{array} $
    \end{minipage}\hfill
    \begin{minipage}[c]{0.63\textwidth}
        \centering
        \textbf{(b)}\par\vspace{1ex}
        \footnotesize
        \setlength{\tabcolsep}{3pt} 
        \renewcommand{\arraystretch}{1.5}
        \begin{tabular}{ccccccccc}
            \toprule
            \textbf{Subsystems} $S$ & \textbf{Multiset} $\mathbf{v}$ & $A_{\mathbf{v}}$ & $B_{\mathbf{v}}$ & $S_{\mathbf{v}}$ & $A'_{\mathbf{v}}$ & $B'_{\mathbf{v}}$ & $\mathcal{A}'_S$ & $\mathcal{B}'_S$ \\
            \midrule
            $\emptyset$ & $\emptyset$ & $1$ & $1$ & $0$ & $1$ & $1$ & $1$ & $1$ \\
            $\{1\}$ & $\{3\}$ & $0$ & $0$ & $\frac{5}{3}$ & $\frac{1}{3}$ & $\frac{1}{3}$ & $\frac{1}{3}$ & $\frac{1}{3}$ \\
            $\{2\}, \{3\}$ & $\{4\}$ & $0$ & $0$ & $\frac{8}{3}$ & $\frac{1}{2}$ & $\frac{1}{2}$ & $\frac{1}{4}$ & $\frac{1}{4}$ \\
            $\{1, 2\}, \{1, 3\}$ & $\{3, 4\}$ & $4$ & $4$ & $0$ & $\frac{1}{2}$ & $\frac{1}{2}$ & $\frac{1}{4}$ & $\frac{1}{4}$ \\
            $\{2, 3\}$ & $\{4, 4\}$ & $\frac{13}{3}$ & $\frac{13}{3}$ & $0$ & $\frac{1}{3}$ & $\frac{1}{3}$ & $\frac{1}{3}$ & $\frac{1}{3}$ \\
            $\{1, 2, 3\}$ & $\{3, 4, 4\}$ & $\frac{116}{3}$ & $\frac{116}{3}$ & $\frac{11}{3}$ & $1$ & $1$ & $1$ & $1$ \\
            \bottomrule
        \end{tabular}
    \end{minipage}
\end{table}
\newpage

\end{document}